\newcommand{\bK}{\boldsymbol{K}}
\newcommand{\bsigma}{\boldsymbol{\sigma}}
\newcommand{\bi}{\boldsymbol{i}}
\newcommand{\bj}{\boldsymbol{j}}
\newcommand{\bx}{\boldsymbol{x}}
\newcommand{\by}{\boldsymbol{y}}
\newcommand{\bvarphi}{\boldsymbol{\varphi}}
\newtheorem{theorem}{Theorem}[section]
\newtheorem{cor}[theorem]{Corollary}
\newtheorem{prop}[theorem]{Proposition}
\newtheorem{lem}[theorem]{Lemma}
\newtheorem*{BW}{Berger-Wang's formula}
\newtheorem*{E}{Elsner's reduction lemma}
\theoremstyle{definition}
\newtheorem{defn}[theorem]{Definition}
\newtheorem*{OP}{Open problem}
\newtheorem{que}{Question}
\newtheorem{example}[theorem]{Example}
\newtheorem{remark}[theorem]{Remark}
\numberwithin{equation}{section}
\newcommand{\EQ}{\begin{equation}\begin{array}{lllllllll}}
\newcommand{\EE}{\end{array}\end{equation}}
\newcommand{\MT}{\left[ \begin{array}{ccccccccc}}
\newcommand{\EM}{\end{array}\right]}
\newcommand{\eq}{\begin{equation}\begin{array}{lclllllllllllllll}}
\newcommand{\ee}{\end{array}\end{equation}}
\newcommand{\bmt}{\left[ \begin{array}{ccccccccc}}
\newcommand{\emt}{\end{array}\right]}
\newcommand{\bea}{\begin{eqnarray}}
\newcommand{\eea}{\end{eqnarray}}
\newcommand{\bean}{\begin{eqnarray*}}
\newcommand{\eean}{\end{eqnarray*}}
\journal{XXX}
\begin{document}

\begin{frontmatter}

%% Title, authors and addresses

%% use the tnoteref command within \title for footnotes;
%% use the tnotetext command for the associated footnote;
%% use the fnref command within \author or \address for footnotes;
%% use the fntext command for the associated footnote;
%% use the corref command within \author for corresponding author footnotes;
%% use the cortext command for the associated footnote;
%% use the ead command for the email address,
%% and the form \ead[url] for the home page:
%%
\title{Chaotic Characteristic of Discrete-time Linear Inclusion Dynamical Systems\tnoteref{label1}}

\tnotetext[label1]{Dai was supported by NSF of
China (Grant No.~11071112); T.~Huang was supported by National Priority Research Project NPRP 4-451-2-168 funded by Qatar Research Fund; Y.~Huang was supported partly by NSF of China (Grant No.~11071263) and the NSF of Guangdong Province; and Xiao in part by NSF 0605181 and 1021203 of the United States.}%

%% use optional labels to link authors explicitly to addresses:
 \author[label2]{Xiongping Dai}
 \address[label2]{Department of Mathematics, Nanjing University, Nanjing 210093, People's Republic of China}
 \ead{xpdai@nju.edu.cn}
%% \ead[url]{home page}
%% \fntext[label2]{}
%% \cortext[cor1]{}
%% \address{Address\fnref{label3}}
%% \fntext[label3]{}

 \author[label3]{Tingwen Huang}
 \address[label3]{Texas A$\&$M University at Qatar, c/o Qatar Foundation, P.O. Box 5825, Doha, Qatar}
 \ead{tingwen.huang@qatar.tamu.edu}

 \author[label4]{Yu Huang}
 \address[label4]{Department of Mathematics, Zhongshan (Sun Yat-Sen) University, Guangzhou 510275, People's Republic of China}
 \ead{stshyu@mail.sysu.edu.cn}

 \author[label5]{Mingqing Xiao}
\address[label5]{Department of Mathematics, Southern Illinois University, Carbondale, IL 62901-4408, USA}
 \ead{mxiao@math.siu.edu}
%%%%%%%%%%%%%%%%%%%%%%%%%%%%%%%%%%%%%%

\begin{abstract}
Given $K$ real $d$-by-$d$ nonsingular matrices $S_1,\dotsc,S_K$, by extending the well-known Li-Yorke chaotic description of a deterministic nonlinear dynamical system to a discrete-time linear inclusion dynamical system: $x_n\in\left\{S_kx_{n-1}\right\}_{1\le k\le K}$ with $x_0\in\mathbb{R}^d$ and $n\ge1$,
we study the chaotic characteristic of the state trajectory $(x_n(x_0,\sigma))_{n\ge1}$, governed by a switching law $\sigma\colon\mathbb{N}\rightarrow\{1,\dotsc,K\}$, for any initial states $x_0\in\mathbb{R}^d$. Two sufficient conditions are given so that for a ``large'' subset of the space
of all possible switching laws $\sigma$, we have the sharp infinite oscillation as follows:
\begin{equation*}
\liminf\limits_{n\to+\infty}\|x_n(x_0,\sigma)\|=0\quad\textrm{and}\quad
\limsup\limits_{n\to+\infty}\|x_n(x_0,\sigma)\|=+\infty\quad
\forall x_0\in\mathbb{R}^d\setminus\{0\}.
\end{equation*}
This implies that there coexists at least one positive, one zero and one negative Lyapunov exponents and that the trajectories $(x_n(x_0,\sigma))_{n\ge1}$ are extremely sensitive to the initial states $x_0\in\mathbb{R}^d$.
We also show that a periodically stable linear inclusion system, which may be unbounded, does not have any
such chaotic behavior.
\end{abstract}
%%%%%%%%%%%%%%%%%%%%%%%%%%%
\begin{keyword}
%% keywords here, in the form: keyword \sep keyword
Linear inclusion system\sep fiber-chaotic switching law\sep discontinuity of Lyapunov exponents\sep periodically stable inclusion system.

\medskip
%% MSC codes here, in the form: \MSC code \sep code
%% or
\MSC[2010] 93C30\sep 37A30\sep 15B52
\end{keyword}
\end{frontmatter}

%%%%%%%%%%%%%%%%%%%%%%%%%%%%%%%%%%%%%%%%%%%%%%%%%%%%%%%%%%%%%%%%
%%%%%%%%%%%%%%%%%%%%%%%%%%%%%%%%%%%%%%%%%%%%%%%%%%%%%%%%%%%%%%%%
\section{Introduction}\label{sec1}%%%
Chaos is not only an interesting but also an important subject in the theory of dynamical systems. It is well known that many nonlinear systems can exhibit ``chaotic'' behavior that is sensitive to initial conditions. Intuitively small perturbations in initial conditions (such as those due to rounding errors in numerical computation) yield widely diverging outcomes, rendering long-term prediction impossible in general. Even if a dynamical system is deterministic, i.e., their future behavior is fully determined by their initial conditions with no random elements involved, the long-term prediction of its chaotic trajectories is still impossible.

Chaotic behavior of nonlinear or piecewise-linear deterministic systems has been extensively studied using mathematical theory since Li-Yorke~\cite{LY}; see, e.g., \cite{Dev}. Dynamical systems that exhibit chaos with certain control actions also are not new to the control community. For example, in \cite{UH85, UH87} it has been demonstrated that quantization can induce chaotic behavior in digital feedback control systems; in nonlinear/piecewise-linear and adaptive control settings, chaotic behavior has also been demonstrated via continuous nonlinear/piecewise-linear feedback control (cf., e.g., \cite{BBW, RAC, LZCY}).
In many situations, identifying the presence of chaos can be a great advantage for feedback control design to stabilize those chaotic trajectories; see, for example, \cite{OGY90, AGOY, BGLMM} and references therein.

There is no doubt about that inclusion/switched systems provide a convenient method for modeling a wide variety of complex dynamical systems. Unfortunately, while the modeling paradigm itself is quite straightforward, the analysis is highly nontrivial; this is because even simple inclusion/switched dynamical systems may exhibit very complex dynamics such as chaotic behavior. In \cite{CSR}, Chase \textit{et al.} presented an example to illustrate how chaotic behavior can arise when switching between low-dimensional linear vector fields by choosing a piecewise-linear expanding map on an interval as the transition function of switching.
In \cite{LTZC}, for a continuous-time switched system that consists of two $3$-dimensional inhomogeneous linear vector fields one of which is of the form
\begin{equation*}
\dot{x}=\left[\begin{matrix}a&b&0\\-b&a&0\\0&0&c\end{matrix}\right]x+\left[\begin{matrix}0\\0\\-d\end{matrix}\right]\quad \textrm{with an expanding equilibrium }x_1^*=\left[\begin{matrix}0\\0\\d/c\end{matrix}\right]
\end{equation*}
and the other of which is of the form
\begin{equation*}
\dot{x}=\left[\begin{matrix}f&0&0\\0&g&h\\0&-h&g\end{matrix}\right]x\quad \textrm{ with a contractive equilibrium }x_2^*=\left[\begin{matrix}0\\0\\0\end{matrix}\right],
\end{equation*}
Liu \textit{et al.} constructed a particular switching rule and gave a numerical simulation to illustrate the chaotic dynamical behavior near the contractive equilibrium $x_2^*$. A similar construction also leads to chaotic behavior near the equilibrium $0$ for continuous-time switched linear system
\begin{equation*}
\dot{x}(t)=A_{\sigma(t)}x(t)+B_{\sigma(t)}u(t),
\end{equation*}
with certain feedback controller $u(t)=F(t,x(t))$ as was shown in \cite{XZZ}.

The above mentioned results are mainly based on the observation of numerical simulations. In current literature, sufficient condition that leads to chaotic dynamics for linear inclusion/switched system as well as theoretical justification remain unsolved. These motivate us to study the following basic question for linear inclusion/switched systems without feedback controls, which is raised by Shorten \textit{et al.}~\cite{SWMWK}:

\begin{OP}[{\cite[Section~1.\,(iv)]{SWMWK}}]
Whether it is possible to determine if a inclusion/switched system can exhibit chaotic behavior for a given set of constituent linear subsystems.
\end{OP}

For inclusion/switched dynamical systems, it is evident that the chaotic behaviors
depend not only on the constituent subdynamics but also on the rule which
orchestrates the switching.

In this paper, we shall employ the idea of Li-Yorke chaos to give a mathematical definition of chaos for System~(\ref{eq1.1}) and then consider when a discrete-time linear
inclusion dynamical system will exhibit the irregular/chaotic dynamical behavior. Our main results---Theorems~\ref{thm1.2} and \ref{thm1.3} below and Theorem~\ref{thm5.3} in Section~\ref{sec5} partially solve the above Open problem of Shorten \textit{et al.}~\cite{SWMWK}.

The obtained results are useful for us to construct examples of discontinuity of Lyapunov exponents (cf.~Remark~\ref{rem1.6} below). This is itself an interesting topic in smooth ergodic theory.

\subsection{Basic mathematical concepts}\label{sec1.1}%%%
Throughout this paper, let $\bK=\{1,\dotsc,K\}$ be endowed with the discrete topology. Let $S_1,\dotsc,S_K$ be $K$ real $d$-by-$d$ matrices, where $K\ge2$ and $d\ge1$ are two integers.
This then induces a discrete-time linear
inclusion dynamical system described by
\begin{equation}\label{eq1.1}
x_n\in\left\{S_kx_{n-1}\right\}_{k\in\bK},\quad x_0\in\mathbb{R}^d\textrm{ and }n\ge1,
\end{equation}
where $x_0$ is called an initial state of this dynamical system. Write
\begin{equation}\label{eq1.2}
\varSigma_{\bK}^+=\{\sigma\colon\mathbb{N}\rightarrow\bK\},\quad \textrm{where }\mathbb{N}=\{1,2,\dotsc\},
\end{equation}
which is as usual equipped with the product topology identifying with $\bK^\mathbb{N}$. Then $\varSigma_{\bK}^+$ is a compact topological space and its topology is compatible with the metric given by
\begin{equation*}
d(\sigma,\sigma^\prime)=\sum_{n=1}^{+\infty}\frac{|\sigma(n)-\sigma^\prime(n)|}{K^{n}}\quad\forall \sigma,\sigma^\prime\in\varSigma_{\bK}^+.
\end{equation*}
Under this product topology, the classical one-sided Markovian shift map
\begin{equation*}%%%
\theta^+\colon\varSigma_{\bK}^+\rightarrow\varSigma_{\bK}^+\quad\textrm{by }\sigma=(\sigma(n))_{n\ge1}\mapsto\theta^+(\sigma)=(\sigma(n+1))_{n\ge1}
\end{equation*}
is a continuous, surjective, and non-injective transformation. 

Let $\|\cdot\|$ be the Euclidean vector norm on $\mathbb{R}^d$ such that $\|x\|=\sqrt{x^\mathrm{T}\cdot x}$ for each $x\in\mathbb{R}^d$.

Then for any $\sigma\in\varSigma_{\bK}^+$ and to any initial state $x_0\in\mathbb{R}^d$, the corresponding state trajectory $(x_n(x_0,\sigma))_{n\ge1}$ of System (\ref{eq1.1}), governed by the switching law $\sigma$, is defined as
\begin{equation*}
x_n(x_0,\sigma)=S_{\sigma(n)}x_{n-1}=S_{\sigma(n)}\dotsm S_{\sigma(1)}x_0,\quad \forall n\ge1.
\end{equation*}
Usually this state trajectory $(x_n(x_0,\sigma))_{n\ge1}$ is said to to be \textit{asymptotically divergent} if and only if
\begin{equation*}
\lim_{n\to\infty}\|x_n(x_0,\sigma)\|=\infty;
\end{equation*}
it is said to be \textit{asymptotically stable} if and only if
\begin{equation*}
\lim_{n\to\infty}\|x_n(x_0,\sigma)\|=0.
\end{equation*}
In addition, the \textit{$\omega$-limit set}, written as $\omega(x_0,\sigma)$, is the set of all limits of the state trajectory $(x_n(x_0,\sigma))_{n\ge1}$ in the Banach space $(\mathbb{R}^d,\|\cdot\|)$. We note here that  $\omega(x_0,\sigma)=\varnothing$ if and only if $(x_n(x_0,\sigma))_{n\ge1}$ is asymptotically divergent; and $\omega(x_0,\sigma)=\{0\}$ if and only if $(x_n(x_0,\sigma))_{n\ge1}$ is asymptotically stable.

Recall that a subset of the topological space $\varSigma_{\bK}^+$ is said to be $G_\delta$ if it is the intersection of a countable collection of open subsets of $\varSigma_{\bK}^+$; and a subset of $\varSigma_{\bK}^+$ is said to be \textit{residual} if it contains a dense $G_\delta$-set. So a residual subset is a large set from the point of view of general topology. If a dynamical property holds based on a residual subset of $\varSigma_{\bK}^+$, then we say this property is \textit{generic} in $\varSigma_{\bK}^+$.

System $(\ref{eq1.1})$ has recently been found in many real applications. For the theoretic and applied importance of the study of System (\ref{eq1.1}), readers may see, e.g., \cite{Lib, SWMWK, SG}. If System (\ref{eq1.1}) is absolutely/uniformly stable, then $x_n(x_0,\sigma)$ converges asymptotically to the equilibrium $0$ as $n\to\infty$ for all switching laws $\sigma\in\varSigma_{\bK}^+$ and all initial states $x_0\in\mathbb{R}^d$. However, the vast majority of dynamical systems defined by (\ref{eq1.1}) do not behave in this way, i.e., their dynamics are unstable. Then the state trajectory $(x_n(x_0,\sigma))_{n\ge1}$ may be much more complex than that $(x_n(x_0,\sigma))_{n\ge1}$ is asymptotically divergent and that $(x_n(x_0,\sigma))_{n\ge1}$ has a nontrivial compact $\omega$-limit set $\omega(x_0,\sigma)$. It is often very unpredictable under the ``generic'' switching laws in $\varSigma_{\bK}^+$.

In some literature, for example \cite{Fel, CHM09, CHM10}, the complexity of System~(\ref{eq1.1}) is described by the existence of a dense trajectory $(x_n(x_0,\sigma))_{n\ge1}$ in $\mathbb{R}^d$ (i.e. $\overline{\{x_n(x_0,\sigma)\,|\,n=1,2,\dotsc\}}=\mathbb{R}^d$). However, for some unstable system, there exists no such a dense state trajectory at all. For example, for $S_1=\frac{1}{2}$ and $S_2=2$, the induced one-dimensional System (\ref{eq1.1}) does not have any dense trajectories in $\mathbb{R}$; but it still exhibits very complex/unpredictable generic characteristic from the randomly switching and long-terms viewpoint; see Theorem~\ref{thm1.2} below.

To appropriately describe the complexity of the dynamics of the state trajectory $(x_n(x_0,\sigma))_{n\ge1}$ of System (\ref{eq1.1}) as time evolves, we now introduce the dynamical concept---fiber chaos, which is motivated by the sensitive dependence on initial conditions in Li-Yorke's definition of chaos \cite{LY} for nonlinear dynamical systems on intervals, capturing System~(\ref{eq1.1})'s excessively irregular behavior.

\begin{defn}\label{def1.1}%%%
A switching law $\sigma\colon \mathbb{N}\rightarrow\bK$ in $\varSigma_{\bK}^+$ is said to be \textit{fiber-chaotic} for System (\ref{eq1.1}) if for \uwave{all initial states $x_0\in\mathbb{R}^d\setminus\{0\}$},
\begin{equation*}
\liminf_{n\to+\infty}\|x_n(x_0,\sigma)\|=0\quad \textrm{and}\quad \limsup_{n\to+\infty}\|x_n(x_0,\sigma)\|=\infty.
\end{equation*}
System (\ref{eq1.1}) is called \textit{generically} (resp. \textit{densely}) fiber-chaotic if its fiber-chaotic switching laws form a residual (resp. dense) subset of the topological space $\varSigma_{\bK}^+$.
\end{defn}

Then so long as $\sigma\in\varSigma_{\bK}^+$ is fiber-chaotic for System~(\ref{eq1.1}), for any $x_0\not=0$, the $\omega$-limit set $\omega(x_0,\sigma)$ must contain $0$ and be unbounded. In fact, $\sigma$ in $\varSigma_{\bK}^+$ is fiber-chaotic for System (\ref{eq1.1}) if and only if $\omega(x_0,\sigma)$ contains $0$ and is unbounded for any nonzero initial state $x_0$ in $\mathbb{R}^d$.

Let $f\colon I\rightarrow I$ be a continuous transformation of the unit interval $I=[0,1]$; then two distinct points $w,z\in I$ are called a Li-Yorke pair~\cite{LY} if
$$
\liminf_{n\to\infty}|f^n(w)-f^n(z)|=0\quad\textrm{and}\quad\limsup_{n\to\infty}|f^n(w)-f^n(z)|>0.
$$
Since $x_n(x_0,\sigma)-x_n(y_0,\sigma)=x_n(x_0-y_0,\sigma)$ for any two initial states $x_0,y_0\in\mathbb{R}^d$, it is easily seen that a switching law $\sigma\in\varSigma_{\bK}^+$ is fiber-chaotic for System $(\ref{eq1.1})$ if and only if
\begin{subequations}
\begin{align}
&\liminf_{n\to+\infty}\|x_n(x_0,\sigma)-x_n(y_0,\sigma)\|=0 \label{eq1.3a}\\
\intertext{and}
&\limsup_{n\to+\infty}\|x_n(x_0,\sigma)-x_n(y_0,\sigma)\|=\infty \label{eq1.3b}
\end{align}\end{subequations}
for all $x_0,y_0\in\mathbb{R}^d$ with $x_0\not=y_0$. So any two distinct initial states $x_0,y_0$ in $\mathbb{R}^d$ are a ``Li-Yorke pair'' of the output
\begin{equation*}
(x_n(\cdot,\sigma))_{n\ge1}\colon\mathbb{R}^d\rightarrow\mathbb{R}^d\times\mathbb{R}^d\times\mathbb{R}^d\times\dotsm
\end{equation*}
and $(x_n(\cdot,\sigma))_{n\ge1}$ is extremely sensitive to initial conditions, if the switching law $\sigma$ is fiber-chaotic for System (\ref{eq1.1}).

In topological dynamical system, $\mathrm{(\ref{eq1.3a})}$ and $\mathrm{(\ref{eq1.3b})}$ are respectively called the proximal and distal properties. However, our distal property $\mathrm{(\ref{eq1.3b})}$ is much more stronger than the general Li-Yorke's one~\cite{LY} that only requires that
\begin{equation*}
\limsup_{n\to+\infty}\|x_n(x_0,\sigma)-x_n(y_0,\sigma)\|>0.
\end{equation*}
This sensitivity to initial conditions means that any two state trajectories of System~(\ref{eq1.1}) governed by the same fiber-chaotic switching law $\sigma$ will be bound to get close together for a while, as time evolves, and then to go far away from each other for a while, and such sharply oscillating dynamics will be repeated infinitely that leads to irregular, complex dynamical behaviors.

A recent notable work for the study of chaotic switching of System~(\ref{eq1.1}) was introduced by Balde and Jouan in~\cite{BJ}. Their study only focused on the switching mechanism and their definition of chaotic switching laws can be found in Definition~\ref{def3.1} in Section~\ref{sec3}. Strictly speaking, under their framework, a chaotic switching law may even not produce a chaotic dynamics of System~(\ref{eq1.1}) in the classical sense, such as in terms of Li-York's definition. This is due to the dynamics of System~(\ref{eq1.1}) not only depends on the switching law but also the structure of the constituent subsystems. Only focusing on switching mechanism is not enough to describe the chaotic dynamics of System~(\ref{eq1.1}). For example, if the joint spectral radius of $\{S_1,..., S_K\}$ is strictly less than one, the state trajectories of System~(\ref{eq1.1}) will always converge to zero regardless if the switching is chaotic or not in the sense of Balde and Jouan in~\cite{BJ}. Therefore it is essential to consider the structure of subsystems as well in order to capture the characteristics of the dynamics of System~(\ref{eq1.1}). One of the motivations of this paper is trying to identify sufficient conditions for which System~(\ref{eq1.1}) appears to be fiber-chaotic.

In Definition~\ref{def1.1}, the word ``fiber'' is used there to distinguish the classical chaos purely respecting the one-sided shift dynamical system $\theta^+\colon\varSigma_{\bK}^+\rightarrow\varSigma_{\bK}^+$ which is of course chaotic in the sense of Li-Yorke. In addition, since for a single linear transformation $L\colon\mathbb{R}^d\rightarrow\mathbb{R}^d$ there is no complex dynamics, we consider only $K\ge2$.
%%%%%%%%%%%%%%%%%%%%%%%%%%%%%%%%%%%%%%%%%%%%%%%%%%%
\subsection{Main statements}\label{sec1.2}%%%
Let $f\colon M^n\rightarrow M^n$ be a $\mathrm{C}^{1+\alpha}$-diffeomorphism of a closed $n$-dimensional manifold $M^n$, which is volume-preserving. If $f$ has positive Lyapunov exponents then from Pesin's formula, $f$ has positive entropy. And hence $f$ is chaotic in the sense of Li-Yorke (cf.~\cite{BGKM}); that is, there is an uncountable set $\Delta\subset M^n$ such that any two distinct points $x,y\in\Delta$ are a Li-Yorke pair of $f$, i.e.,
\begin{equation*}
\liminf_{n\to+\infty}\textrm{dist}(f^n(x),f^n(y))=0\quad
\textrm{and}\quad
\limsup_{n\to+\infty}\textrm{dist}(f^n(x),f^n(y))>0,
\end{equation*}
where $\textrm{dist}(\cdot,\cdot)$ is a compatible metric on $M^n$.
However, this is not the case for linear inclusion dynamical system. For example, let
$$
S_1=\left(\begin{matrix}2&0\\0&\frac{1}{2}\end{matrix}\right)\quad \textrm{and}\quad S_2=\left(\begin{matrix}3&0\\0&\frac{1}{3}\end{matrix}\right).
$$
Then System (\ref{eq1.1}) induced by $S_1$ and $S_2$ is area-preserving because of $\det S_1=\det S_2=1$ and it has positive Lyapunov exponents at almost every switching laws $\sigma$ (cf.~Section~\ref{sec3.2} for the definition of Lyapunov exponents); but obviously there is no fiber chaos for System (\ref{eq1.1}).

For an Anosov automorphism $f$ of the $n$-dimensional torus $\mathbb{T}^n$, it is Li-Yorke chaotic because of the existence of Smale horseshoe. However, its derivative $Df\colon\mathbb{T}^n\times\mathbb{R}^n\rightarrow\mathbb{T}^n\times\mathbb{R}^n$ is not necessarily fiber-chaotic in the sense of Definition~\ref{def1.1}.

In this paper we shall present, for System $\mathrm{(\ref{eq1.1})}$, two simple mechanisms of generating the fiber-chaotic dynamics described as in Definition~\ref{def1.1}, as follows:

\begin{theorem}\label{thm1.2}%%%
Let $S_1,\dotsc,S_K$ all be nonsingular $d$-by-$d$ matrices. Then System $(\ref{eq1.1})$ is generically fiber-chaotic, if there are two finite-length words
$(i_1,\dotsc,i_m)\in\bK^m$ and $(j_1,\dotsc,j_n)\in\bK^n$
such that
\begin{equation*}
\|S_{i_m}\dotsm S_{i_1}\|<1<\|S_{j_n}\dotsm S_{j_1}\|_{\mathrm{co}}.
\end{equation*}
\end{theorem}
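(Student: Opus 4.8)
The plan is to replace the two pointwise conditions in Definition~\ref{def1.1} by two \emph{uniform} conditions on the partial products $P_n(\sigma)=S_{\sigma(n)}\cdots S_{\sigma(1)}$, so that a single switching law works simultaneously for every $x_0$. Writing $x_n(x_0,\sigma)=P_n(\sigma)x_0$ and using $\|P_n(\sigma)\|_{\mathrm{co}}\,\|x_0\|\le\|P_n(\sigma)x_0\|\le\|P_n(\sigma)\|\,\|x_0\|$, I would show that any $\sigma$ with
\begin{equation*}
\liminf_{n\to+\infty}\|P_n(\sigma)\|=0\quad\text{and}\quad\limsup_{n\to+\infty}\|P_n(\sigma)\|_{\mathrm{co}}=+\infty
\end{equation*}
is fiber-chaotic: along a subsequence on which $\|P_n\|\to0$ every $\|P_nx_0\|\to0$, and along a subsequence on which $\|P_n\|_{\mathrm{co}}\to+\infty$ every $\|P_nx_0\|\to+\infty$ (for $x_0\ne0$). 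Thus it suffices to prove that the set of $\sigma$ satisfying these two conditions is residual.

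Next I would run a Baire-category argument. Fixing $p,q,N\in\mathbb{N}$, set $G_{p,N}=\{\sigma:\exists\,n\ge N,\ \|P_n(\sigma)\|<1/p\}$ and $H_{q,N}=\{\sigma:\exists\,n\ge N,\ \|P_n(\sigma)\|_{\mathrm{co}}>q\}$. Since $P_n(\sigma)$ depends only on $\sigma(1),\dots,\sigma(n)$, the map $\sigma\mapsto P_n(\sigma)$ is locally constant, and both $A\mapsto\|A\|$ and $A\mapsto\|A\|_{\mathrm{co}}$ are continuous; hence each $G_{p,N}$ and $H_{q,N}$ is a union of cylinders, i.e.\ open. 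Their total intersection $\bigcap_{p,N}G_{p,N}\cap\bigcap_{q,N}H_{q,N}$ is exactly the set described above and is $G_\delta$.

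The heart of the argument is density, which is where the two words enter. Put $C=S_{i_m}\cdots S_{i_1}$ and $E=S_{j_n}\cdots S_{j_1}$, so $\|C\|<1<\|E\|_{\mathrm{co}}$. Given any cylinder determined by a finite word $w$ of length $L$, appending the block $(i_1,\dots,i_m)$ repeatedly multiplies $P_L$ on the left by $C$; submultiplicativity gives $\|C^kP_L\|\le\|C\|^k\|P_L\|\to0$, so for large $k$ the extended law lands in $G_{p,N}$, hence $G_{p,N}$ is dense. Symmetrically, appending $(j_1,\dots,j_n)$ repeatedly multiplies $P_L$ on the left by $E$, and super-multiplicativity of the co-norm ($\|AB\|_{\mathrm{co}}\ge\|A\|_{\mathrm{co}}\|B\|_{\mathrm{co}}$) gives $\|E^kP_L\|_{\mathrm{co}}\ge\|E\|_{\mathrm{co}}^k\|P_L\|_{\mathrm{co}}\to+\infty$, so $H_{q,N}$ is dense. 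Here the nonsingularity hypothesis is essential: it guarantees $P_L$ is invertible, so $\|P_L\|_{\mathrm{co}}=1/\|P_L^{-1}\|>0$ and the expanding block actually grows the co-norm. Each family being a dense open collection, its intersection is a dense $G_\delta$; intersecting the two residual sets yields a residual set, all of whose members are fiber-chaotic.

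The main obstacle I anticipate is the reduction in the first paragraph: the uniform conditions on $\|P_n\|$ and $\|P_n\|_{\mathrm{co}}$ are what let one switching law defeat \emph{all} nonzero $x_0$ at once, and the co-norm (rather than the norm) must be used for the expansion so that the single expanding subsequence pushes every direction to infinity. Everything else is a routine combination of submultiplicativity and a standard dense-$G_\delta$ argument.
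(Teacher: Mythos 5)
Your proposal is correct and follows essentially the same route as the paper: both reduce fiber-chaoticity to the uniform conditions $\liminf_n\|P_n(\sigma)\|=0$ and $\limsup_n\|P_n(\sigma)\|_{\mathrm{co}}=\infty$ via the two-sided bound $\|P_n\|_{\mathrm{co}}\|x_0\|\le\|P_nx_0\|\le\|P_n\|\|x_0\|$, show the resulting set is $G_\delta$ by writing it as a countable intersection of unions of cylinders, and prove density by appending the contracting word $(i_1,\dots,i_m)$ and the expanding word $(j_1,\dots,j_n)$ in blocks, using $\|BA\|\le\|B\|\,\|A\|$ and $\|BA\|_{\mathrm{co}}\ge\|B\|_{\mathrm{co}}\|A\|_{\mathrm{co}}$ together with nonsingularity to keep $\|P_L\|_{\mathrm{co}}>0$. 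The only cosmetic difference is that you establish density of each open set $G_{p,N}$, $H_{q,N}$ separately and invoke the Baire category theorem, whereas the paper (Lemma~\ref{lem4.1}) directly constructs, inside any given cylinder, a single switching law alternating the two blocks that lies in the full intersection.
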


Here $\|\cdot\|$ and $\|\cdot\|_{\mathrm{co}}$ denote the usual matrix maximum norm and minimum/co- norm, respectively, defined by
\begin{equation*}
\|A\|=\max_{x\in\mathbb{R}^d,\|x\|=1}\|Ax\|\quad \textrm{and}\quad\|A\|_{\textrm{co}}=\min_{x\in\mathbb{R}^d,\|x\|=1}\|Ax\|
\end{equation*}
for any $d$-by-$d$ matrix $A$.

In order to prove Theorem~\ref{thm1.2}, we shall employ the analogous idea of constructing switching law $\sigma$ as in~\cite{LTZC}; that is, for any initial state $x_0\,(\not=0)\in\mathbb{R}^d$, when $S_{\sigma(k)}\dotsm S_{\sigma(1)}x_0$ is far away from the origin $0$, we will activate $S_{i_m}\dotsm S_{i_1}$ until
$(S_{i_m}\dotsm S_{i_1})^\ell S_{\sigma(k)}\dotsm S_{\sigma(1)}x_0$ closes sufficiently to $0$ for some $\ell$; when $S_{\sigma(k^\prime)}\dotsm S_{\sigma(1)}x_0$ is close to $0$, we will activate $S_{j_n}\dotsm S_{j_1}$ until
$(S_{j_n}\dotsm S_{j_1})^{\ell^\prime} S_{\sigma(k^\prime)}\dotsm S_{\sigma(1)}x_0$ is far away from $0$ for some $\ell^\prime$; and alternately do this. This idea has been proven by using numerical result in \cite{LTZC}. Our mathematical proof will be presented in Section~\ref{sec4.1} using topology. But different with \cite{LTZC}, our procedure is uniform for the initial state $x_0$; since the fiber-chaoticity of $\sigma$ is independent of $x_0$ by Definition~\ref{def1.1}. See Section~\ref{sec4.1} for the details.

The following is another mechanism of generating fiber chaos for the case where System~(\ref{eq1.1}) has neither a asymptotically stable subsystem nor an asymptotically divergent subsystem, which is completely beyond the situations of \cite{CSR, LTZC, XZZ}.

\begin{theorem}\label{thm1.3}%%%
Let $S_1,\dotsc,S_K$ all be real, $2$-by-$2$, and nonsingular matrices. If there exists an asymptotically stable state trajectory $(x_n(\bx_0,\sigma_0))_{n\ge1}$, an asymptotically divergent state trajectory $(x_n(\by_0,\sigma_1))_{n\ge1}$, and an irrational rotation subsystem, say
\begin{equation*}
S_1=\left(\begin{matrix}\cos2\pi\alpha&\sin2\pi\alpha\\-\sin2\pi\alpha&\cos2\pi\alpha\end{matrix}\right)\quad \textrm{where }\alpha\in(0,1)\textrm{ is irrational},
\end{equation*}
then System~$(\ref{eq1.1})$ is densely fiber-chaotic.
\end{theorem}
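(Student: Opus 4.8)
The plan is to construct, inside an arbitrary cylinder of $\varSigma_{\bK}^+$, a single switching law $\sigma$ that is fiber-chaotic; since the cylinder is arbitrary, density of the fiber-chaotic laws follows at once. By Definition~\ref{def1.1} it is enough to produce $\sigma$ so that, writing $P_n=S_{\sigma(n)}\dotsm S_{\sigma(1)}$ and $e(\theta)=(\cos\theta,\sin\theta)$, one has $\liminf_n\norm{P_ne(\theta)}=0$ and $\limsup_n\norm{P_ne(\theta)}=+\infty$ for \emph{every} direction $\theta$. I would work throughout on the circle of directions $S^1$, tracking for each matrix $B$ both its projective action $\theta\mapsto\angle(Be(\theta))$ and its multiplier $\theta\mapsto\norm{Be(\theta)}$.

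From the hypotheses I extract three building blocks. The asymptotically stable trajectory gives, for every $\rho>0$, a finite word whose product $A$ satisfies $\norm{A\hat{\bx}_0}\le\rho$; by continuity of $v\mapsto\norm{Av}$ on $S^1$ there is then an arc (a \emph{stable cone}) about $\hat{\bx}_0$ on which $A$ contracts by the factor $\le 2\rho$. Symmetrically, the asymptotically divergent trajectory yields, for every $\mu$, a finite word with product $B$ and an \emph{unstable cone} about $\hat{\by}_0$ on which $B$ expands by $\ge\mu$. Finally, since $\alpha$ is irrational, $S_1$ acts minimally on $S^1$ and is an isometry; hence for any target angle $\psi$ and any tolerance some power $S_1^{k}$ rotates $\psi$ into the stable (resp.\ unstable) cone while leaving all magnitudes unchanged. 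Thus the block $AS_1^{k}$ (resp.\ $BS_1^{k}$) contracts (resp.\ expands) exactly those vectors whose direction lies in a prescribed arc centred at $\psi$, with $\psi$ freely choosable.

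I would then build $\sigma$ in infinitely many stages following an arbitrary finite prefix $w$ (this prefix is what delivers density). Stages alternate between contracting and expanding, and I enumerate target angles $(\psi_k)$ so that every $\theta\in S^1$ lies in the designated arc of infinitely many contracting stages and of infinitely many expanding stages (refining the enumeration to keep pace with the cone widths). At the $k$-th contracting stage the accumulated product $P_{\text{start}}$ is already fixed, so I first choose $\rho_k<1/\bigl(k\,\norm{P_{\text{start}}}\bigr)$, take a word $A$ realizing this contraction, and append the single block $AS_1^{k}$ placing its stable cone at $\psi_k$; every direction caught then satisfies $\norm{P_ne(\theta)}\le\norm{P_{\text{start}}}\rho_k<1/k$ at that moment. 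Expanding stages are dual: choose $\mu_k>k/\norm{P_{\text{start}}}_{\mathrm{co}}$ (finite and positive because all $S_k$ are nonsingular) and append $BS_1^{k}$, so that $\norm{P_ne(\theta)}>k$ for every caught direction. Letting $k\to\infty$ gives $\liminf_n\norm{P_ne(\theta)}=0$ and $\limsup_n\norm{P_ne(\theta)}=+\infty$ in every direction, hence $\sigma$ is fiber-chaotic.

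The main obstacle — and the reason the result is only \emph{dense} and cannot be reduced to the simpler goal of making $\norm{P_n}$ small and $\norm{P_n}_{\mathrm{co}}$ large — is the area/determinant constraint. When, for instance, every $\det S_k$ has modulus $\ge1$ (the extreme being $S_2=\mathrm{diag}(\tfrac12,2)$ with $\det\equiv1$), each contracting word is genuinely hyperbolic and \emph{expands} its transverse directions, so no single product can contract all directions simultaneously; indeed $\norm{P_n}\ge\sqrt{\abs{\det P_n}}$ may stay bounded below by $1$. This forbids contracting a whole sweep of cones within one stage, as the transverse expansion of earlier blocks would swamp the later contraction. The design above circumvents this precisely by catching only \emph{one} cone per stage, letting the targets $(\psi_k)$ sweep $S^1$ across stages, and — crucially — selecting the strengths $\rho_k,\mu_k$ only \emph{after} inspecting the already-determined $P_{\text{start}}$, so that each individual dip or spike defeats the finite growth or shrinkage accrued so far. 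The irrationality of $\alpha$, through the minimality of $S_1$, is exactly what permits repositioning the single active cone to every angle, and is therefore indispensable to the construction.
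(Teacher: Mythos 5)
Your overall scheme---use minimality of the irrational rotation $S_1$ to steer the current image vector into a contracting cone about $\hat{\bx}_0$ or an expanding cone about $\hat{\by}_0$, extract from $\sigma_0$ (resp.\ $\sigma_1$) a finite word whose strength is chosen \emph{after} inspecting the accumulated product, alternate these blocks, and prepend an arbitrary prefix to get density---is exactly the paper's construction. The gap lies in the one step where you go beyond the paper: the claim that the targets $(\psi_k)$ can be enumerated so that \emph{every} direction $\theta$ is caught in infinitely many contracting and infinitely many expanding stages. The arc of initial directions caught at stage $k$ is the preimage, under the projective action of the already-fixed product $P_{\mathrm{start}}$, of a rotated copy of the stable (or unstable) cone of the extracted word. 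You control its \emph{position} (via the power of $S_1$), but not its \emph{width}: the cone of the extracted word comes from continuity of $v\mapsto\norm{Av}$ at the single point $\hat{\bx}_0$ and so may be arbitrarily thin, and the projective distortion of $P_{\mathrm{start}}$ (whose condition number you cannot bound, precisely because of the determinant obstruction you yourself describe) can shrink it further. With arcs of uncontrollable, possibly summable widths one cannot cover $\mathbb{S}^1$ infinitely often---an adaptive choice of centers only guarantees that the set of directions caught infinitely often is a dense $G_\delta$ in $\mathbb{S}^1$, whose complement may be a nonempty Cantor set. Hence the $\sigma$ you build is provably chaotic only for a residual set of initial directions, which falls short of the ``all $x_0\neq0$'' requirement of Definition~\ref{def1.1}.

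The paper does not attempt this: its proof fixes a single nonzero initial state $u$ and a prefix length $N_0$, tracks only the one direction of $P_{\mathrm{start}}u$ at each stage (approximating it by $\bx_0$ or $\by_0$ via the rotation), and thereby shows that for each fixed $x_0\neq0$ the pointwise set $\pmb{\varLambda}(x_0)$ is dense (whence residual, Theorem~\ref{thm4.7}); simultaneity over initial states is then obtained only for countable dense sets of directions (the Corollary following Theorem~\ref{thm4.7}). If you retreat to catching, at each stage, just the current direction of $P_{\mathrm{start}}u$ for one fixed $u$, your argument closes and coincides with the paper's; as written, the ``sweep all of $\mathbb{S}^1$'' step is the missing (and, as far as the paper shows, unavailable) ingredient.
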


Note that if $\alpha$ is rational, then the statement of Theorem~\ref{thm1.3} is not necessarily true; see the counterexample presented in Section~\ref{sec5.2}.

We shall prove that the set of all fiber-chaotic switching laws of System $(\ref{eq1.1})$ is of measure $0$ for any ergodic measure of $\theta^+$ that has a nonzero Lyapunov exponent (cf.~Proposition~\ref{prop3.6} below). So this generic/dense fiber-chaotic phenomenon is ignorable from the viewpoint of ergodic theory. However, under the situations of Theorems~\ref{thm1.2}, \ref{thm1.3} and~\ref{thm4.7} it is not ignorable from the viewpoint of topology.

Although Definition~\ref{def1.1} does not involve directly any Lyapunov exponents of System $(\ref{eq1.1})$, yet the fiber-chaos is essentially related to this fundamental index. We shall show that a necessary condition of fiber-chaos is the coexistence of at least one positive Lyapunov exponent, one negative Lyapunov exponent and one zero Lyapunov exponent for System~(\ref{eq1.1}). This shows that the fiber chaos really captures the complex dynamical behavior of System~(\ref{eq1.1}); see Theorem~\ref{thm5.2} stated in Section~\ref{sec5}.

On the other hand, we shall study the fiber chaos of periodically stable linear inclusion systems. Our Theorem~\ref{thm5.3} below shows that this kind of system is ``simple'' from the viewpoint of fiber chaos.

\begin{remark}%%%
Let $\varphi\colon\varSigma_{\bK}^+\rightarrow\mathbb{R}$ be H\"{o}lder continuous such that $\varphi$ is not cohomologous to $0$ (i.e., one cannot find a $\psi\in C(\varSigma_{\bK}^+,\mathbb{R})$ and $c\in\mathbb{R}$ so that $\varphi=\psi-\psi\circ\theta^++c$). Then Barreira and Schmeling proved that the following statements hold (\cite[Theorem~2.1]{BS}):
\begin{enumerate}
\item[(1)] The irregular set of the Birkhoff average of $\varphi$, defined by
\begin{equation*}
\mathfrak{B}(\varphi)=\left\{\sigma\in\varSigma_{\bK}^+\colon\lim_{n\to\infty}\frac{1}{n}\sum_{i=0}^{n-1}\varphi(\sigma(\cdot+i))\textrm{ does not exist}\right\},
\end{equation*}
is a proper dense subset of $\varSigma_{\bK}^+$.

\item[(2)] $\theta^+$ has the full topological entropy restricted to $\mathfrak{B}(\varphi)$.
\end{enumerate}
By the Birkhoff ergodic theorem, $\mathfrak{B}(\varphi)$ has total measure $0$ in the sense of $\theta^+$-invariant measures (cf. Section~\ref{sec3.2} for the precise definition); if $\varphi$ is cohomologous to $0$, then $\mathfrak{B}(\varphi)=\varnothing$.

Applying the above result of Barreira and Schmeling to System~(\ref{eq1.1}), if $S_1,\dotsc,S_K$ are conformal (i.e., each $S_k$ is a multiple of an isometry) and let $\bvarphi(\sigma)=\log\|S_{\sigma(1)}\|$, then
$$\|S_{i_1}\dotsm S_{i_n}\|_{\mathrm{co}}=\|S_{i_1}\dotsm S_{i_n}\|=\|S_{i_1}\|\dotsm \|S_{i_n}\|$$
and hence for every $\sigma\in\mathfrak{B}(\bvarphi)$ and for all $x_0\in\mathbb{R}^d\setminus\{0\}$,
$$
\liminf_{n\to\infty}\frac{1}{n}\log\|x_n(x_0,\sigma)\|\not=\limsup_{n\to\infty}\frac{1}{n}\log\|x_n(x_0,\sigma)\|.
$$
This shows that the state trajectory $(x_n(x_0,\sigma))_{n\ge1}$ is infinitely oscillated, but we cannot, only for this reason, claim that such a $\sigma$ is fiber-chaotic for System~(\ref{eq1.1}). In addition, generally the condition
\begin{equation*}
\liminf_{n\to\infty}\|x_n(x_0,\sigma)\|\not=\limsup_{n\to\infty}\|x_n(x_0,\sigma)\|
\end{equation*}
does not imply that
\begin{equation*}
\liminf_{n\to\infty}\frac{1}{n}\log\|x_n(x_0,\sigma)\|\not=\limsup_{n\to\infty}\frac{1}{n}\log\|x_n(x_0,\sigma)\|.
\end{equation*}
Moreover, $\|S_{i_1}\dotsm S_{i_n}\|\not=\|S_{i_1}\|\dotsm \|S_{i_n}\|$ in general for $d\ge2$.

So our fiber-chaotic switching law is completely different with the irregular point of the Birkhoff average of a continuous function.
\end{remark}

\begin{remark}
Under the hypotheses of Theorem~\ref{thm1.2}, one can find some $\varepsilon>0$ such that for any $S_1^\prime, \dotsc, S_K^\prime\in\mathbb{R}^{d\times d}$, if
\begin{equation*}
\|S_1-S_1^\prime\|<\varepsilon,\ \dotsc,\ \|S_K-S_K^\prime\|<\varepsilon,
\end{equation*}
then the linear inclusion system induced by $\{S_1^\prime, \dotsc, S_K^\prime\}$ is also generically fiber-chaotic.
\end{remark}

\begin{remark}\label{rem1.6}%%%
It is a widely known fact that Lyapunov exponents are very sensitive to the base points, but it is rather difficult to find examples of this kind. Theorems~\ref{thm1.2} and \ref{thm1.3} is useful for us to construct such examples. For example, let
\begin{equation*}
S_1=\left[\begin{matrix}\cos2\pi\alpha&\sin2\pi\alpha\\-\sin2\pi\alpha&\cos2\pi\alpha\end{matrix}\right], S_2=\left[\begin{matrix}1&1\\0&1\end{matrix}\right]\textrm{ and } S_3=\left[\begin{matrix}1&0\\1&\frac{1}{2}\end{matrix}\right],
\end{equation*}
where $\alpha\in(0,1)$ is irrational. Clearly, $\{S_1,S_2,S_3\}$ satisfies the conditions of Theorem~\ref{thm1.3}. Then from Theorem~\ref{thm1.3} and Theorem~\ref{thm5.2}, it follows that there exist some $\bsigma\in\varSigma_{\bK}^+$ and $\bx_0\in\mathbb{R}^2\setminus\{0\}$ such that the Lyapunov exponent of System~(\ref{eq1.1}) at the data $(\bx_0,\bsigma)$
\begin{equation*}
\lambda(\bx_0,\bsigma):=\lim_{n\to\infty}\frac{1}{n}\log\|x_n(\bx_0,\bsigma)\|> (\textrm{resp.~}<)\ 0.
\end{equation*}
We can claim that $\lambda(\bx_0,\cdot)$ is not continuous at the base point $\bsigma$. In fact Theorem~\ref{thm1.3} follows that for any $\epsilon>0$, there is at least one $\sigma\in\varSigma_{\bK}^+$ with $d(\bsigma,\sigma)<\epsilon$ such that either
\begin{equation*}
\lambda(\bx_0,\sigma)=\lim_{n\to\infty}\frac{1}{n}\log\|x_n(\bx_0,\bsigma)\|=0
\end{equation*}
or the above limit does not exist at all. This proves the discontinuity of the exponent function $\lambda(\bx_0,\cdot)$ at the point $\bsigma$.
\end{remark}
%%%%%%%%%%%%%%%
\subsection{Outline}%%%
This paper is organized as follows: In Section~\ref{sec2}, we will introduce a dual concept of the joint/generalized spectral radius and prove an Elsner-type reduction theorem for System~(\ref{eq1.1}). In Section~\ref{sec3} we shall study the topological structure of a nonchaotic switching law and recall the definition of Lyapunov exponents.
We will prove our main results Theorems~\ref{thm1.2} and \ref{thm1.3} in Section~\ref{sec4}.
In Section~\ref{sec5}, we will prove the coexistence of positive, zero and negative Lyapunov exponents of a fiber-chaotic system (cf.~Theorem~\ref{thm5.2}) and show that every periodically stable inclusion system does not have any fiber-chaotic behaviors (cf.~Theorem~\ref{thm5.3}). So, a periodically stable inclusion system is ``simple'' from our viewpoint of fiber chaos. Finally we will end this paper with some questions related to the fiber-chaos of System (\ref{eq1.1}) in Section~\ref{sec6} for us to further study.

%%%%%%%%%%%%%%%%%%%%%%%%%%%%%%%%%%%%%%%%%%%%%%%%%%%%%%%%%%%%%%%%%%%%%%%%%%%
%%%%%%%%%%%%%%%%%%%%%%%%%%%%%%%%%%%%%%%%%%%%%%%%%%%%%%%%%%%%%%%%%%%%%%%%%%%
\section{Joint spectral co-radius and a reduction theorem}\label{sec2}%%%
As was pointed out in the Introduction, the fiber chaos implies that there exists at least one negative Lyapunov exponent for System~(\ref{eq1.1}) (cf.~Theorem~\ref{thm5.2}). To prove this, we need to introduce a concept---the joint spectral co-radius---and then prove an Elsner-type reduction theorem for System~(\ref{eq1.1}).

Let System~(\ref{eq1.1}) be based on the $K$ real $d$-by-$d$ nonsingular matrices $S_1,\dotsc,S_K$ throughout the sequel of this section.

\subsection{The joint spectral co-radius}\label{sec2.1}%%%
According to Rota and Strang~\cite{RS}, the nonnegative real number
\begin{equation*}
\widehat{\pmb{\rho}}=\lim_{n\to\infty}\max_{\sigma\in\varSigma_{\bK}^+}\sqrt[n]{\|S_{\sigma(n)}\dotsm S_{\sigma(1)}\|} \quad \left(=\inf_{n\ge1}\max_{\sigma\in\varSigma_{\bK}^+}\sqrt[n]{\|S_{\sigma(n)}\dotsm S_{\sigma(1)}\|}\right),
\end{equation*}
is called the \textit{joint spectral radius} of System~(\ref{eq1.1}).

By $\rho(A)$ we denote the usual spectral radius of a square matrix $A$; that is, if $\lambda_1,\dotsc,\lambda_r$ are its all distinct eigenvalues, then $\rho(A)=\max_{1\le i\le r}|\lambda_i|$. According to Daubechies and Lagarias~\cite{DL} the \textit{generalized spectral radius} of System~(\ref{eq1.1}) is defined by
\begin{equation*}
\pmb{\rho}=\limsup_{n\to\infty}\max_{\sigma\in\varSigma_{\bK}^+}\sqrt[n]{\rho(S_{\sigma(n)}\dotsm S_{\sigma(1)})}.
\end{equation*}

Then there holds the following important relationship:

\begin{BW}[{\cite{BW}; also see \cite{El,SWP,CZ,Boc,Dai-JMAA,Dai-LAA}}]
For System~$(\ref{eq1.1})$, there holds $\widehat{\pmb{\rho}}=\pmb{\rho}$.
\end{BW}

In light of this formula, we can identify $\widehat{\pmb{\rho}}$ with $\pmb{\rho}$ from now on.

Let $\mathbb{R}^{d\times d}$ be the space of all $d\times d$ real matrices. For any nonsingular $A\in\mathbb{R}^{d\times d}$, write $\rho_{\mathrm{co}}^{}(A)=1/\rho(A^{-1})$.
Dually we now introduce the following concepts:

\begin{defn}\label{def2.1}
The nonnegative real number
\begin{equation*}
\widehat{\pmb{\rho}}_{\mathrm{co}}^{}=\lim_{n\to\infty}\min_{\sigma\in\varSigma_{\bK}^+}\sqrt[n]{\|S_{\sigma(n)}\dotsm S_{\sigma(1)}\|_{\mathrm{co}}}\quad \left(=\sup_{n\ge1}\min_{\sigma\in\varSigma_{\bK}^+}\sqrt[n]{\|S_{\sigma(n)}\dotsm S_{\sigma(1)}\|_{\mathrm{co}}}\right),
\end{equation*}
is called the \textit{joint spectral co-radius} of System~(\ref{eq1.1}). The nonnegative real number
\begin{equation*}
\pmb{\rho}_{\mathrm{co}}^{}=\liminf_{n\to\infty}\min_{\sigma\in\varSigma_{\bK}^+}\sqrt[n]{\rho_{\mathrm{co}}^{}(S_{\sigma(n)}\dotsm S_{\sigma(1)})}
\end{equation*}
is called the \textit{generalized spectral co-radius} of System~(\ref{eq1.1}).
\end{defn}

Since System~(\ref{eq1.1}) is nonsingular and $\|AB\|_{\mathrm{co}}\ge\|A\|_{\mathrm{co}}\cdot\|B\|_{\mathrm{co}}$ for any two $d\times d$ matrices, the above joint spectral co-radius $\widehat{\pmb{\rho}}_{\mathrm{co}}^{}$ is well defined for System~(\ref{eq1.1}).

From Berger-Wang's formula, it follows easily the following equality:

\begin{lem}\label{lem2.2}
For System~$(\ref{eq1.1})$, there holds that $\widehat{\pmb{\rho}}_{\mathrm{co}}^{}=\pmb{\rho}_{\mathrm{co}}^{}$, and that $\widehat{\pmb{\rho}}_{\mathrm{co}}^{-1}$ and $\pmb{\rho}_{\mathrm{co}}^{-1}$ are the joint and generalized spectral radii of $\{S_1^{-1},\dotsc,S_K^{-1}\}$ respectively.
\end{lem}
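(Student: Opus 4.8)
The plan is to exploit the duality between the co-norm and the operator norm under matrix inversion, thereby reducing the entire statement to the ordinary Berger--Wang formula applied to the inverse system $\{S_1^{-1},\dotsc,S_K^{-1}\}$. The cornerstone is the elementary identity $\|A\|_{\mathrm{co}}=1/\|A^{-1}\|$, valid for every nonsingular $A\in\mathbb{R}^{d\times d}$; this follows at once from $\|A^{-1}\|=\sup_{y\neq0}\|A^{-1}y\|/\|y\|=\sup_{x\neq0}\|x\|/\|Ax\|=1/\inf_{x\neq0}\|Ax\|/\|x\|$ after the substitution $y=Ax$. First I would record, for a length-$n$ product $P=S_{\sigma(n)}\dotsm S_{\sigma(1)}$, that $P^{-1}=S_{\sigma(1)}^{-1}\dotsm S_{\sigma(n)}^{-1}$ is a length-$n$ ordered product of the inverse matrices, and that as $\sigma$ ranges over $\varSigma_{\bK}^+$ this inverse product ranges over exactly the set of all length-$n$ words in $\{S_1^{-1},\dotsc,S_K^{-1}\}$, the order reversal $k\mapsto n+1-k$ being a bijection of words.

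Next I would combine these two observations. Using $\|P\|_{\mathrm{co}}=1/\|P^{-1}\|$ together with the order-reversal bijection,
$$\min_{\sigma}\sqrt[n]{\|S_{\sigma(n)}\dotsm S_{\sigma(1)}\|_{\mathrm{co}}}=\frac{1}{\max_{\sigma}\sqrt[n]{\|S_{\sigma(n)}^{-1}\dotsm S_{\sigma(1)}^{-1}\|}},$$
and the right-hand denominator is precisely the $n$-th term in the definition of the joint spectral radius $\widehat{\pmb{\rho}}(\{S_k^{-1}\})$ of the inverse system. Passing to the limit in $n$ (the limit on the left exists by the supermultiplicativity noted before the lemma, and the reciprocal of a convergent sequence of positive reals with positive limit again converges, nonsingularity guaranteeing positivity) yields $\widehat{\pmb{\rho}}_{\mathrm{co}}=1/\widehat{\pmb{\rho}}(\{S_k^{-1}\})$, i.e. $\widehat{\pmb{\rho}}_{\mathrm{co}}^{-1}$ is the joint spectral radius of the inverses.

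For the generalized spectral co-radius I would run the parallel computation with $\rho_{\mathrm{co}}(A)=1/\rho(A^{-1})$ in place of $\|\cdot\|_{\mathrm{co}}$, again using $P^{-1}=S_{\sigma(1)}^{-1}\dotsm S_{\sigma(n)}^{-1}$ and the order-reversal bijection to obtain
$$\min_{\sigma}\sqrt[n]{\rho_{\mathrm{co}}(S_{\sigma(n)}\dotsm S_{\sigma(1)})}=\frac{1}{\max_{\sigma}\sqrt[n]{\rho(S_{\sigma(n)}^{-1}\dotsm S_{\sigma(1)}^{-1})}}.$$
Here the one point demanding care is the interplay of $\liminf$ with the reciprocal: since $\liminf_n(1/a_n)=1/\limsup_n a_n$ for positive $a_n$, taking $\liminf$ on the left turns the $\limsup$ in the definition of the generalized spectral radius on the right into the correct object, giving $\pmb{\rho}_{\mathrm{co}}=1/\pmb{\rho}(\{S_k^{-1}\})$. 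Finally, applying Berger--Wang's formula to the inverse system gives $\widehat{\pmb{\rho}}(\{S_k^{-1}\})=\pmb{\rho}(\{S_k^{-1}\})$, whence $\widehat{\pmb{\rho}}_{\mathrm{co}}=\pmb{\rho}_{\mathrm{co}}$ and both assertions of the lemma follow. I do not anticipate a genuine obstacle here; the only delicate points are the bookkeeping of the order reversal under inversion and the $\liminf$/$\limsup$ reciprocal swap, both routine once the identity $\|A\|_{\mathrm{co}}=1/\|A^{-1}\|$ is in hand.
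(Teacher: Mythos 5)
Your proposal is correct and follows exactly the route the paper intends: its proof of Lemma~2.2 consists of the single remark that the statement ``follows from considering the linear inclusion system induced by $\{S_1^{-1},\dotsc,S_K^{-1}\}$,'' and your argument simply supplies the omitted details (the identity $\|A\|_{\mathrm{co}}=1/\|A^{-1}\|$, the order-reversal bijection on words, the $\liminf$/$\limsup$ reciprocal swap, and Berger--Wang applied to the inverse family). No discrepancy to report.
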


\begin{proof}
The statement follows from considering the linear inclusion system induced by $\left\{S_1^{-1}, \dotsc, S_K^{-1}\right\}$ and we thus omit the details here.
\end{proof}

On the other hand, from Lemma~\ref{lem2.2} and the simple fact that for any block-upper-triangular nonsingular matrix $S=\left[\begin{matrix}A&*\\0&B\end{matrix}\right]$ its inverse is
$S^{-1}=\left[\begin{matrix}A^{-1}&\star\\0&B^{-1}\end{matrix}\right]$, we could obtain the following lemma:

\begin{lem}\label{lem2.3}
Let $A_1,\dotsc,A_K\in\mathbb{R}^{d_1\times d_1}$ and $B_1,\dotsc,B_K\in\mathbb{R}^{d_2\times d_2}$ all be nonsingular with $n=d_1+d_2$. If System~$(\ref{eq1.1})$ is based on the following
\begin{equation*}
S_1=\left[\begin{matrix}A_1&*_1\\0&B_1\end{matrix}\right],\ \dotsc,\ S_K=\left[\begin{matrix}A_K&*_K\\0&B_K\end{matrix}\right],\quad \textrm{where } *_1,\dotsc,*_K\in\mathbb{R}^{d_1\times d_2},
\end{equation*}
then the joint spectral co-radius $\widehat{\pmb{\rho}}_{\mathrm{co}}^{}$ of System~$(\ref{eq1.1})$ is just the minimum of the joint spectral co-radii of the systems based, respectively, on $\{A_1,\dotsc,A_K\}$ and $\{B_1,\dotsc,B_K\}$.
\end{lem}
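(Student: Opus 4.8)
The plan is to dualise everything to the ordinary joint spectral radius through the inverse matrices, where the block-triangular structure becomes transparent, and then transport the result back. By Lemma~\ref{lem2.2}, the joint spectral co-radius $\widehat{\pmb{\rho}}_{\mathrm{co}}^{}$ of System~$(\ref{eq1.1})$ equals $1/\widehat{\pmb{\rho}}(\{S_1^{-1},\dotsc,S_K^{-1}\})$, where $\widehat{\pmb{\rho}}(\cdot)$ denotes the joint spectral radius; applying the same lemma to the two subsystems, the co-radii $\widehat{\pmb{\rho}}_{\mathrm{co}}^{A}$ and $\widehat{\pmb{\rho}}_{\mathrm{co}}^{B}$ of $\{A_1,\dotsc,A_K\}$ and $\{B_1,\dotsc,B_K\}$ are the reciprocals of $\widehat{\pmb{\rho}}(\{A_k^{-1}\})$ and $\widehat{\pmb{\rho}}(\{B_k^{-1}\})$. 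Using that the inverse of a nonsingular block-upper-triangular matrix is again block-upper-triangular with diagonal blocks $A_k^{-1}$ and $B_k^{-1}$, every finite product $S_{\sigma(n)}^{-1}\dotsm S_{\sigma(1)}^{-1}$ is block-upper-triangular with diagonal blocks $A_{\sigma(n)}^{-1}\dotsm A_{\sigma(1)}^{-1}$ and $B_{\sigma(n)}^{-1}\dotsm B_{\sigma(1)}^{-1}$. Thus the whole statement reduces to a block-triangular reduction for the joint spectral radius of the inverse family.

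I would carry out this reduction at the level of the generalized spectral radius, where the off-diagonal blocks are invisible. Since the characteristic polynomial of a block-upper-triangular matrix factors over its diagonal blocks, for every word $\sigma$ one has
\begin{equation*}
\rho(S_{\sigma(n)}^{-1}\dotsm S_{\sigma(1)}^{-1})=\max\{\rho(A_{\sigma(n)}^{-1}\dotsm A_{\sigma(1)}^{-1}),\ \rho(B_{\sigma(n)}^{-1}\dotsm B_{\sigma(1)}^{-1})\}.
\end{equation*}
Because taking $n$-th roots, maximising over $\sigma\in\varSigma_{\bK}^+$, and taking $\limsup_{n\to\infty}$ each commute with the pointwise maximum, it follows that the generalized spectral radius of $\{S_k^{-1}\}$ is the maximum of those of $\{A_k^{-1}\}$ and $\{B_k^{-1}\}$; Berger--Wang's formula then upgrades each generalized spectral radius to the corresponding joint spectral radius. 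Translating back through Lemma~\ref{lem2.2} and using $\max\{a^{-1},b^{-1}\}=(\min\{a,b\})^{-1}$ for positive $a,b$ gives $\widehat{\pmb{\rho}}_{\mathrm{co}}^{}=\min\{\widehat{\pmb{\rho}}_{\mathrm{co}}^{A},\widehat{\pmb{\rho}}_{\mathrm{co}}^{B}\}$. The positivity of all three co-radii, needed to invert, follows from nonsingularity, since $\|S_{\sigma(n)}\dotsm S_{\sigma(1)}\|_{\mathrm{co}}\ge(\min_k\|S_k\|_{\mathrm{co}})^n>0$.

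The only genuine obstacle is the reduction in the middle step, and the reason I route it through the inverses is precisely to make it tractable: the eigenvalue argument converts block-triangular growth into a maximum, and maxima interact cleanly with the $\limsup$ and the supremum over words that appear in the (generalized) joint spectral radius. Attempting the same computation directly on the co-radius is awkward, because there one meets $\min$ and $\liminf$, and $\liminf$ does not commute with $\min$ in general, so one cannot split the co-radius of the family into the minimum of the subsystem co-radii without first passing to a genuine limit quantity. Invoking Berger--Wang to equate the generalized and joint spectral radii is exactly what removes this difficulty and simultaneously disposes of the off-diagonal blocks, which a direct co-norm estimate would otherwise have to control.
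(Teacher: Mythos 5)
Your proposal is correct and follows essentially the same route as the paper, which derives the lemma from Lemma~\ref{lem2.2} together with the observation that the inverse of a nonsingular block-upper-triangular matrix is again block-upper-triangular with diagonal blocks $A_k^{-1}$, $B_k^{-1}$. The paper leaves the block-triangular reduction for the joint spectral radius of the inverse family implicit, and your passage through the generalized spectral radius and Berger--Wang is the standard way to fill in exactly that step.
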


\subsection{An Elsner-type reduction lemma}\label{sec2.2}%%%
Recall that System~$(\ref{eq1.1})$ is called \textit{product bounded} if and only if there exists a constant $1\le\beta<+\infty$ such that
\begin{equation*}
\|S_{\sigma(n)}\dotsm S_{\sigma(1)}\|\le\beta\quad\forall\sigma\in\varSigma_{\bK}^+\textrm{ and }n\ge1.
\end{equation*}
This means that the semigroup $\bigsqcup_{n\in\mathbb{N}}\{S_{\sigma(n)}\dotsm S_{\sigma(1)}\colon\sigma\in\varSigma_{\bK}^+\}$ is a bounded subset of the $d\times d$ matrix space $\mathbb{R}^{d\times d}$.

Then Elsner's reduction theorem may be stated as follows:

\begin{E}[{\cite{El}; see also~\cite{Dai-JMAA}}]
If System~$(\ref{eq1.1})$ is product unbounded such that $\widehat{\pmb{\rho}}=1$, then it is reducible, i.e., each $S_k$ has a common, nontrivial, proper, and invariant linear subspace in $\mathbb{R}^d$.
\end{E}

We should note here that the original Elsner's reduction theorem in~\cite{El} is for complex matrices. However, from the another proof presented in \cite{Dai-JMAA} we can see that the statement also holds for real matrices.

Dual to product boundedness, we introduce the following condition:

\begin{defn}\label{def2.4}
System~(\ref{eq1.1}) is called \textit{product co-bounded} if there exists a constant $0<\alpha\le 1$ such that
\begin{equation*}
\|S_{\sigma(n)}\dotsm S_{\sigma(1)}\|_{\mathrm{co}}\ge\alpha\quad\forall\sigma\in\varSigma_{\bK}^+\textrm{ and }n\ge1.
\end{equation*}
\end{defn}

Then we will need the following dual form of Elsner's reduction theorem later:

\begin{lem}\label{lem2.5}
If System~$(\ref{eq1.1})$ is product co-unbounded such that $\widehat{\pmb{\rho}}_{\mathrm{co}}^{}=1$, then it is reducible in the sense of Elsner.
\end{lem}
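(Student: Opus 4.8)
The plan is to reduce the statement to Elsner's reduction lemma applied to the inverse system $\{S_1^{-1},\dotsc,S_K^{-1}\}$, and then to transfer the resulting invariant subspace back to $\{S_1,\dotsc,S_K\}$ using nonsingularity. The whole argument is the precise dualization of the product‑bounded case, with one extra step at the end to push the reducibility from the inverses to the original matrices.

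First I would record the elementary identity $\|A\|_{\mathrm{co}}=1/\|A^{-1}\|$, valid for every nonsingular $A\in\mathbb{R}^{d\times d}$ (it follows at once from $\|A^{-1}\|=\max_{x\ne0}\|x\|/\|Ax\|$ after the substitution $x=A^{-1}y$). Applied to a product and combined with $(S_{\sigma(n)}\dotsm S_{\sigma(1)})^{-1}=S_{\sigma(1)}^{-1}\dotsm S_{\sigma(n)}^{-1}$, this gives
\[
\|S_{\sigma(n)}\dotsm S_{\sigma(1)}\|_{\mathrm{co}}=\frac{1}{\|S_{\sigma(1)}^{-1}\dotsm S_{\sigma(n)}^{-1}\|}.
\]
As $\sigma$ runs over $\varSigma_{\bK}^+$ and $n$ over $\mathbb{N}$, the reversed words $S_{\sigma(1)}^{-1}\dotsm S_{\sigma(n)}^{-1}$ range over \emph{exactly} the finite products of the inverse system (given $\sigma,n$, set $\tau(j)=\sigma(n+1-j)$). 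Hence the hypothesis that System~$(\ref{eq1.1})$ is product co-unbounded, i.e. $\inf_{\sigma,n}\|S_{\sigma(n)}\dotsm S_{\sigma(1)}\|_{\mathrm{co}}=0$, is equivalent to the assertion that $\{S_1^{-1},\dotsc,S_K^{-1}\}$ is product unbounded.

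Next, by Lemma~\ref{lem2.2} the joint spectral radius of $\{S_1^{-1},\dotsc,S_K^{-1}\}$ equals $\widehat{\pmb{\rho}}_{\mathrm{co}}^{-1}=1$. Thus the inverse system is product unbounded and has joint spectral radius $1$, so Elsner's reduction lemma furnishes a common, nontrivial, proper, invariant subspace $W\subseteq\mathbb{R}^d$ for all the $S_k^{-1}$, that is $S_k^{-1}W\subseteq W$ for every $k$. To finish, I would transfer $W$ back: since each $S_k^{-1}$ is nonsingular, $\dim(S_k^{-1}W)=\dim W$, so the inclusion $S_k^{-1}W\subseteq W$ is forced to be an equality $S_k^{-1}W=W$; applying $S_k$ yields $S_kW=W\subseteq W$. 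Therefore $W$ is a common, nontrivial, proper, invariant subspace for $\{S_1,\dotsc,S_K\}$, which is precisely reducibility of System~$(\ref{eq1.1})$ in the sense of Elsner.

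The only genuinely non-formal point is the equivalence between product co-unboundedness of the original system and product unboundedness of the inverse system; everything else is a direct translation through Lemma~\ref{lem2.2} or a one-line consequence of nonsingularity. I expect the word-reversal bookkeeping—verifying that the reversed inverse words exhaust, and are exhausted by, all products of the inverse system—to be the step demanding the most care, although it is entirely routine once the substitution $\tau(j)=\sigma(n+1-j)$ is in hand.
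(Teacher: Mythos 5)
Your proposal is correct and follows essentially the same route as the paper's own proof: pass to the inverse system, observe via $\|A\|_{\mathrm{co}}=1/\|A^{-1}\|$ and Lemma~\ref{lem2.2} that it is product unbounded with joint spectral radius $1$, apply Elsner's reduction theorem, and transfer reducibility back to the original matrices. The paper simply asserts the last transfer step as ``clear,'' whereas you spell out the dimension argument and the word-reversal bookkeeping; these are welcome but not substantively different.
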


\begin{proof}
It is clear that $\{S_1,\dotsc,S_K\}$ is reducible if and only if $\{S_1^{-1},\dotsc,S_K^{-1}\}$ is reducible. So we only need prove that $\{S_1^{-1},\dotsc,S_K^{-1}\}$ is reducible. For this, since $\|A\|_{\mathrm{co}}=1/\|A^{-1}\|$ and System~(\ref{eq1.1}) is product co-unbounded with $\widehat{\pmb{\rho}}_{\mathrm{co}}^{}=1$, we can obtain that the inclusion system based on $\{S_1^{-1},\dotsc,S_K^{-1}\}$ is product unbounded with the joint spectral radius $1$. Now the statement comes from Elsner's reduction theorem. This completes the proof of Lemma~\ref{lem2.5}.
\end{proof}
%%%%%%%%%%%%%%%%%%%%%%%%%%%%%%%%%%%%%%%%%%%%%%%%%%%%%%%%%%%%%%%%%%%%%%%%%%%
%%%%%%%%%%%%%%%%%%%%%%%%%%%%%%%%%%%%%%%%%%%%%%%%%%%%%%%%%%%%%%%%%%%%%%%%%%%
\section{Chaotic switching laws}\label{sec3}%%%

This section will be devoted to comparing our definition of fiber-chaotic switching law with the chaos of Balde and Jouan introduced in \cite{BJ}. In addition, we shall introduce the Lyapunov exponents and study the fiber-chaotic dynamics from a viewpoint of ergodic theory.
Let $\{S_1,\dotsc,S_K\}\subset\mathbb{R}^{d\times d}$, not necessarily nonsingular, and then we still consider the induced linear inclusion system
\begin{equation*}%%%
x_n\in\left\{S_1,\dotsc,S_K\right\}x_{n-1},\quad x_0\in\mathbb{R}^d\textrm{ and }n\ge1.\eqno{(\ref{eq1.1})}
\end{equation*}

\subsection{Discrete-time version of chaotic input of Balde and Jouan}\label{sec3.1}%%%
The discrete-time version of a recent definition of chaotic input of Balde and Jouan \cite[Definition~1, p.~1055]{BJ} may be described as follows, which is completely independent of the matrices $S_1,\dotsc,S_K$.

\begin{defn}\label{def3.1}%%%
A switching law $\sigma=(\sigma(n))_{n\ge1}\in\varSigma_{\bK}^+$ is called {\it nonchaotic} in the sense of Balde and Jouan, if to any positive integer sequence $\langle n_i\rangle_{i\ge1}$ with $n_i\nearrow+\infty$ and for any $m\ge1$, there corresponds some $\delta$ with $2\le\delta\le m+1$ such that for all $\ell_0\ge1$, there exists an $\ell\ge \ell_0$ so that $\sigma$ is constant restricted to some subinterval of $[n_{\ell}, n_{\ell}+m]$ of length greater than or equal to $\delta$. If $\sigma$ is not nonchaotic, then we say it is \textit{chaotic} in the sense of Balde and Jouan.
\end{defn}

We note that for any $\varSigma_{\bK}^+$ where $K\ge2$, there always exist Balde-Jouan nonchaotic switching laws, such as the law $\pmb{\sigma}$ given in Lemma~\ref{lem3.4} below is a nontrivial nonchaotic switching law in the sense of Definition~\ref{def3.1}.

Clearly, an eventually constant switching law $\sigma$ (i.e. $\sigma(n)\equiv k$ for all $n$ sufficiently large, for some $1\le k\le K$), is nonchaotic in the sense of Balde and Jouan; meanwhile, it is not fiber-chaotic in the sense of our Definition~\ref{def1.1} too. In fact, we can obtain a more general result.

\begin{prop}\label{prop3.2}%%%
If $\sigma\in\varSigma_{\bK}^+$ is an eventually periodical switching law, then it is not fiber-chaotic for System $(\ref{eq1.1})$ in the sense of Definition~\ref{def1.1}.
\end{prop}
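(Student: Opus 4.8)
The plan is to prove non-fiber-chaoticity by exhibiting a \emph{single} nonzero initial state $x_0$ for which one of the two conditions of Definition~\ref{def1.1} fails; since fiber-chaos requires both conditions for \emph{every} $x_0\neq0$, one witness is enough. Write eventual periodicity as: there are integers $N\ge1$ and $p\ge1$ with $\sigma(n+p)=\sigma(n)$ for all $n\ge N$. Introduce the transient factor $M=S_{\sigma(N-1)}\dotsm S_{\sigma(1)}$ (with $M=I$ when $N=1$), the one-period product $P=S_{\sigma(N+p-1)}\dotsm S_{\sigma(N)}$, and the partial products $Q_j=S_{\sigma(N+j-1)}\dotsm S_{\sigma(N)}$ for $0\le j\le p-1$ (so $Q_0=I$). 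Periodicity then gives, for $n=N-1+kp+j$ with $0\le j\le p-1$, the closed form $x_n(x_0,\sigma)=Q_jP^k(Mx_0)$, so the entire tail of the trajectory is governed by the single fixed matrix $P$ acting on $z_0:=Mx_0$.

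First I would dispose of the degenerate case $M$ singular (which can occur here, since Section~\ref{sec3} permits singular $S_k$): choosing any $x_0\in\ker M\setminus\{0\}$ forces $x_n(x_0,\sigma)=0$ for all $n\ge N-1$, hence $\limsup_n\|x_n(x_0,\sigma)\|=0\neq\infty$ and $\sigma$ is not fiber-chaotic. So assume $M$ nonsingular; then $z_0=Mx_0$ runs bijectively over $\mathbb{R}^d$ and it suffices to analyse $\|Q_jP^kz_0\|$ for a well-chosen $z_0$. I would now split on $\rho(P)$. If $\rho(P)<1$ then $\|P^k\|\to0$, so $\|x_n(x_0,\sigma)\|=\|Q_jP^kz_0\|\le(\max_j\|Q_j\|)\,\|P^k\|\,\|z_0\|\to0$ for \emph{every} $x_0$; thus $\limsup_n\|x_n\|=0$ and non-chaos follows at once.

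If instead $\rho(P)\ge1$, fix an eigenvalue $\lambda$ of $P$ with $|\lambda|=\rho(P)$; note $\lambda\neq0$. When $\lambda$ is real I take a real eigenvector $v$ and set $z_0=v$ (i.e. $x_0=M^{-1}v\neq0$), so $P^kv=\lambda^kv$ and $\|P^kv\|=\rho^k\|v\|$. When $\lambda$ is non-real I take the two-dimensional real $P$-invariant plane $V$ spanned by the real and imaginary parts of a complex eigenvector; there $P|_V=\rho\,U$ with $U$ conjugate over $\mathbb{R}$ to a rotation, which yields a constant $c_1>0$ with $\|P^kv\|\ge c_1\rho^k\|v\|\ge c_1\|v\|>0$ for every $0\neq v\in V$. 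Pulling these estimates through the finitely many $Q_j$ will give a uniform positive lower bound $\liminf_n\|x_n(x_0,\sigma)\|>0$, contradicting the requirement $\liminf=0$.

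The one genuinely delicate point — and exactly where singular $S_k$'s threaten the argument — is the lower bound on the intra-period iterates $Q_jP^kz_0$: a priori some $Q_j$ could be singular and annihilate the chosen eigendirection, producing a subsequence tending to $0$ and thereby restoring $\liminf=0$. The observation that rescues the proof is that this cannot happen for our $v$: writing $P=R_jQ_j$ with $R_j=S_{\sigma(N+p-1)}\dotsm S_{\sigma(N+j)}$, any $u\in V$ with $Q_ju=0$ would satisfy $Pu=R_jQ_ju=0$, impossible since $P|_V$ is nonsingular (its eigenvalues have modulus $\rho\neq0$); hence $Q_j$ is injective on $V$. Consequently $\min_{0\le j\le p-1}\big(\min_{u\in V,\,\|u\|=1}\|Q_ju\|\big)>0$, so $\|Q_jP^kv\|\ge c\,\rho^k\|v\|\ge c\|v\|>0$ uniformly in $j$ and $k$, giving $\liminf_n\|x_n(x_0,\sigma)\|>0$ and completing the proof. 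I expect this uniform-in-$j$ positivity to be the main obstacle; everything else is routine bookkeeping with the period decomposition.
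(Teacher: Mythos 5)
Your proof is correct and follows essentially the same route as the paper's: split on whether the one-period product $P$ has spectral radius below $1$ (decay of all trajectories) or at least $1$ (an eigendirection along which $\liminf\|x_n\|>0$), after disposing of a singular transient. You are in fact more careful than the paper on two points it glosses over — the non-real leading eigenvalue, handled via the real invariant $2$-plane, and the uniform lower bound over the intra-period factors $Q_j$ — but the underlying argument is the same.
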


\begin{proof}
Since $\sigma$ is eventually periodical, it can be written as
\begin{equation*}
\sigma=(\sigma(1),\dotsc,\sigma(n_0),\uwave{k_1,\dotsc,k_\pi},\uwave{k_1,\dotsc,k_\pi},\dotsc)
\end{equation*}
for some integer $n_0\ge0$ and some word $(k_1,\dotsc,k_\pi)\in\bK^\pi$ of length $\pi\ge1$.
If $S_{\sigma(n_0)}\dotsm S_{\sigma(1)}$ is singular, then $\sigma$ is obviously not fiber-chaotic. Thus we assume $S_{\sigma(n_0)}\dotsm S_{\sigma(1)}$ is nonsingular from now on.

Simply set $A=S_{k_\pi}\dotsm S_{k_1}$. If the spectral radius $\rho(A)$ of $A$ is less than $1$, then from the classical Gel'fand spectral-radius formula
\begin{equation*}\begin{split}
\lim_{n\to+\infty}\frac{1}{n}\log\|S_{\sigma(n)}\dotsm S_{\sigma(1)}\|&=\lim_{n\to+\infty}\frac{1}{n}\log\|S_{\sigma(n+n_0)}\dotsm S_{\sigma(1+n_0)}\|\\
&=\frac{1}{\pi}\log\rho(A)\\
&<0.
\end{split}\end{equation*}
So,
\begin{equation*}
\lim_{n\to+\infty}\|x_n(x_0,\sigma)\|=0\quad\forall x_0\in\mathbb{R}^d\setminus\{0\},
\end{equation*}
which means that $\sigma$ is not fiber-chaotic for System (\ref{eq1.1}) in the sense of Definition~\ref{def1.1}. If $\rho(A)\ge1$, then one can find a unit vector $\bx_0\in\mathbb{R}^d$ and an eigenvalue $\lambda$ of $A$ with $|\lambda|\ge1$ such that
$$
A^n\bx_0=\lambda^n\bx_0\quad\forall n\ge1,
$$
which implies that for $u=(S_{\sigma(n_0)}\dotsm S_{\sigma(1)})^{-1}\bx_0$,
$$
\liminf_{n\to+\infty}\|x_n(u,\sigma)\|>0,
$$
and so $\sigma$ is not fiber-chaotic for System (\ref{eq1.1}) in the sense of Definition~\ref{def1.1}.

This concludes the statement of Proposition~\ref{prop3.2}.
\end{proof}

However, a nontrivial eventually periodical switching law is always chaotic of Balde and Jouan from Lemma~\ref{lem3.3} below; for example, $\sigma=(\sigma(1)\dotsm\sigma(n_0)121212\dotsm)$.

Balde and Jouan's definition~\ref{def3.1} of chaos only depends on the single switching law $\sigma$ and ignores the structure of System $(\ref{eq1.1})$, which is not enough to capture the essential of chaos of System $(\ref{eq1.1})$. The following lemma gives the key property of a Balde-Jouan nonchaotic switching law.

\begin{lem}\label{lem3.3}%%
Let $\sigma\in\varSigma_{\bK}^+$ be a nonchaotic switching law in the sense of Balde and Jouan. Then, there exists some symbol $k\in\{1,\dotsc,K\}$ such that for any $\ell\ge1$ and any $\ell^\prime\ge1$, there exists an $n_\ell\ge\ell^\prime$ which satisfies $\sigma(n_\ell+1)=\dotsm=\sigma(n_\ell+\ell)=k$.
\end{lem}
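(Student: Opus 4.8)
The plan is to recast the conclusion combinatorially and then feed it out of Definition~\ref{def3.1}. Note first that the assertion is \emph{exactly} the statement that a single symbol $k\in\bK$ occurs in constant blocks of \emph{unbounded} length, with these blocks recurring arbitrarily far out: the clause ``for any $\ell\ge1$ and any $\ell'\ge1$ there is $n_\ell\ge\ell'$ with $\sigma(n_\ell+1)=\dotsm=\sigma(n_\ell+\ell)=k$'' says precisely that, for each prescribed length $\ell$, constant blocks of $k$'s of length $\ge\ell$ appear beyond every threshold $\ell'$. So it suffices to produce one symbol $k$ whose constant blocks are simultaneously arbitrarily long and cofinal in $\mathbb{N}$.

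First I would harvest arbitrarily long constant blocks from nonchaoticity. Writing, for $L\ge1$, $P_L=\{p\ge1:\sigma(p)=\sigma(p+1)=\dotsm=\sigma(p+L-1)\}$ for the set of left endpoints of constant blocks of length $\ge L$, the goal of this step is to show that every $P_L$ is infinite (equivalently, unbounded). I would obtain this by applying Definition~\ref{def3.1} to a test sequence $\langle n_i\rangle$ threaded through the left endpoints of the constant blocks of $\sigma$ and letting the window size $m$ grow: the clause ``for all $\ell_0$ there exists $\ell\ge\ell_0$'' furnishes, for each fixed $m$, infinitely many windows $[n_\ell,n_\ell+m]$ carrying a constant sub-block of length at least the associated $\delta\in\{2,\dotsc,m+1\}$, and as $m\to+\infty$ this upgrades to constant blocks of every length occurring at arbitrarily large positions. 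Equivalently, and more robustly, I would argue by contradiction: if some $P_{L_0}$ were finite then, by finiteness of the alphabet $\bK$, all constant blocks of $\sigma$ past some position $N$ would have length bounded by a single constant, and from such a bounded-run regime one can build a test sequence together with a window size that violates the defining property of Definition~\ref{def3.1}, forcing $\sigma$ to be chaotic -- a contradiction. Note the two regimes $P_L$ infinite for all $L$ cover both the eventually constant case (one infinite block) and the genuinely oscillating case (infinitely many distinct long blocks).

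Next comes the pigeonhole that isolates a single symbol. Since the sets $P_1\supseteq P_2\supseteq\dotsm$ are all infinite, for each $L$ the map $p\mapsto\sigma(p)$ sends the infinite set $P_L$ into the finite set $\bK$, so some symbol $k(L)\in\bK$ is the common value of $\sigma$ on infinitely many $p\in P_L$. As $k(\cdot)\colon\mathbb{N}\to\bK$ again takes finitely many values, a single symbol $k^\ast$ satisfies $k(L_t)=k^\ast$ along a subsequence $L_1<L_2<\dotsm\to+\infty$. Thus, for every $t$, there are infinitely many $p$ with $\sigma(p)=\dotsm=\sigma(p+L_t-1)=k^\ast$: arbitrarily long constant blocks of the fixed symbol $k^\ast$, occurring at positions tending to $+\infty$. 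Finally, given $\ell\ge1$ and $\ell'\ge1$, I would choose $t$ with $L_t\ge\ell$ and then pick such a block with left endpoint $p\ge\ell'+1$ (possible since these endpoints form an infinite set); setting $n_\ell:=p-1\ge\ell'$ gives $\sigma(n_\ell+1)=\dotsm=\sigma(n_\ell+\ell)=k^\ast$, which is the desired conclusion with $k=k^\ast$.

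The main obstacle is the first step. The defining property only delivers a constant sub-block of length $\ge\delta$ with $2\le\delta\le m+1$, and a priori $\delta$ might stagnate at $2$ for every $m$; the genuine content of the lemma therefore lies in forcing the block length to grow without bound, either by choosing the test sequence $\langle n_i\rangle$ adapted to $\sigma$'s block structure and sending $m\to+\infty$, or by ruling out a uniformly bounded run-length through the contradiction sketched above. Once unbounded, cofinal blocks are in hand, the double pigeonhole isolating $k^\ast$ and the final index bookkeeping are entirely routine.
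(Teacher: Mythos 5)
Your architecture is sound but genuinely different from the paper's, and the difference matters for where the work lands. The paper fixes the symbol $k$ \emph{first}: it pigeonholes an increasing sequence $\langle n_i\rangle$ with $\sigma(n_i)=k$ for all $i$, then invokes the $m=1$ clause of Definition~\ref{def3.1} --- which, for \emph{any} prescribed test sequence, forces $\sigma(n_\ell)=\sigma(n_\ell+1)$ for infinitely many $\ell$ --- to get runs of $k$ of length $2$, and bootstraps by reapplying the same clause to the shifted sequences $\langle n_i+1\rangle$, $\langle n_i+2\rangle,\dotsc$, lengthening the run by one symbol per iteration. You instead try to establish unbounded, cofinal runs of \emph{unspecified} symbols (every $P_L$ infinite) and only afterwards isolate $k^\ast$ by a double pigeonhole. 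Your pigeonhole and the final index bookkeeping are correct and routine, as you say.

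The gap is exactly where you locate it, and you do not close it. Your first route --- threading a test sequence through block endpoints and letting $m$ grow --- fails for the reason you concede: Definition~\ref{def3.1} only supplies \emph{some} $\delta\in\{2,\dotsc,m+1\}$, and $\delta=2$ may serve for every $m$, so enlarging the window never lengthens the guaranteed run. Your fallback contradiction rests on the unproved assertion that a bounded-run regime yields a test sequence and a window size violating the definition; but for every $m\ge2$ this is false in general. For instance, a word in which every maximal run has length exactly $2$ places an adjacent equal pair inside every window of $m+1\ge3$ consecutive positions, so the $\delta=2$ clause is satisfied for \emph{every} test sequence once $m\ge2$; bounded runs alone produce no violation there. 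The only violation available is at $m=1$, taking the test sequence to run through the block boundaries $\{p\colon\sigma(p)\ne\sigma(p+1)\}$ (an infinite set once runs are bounded), for which no window $[n_\ell,n_\ell+1]$ carries a constant sub-block of cardinality $2$. You must make this choice explicit --- and note that this $m=1$ observation is precisely the engine of the paper's own proof. With it in place your contradiction closes, every $P_L$ is infinite, and the remainder of your argument goes through.
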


\begin{proof}
First, for the nonchaotic switching law $\sigma$ we can choose a sequence $\langle n_i\rangle_{i\ge1}\nearrow+\infty$ and some $k\in\{1,\dotsc,K\}$, which are such that $n_{i+1}-n_i\nearrow+\infty$ and $\sigma(n_i)=k$ for all $i\ge1$. Now from Definition~\ref{def3.1} with $m=1$, it follows that we can choose a subsequence of $\langle n_i\rangle_{i\ge1}$, still write, without loss of generality, as $\langle n_i\rangle_{i\ge1}$, such that $\sigma(n_i)=\sigma(n_i+1)=k$ for all $i\ge1$. Repeating this procedure for $\langle n_{i}+1\rangle_{i\ge1}$, we can prove the statement of Lemma~\ref{lem3.3}.
\end{proof}

However, our fiber-chaotic property is a kind of dynamical behavior, which discovers the complexity of the structure of the trajectories of System~(\ref{eq1.1}), as to be shown by Lemma~\ref{lem3.4} below. And from Proposition~\ref{prop3.2}, it also depends on the topological structure of the switching law $\sigma$ itself.

\begin{lem}\label{lem3.4}%%%
Let System $(\ref{eq1.1})$ be a $1$-dimensional system defined by
$S_1=\frac{1}{2}$ and $S_2=2$.
Then for System $(\ref{eq1.1})$, the switching law $\pmb{\sigma}\in\varSigma_{\bK}^+$ given as
\begin{equation*}
\pmb{\sigma}=(11,2222,\stackrel{2^{3}\textrm{-folds}}{\overbrace{1\dotsm1}},\stackrel{2^{4}\textrm{-folds}}{\overbrace{2\dotsm2}},\dotsc,\stackrel{2^{2n-1}\textrm{-folds}}{\overbrace{1\dotsm 1}},\stackrel{2^{2n}\textrm{-folds}}{\overbrace{2\dotsm 2}},\dotsc)
\end{equation*}
is fiber-chaotic under the sense of Definition~\ref{def1.1}, but $\pmb{\sigma}$ is nonchaotic in the sense of Balde and Jouan.
\end{lem}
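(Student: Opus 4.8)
The plan is to pass to logarithms, reduce the fiber--chaos assertion to a one--dimensional sign--oscillation of a partial sum, and then read off the Balde--Jouan (non)chaoticity directly from the block structure of $\pmb{\sigma}$. Since $d=1$ and $S_1=\frac12,\ S_2=2$, the trajectory is simply $x_n(x_0,\pmb{\sigma})=x_0\prod_{k=1}^{n}S_{\pmb{\sigma}(k)}$, so that
\[
\log_2\|x_n(x_0,\pmb{\sigma})\|=\log_2|x_0|+D_n,\qquad D_n:=\#\{k\le n:\pmb{\sigma}(k)=2\}-\#\{k\le n:\pmb{\sigma}(k)=1\}.
\]
Because the fiber--chaos condition of Definition~\ref{def1.1} only involves $\|x_n\|$, and the additive constant $\log_2|x_0|$ is irrelevant to $\liminf$/$\limsup$, it suffices to prove that $\liminf_n D_n=-\infty$ and $\limsup_n D_n=+\infty$, which then holds uniformly for every $x_0\neq 0$.

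Next I would evaluate $D_n$ along the block endpoints. The $m$-th block of $\pmb{\sigma}$ has length $2^m$ and consists of the symbol $1$ when $m$ is odd and of the symbol $2$ when $m$ is even, so it changes $D$ by exactly $(-2)^m$. Hence, writing $B_m$ for the value of $D$ at the end of the $m$-th block,
\[
B_m=\sum_{k=1}^{m}(-2)^k=\tfrac{2}{3}\bigl((-2)^m-1\bigr).
\]
Thus $B_m\to+\infty$ along even $m$ and $B_m\to-\infty$ along odd $m$, giving $\limsup_n D_n=+\infty$ and $\liminf_n D_n=-\infty$; equivalently $\limsup_n\|x_n(x_0,\pmb{\sigma})\|=\infty$ and $\liminf_n\|x_n(x_0,\pmb{\sigma})\|=0$ for all $x_0\neq 0$, i.e.\ $\pmb{\sigma}$ is fiber--chaotic. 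This step is a routine geometric--sum computation.

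For the non-chaoticity in the sense of Balde and Jouan I would verify Definition~\ref{def3.1} directly from the observation that every maximal constant block of $\pmb{\sigma}$ has length at least $2$. Consequently no index of $\pmb{\sigma}$ carries an isolated symbol, so $\pmb{\sigma}$ never exhibits three consecutive strictly alternating entries $a,b,a$ with $a\neq b$. Therefore, given any probing sequence $\langle n_i\rangle\nearrow+\infty$ and any window size $m$, every window $[n_\ell,n_\ell+m]$ containing at least three indices must contain two equal adjacent entries, so the choice $\delta=2$ satisfies the required condition (indeed for every such $\ell$). I would also note, to align with the structural picture of Lemma~\ref{lem3.3}, that the symbol $2$ already occurs in $\pmb{\sigma}$ in runs of length $2^{2n}\to\infty$ located at arbitrarily late positions, which is exactly the hallmark of a non-chaotic law.

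The main obstacle is not the fiber--chaos computation but the careful bookkeeping in the Balde--Jouan verification: one must make the choice of $\delta$ together with infinitely many admissible indices $\ell$ work for an \emph{arbitrary} probing sequence $\langle n_i\rangle$. The delicate point is the shortest windows, where a probing sequence could be placed on the (increasingly sparse) block boundaries; here one leans precisely on the facts that all blocks of $\pmb{\sigma}$ have length at least $2$ and that the block lengths grow without bound, so that long windows are eventually swallowed by a single constant block and the purely alternating patterns that would otherwise force chaoticity are excluded.
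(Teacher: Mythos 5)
Your fiber-chaos computation is correct and complete: writing $\log_2\|x_n(x_0,\pmb{\sigma})\|=\log_2|x_0|+D_n$, noting that the $m$-th block contributes $(-2)^m$ to $D$, and evaluating $B_m=\frac{2}{3}\bigl((-2)^m-1\bigr)$ at block endpoints does give $\liminf_nD_n=-\infty$ and $\limsup_nD_n=+\infty$, hence fiber-chaoticity for every $x_0\neq0$. The paper offers no proof of this lemma at all (it says the statement "comes easily" from the definitions and omits the details), so your argument is a legitimate way of filling that in.

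The Balde--Jouan half, however, has a genuine gap at $m=1$. Your key observation --- no three consecutive entries of $\pmb{\sigma}$ form a pattern $a,b,a$ with $a\neq b$, so every window with at least three indices contains two adjacent equal symbols --- correctly handles all $m\ge2$ with $\delta=2$. But Definition~\ref{def3.1} quantifies over \emph{all} $m\ge1$, and for $m=1$ the window $[n_\ell,n_\ell+m]$ contains only two indices, the only admissible value is $\delta=2$, and the requirement becomes $\pmb{\sigma}(n_\ell)=\pmb{\sigma}(n_\ell+1)$ for infinitely many $\ell$. This fails for the probing sequence $n_i=2^{i+1}-2$ that sits exactly on the block boundaries: there $\pmb{\sigma}(n_i)\neq\pmb{\sigma}(n_i+1)$ for \emph{every} $i$. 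Your closing remark that long windows are eventually swallowed by a single constant block does not rescue this case, since this adversarial sequence keeps every window straddling a boundary and the window is too short to contain two points on one side of it. So the assertion "the choice $\delta=2$ satisfies the required condition (indeed for every such $\ell$)" is proved only for $m\ge2$; under the literal reading of Definition~\ref{def3.1} the $m=1$ clause cannot be satisfied by $\pmb{\sigma}$ at all. To close the argument you must either treat $m=1$ explicitly (and the example above shows this is impossible as the definition is written) or state that the discrete definition is to be read for $m\ge2$, which is the faithful discretization of Balde--Jouan's continuous-time notion, where a window around a single switching time always contains a constant subinterval of length at least half the window on one side. As written, the proof is incomplete on exactly the point you yourself flag as the delicate one.
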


\begin{proof}
The statement comes easily from Definitions~\ref{def1.1} and \ref{def3.1} and we thus omit the details here.
\end{proof}

In fact, we can show this system is generically fiber-chaotic under the sense of Definition~\ref{def1.1} from Theorem~\ref{thm1.2}.
Although the chaos of Balde and Jouan is independent of the matrices $S_1,\dotsc,S_K$, yet if they share a common Lyapunov matrix $P$ such that $S_k^TPS_k-P\le0$ for $1\le k\le K$ (which implies that $\|S_k\|_P\le1$ for all $1\le k\le K$), then it is useful for us to capture the asymptotic stability of System~(\ref{eq1.1}) governed by Balde-Jouan nonchaotic switching laws $\sigma$; for example, see \cite{BJ} and also Theorem~\ref{thm5.5} below.

%%%%%%%%%%%%%%%%%%%%%%%%%%%%%%%%%%%%%%%%%%%%%%%%%%%%%%%%%%%%
\subsection{Lyapunov exponents and an ergodic-theoretic viewpoint}\label{sec3.2}%%%
Next using the Lyapunov exponents, we will study a case where the fiber-chaotic behavior does not occur from the ergodic-theoretic viewpoint. For that, let
\begin{equation*}%%%
\theta^+\colon\varSigma_{\bK}^+\rightarrow\varSigma_{\bK}^+;\quad \sigma=(\sigma(n))_{n\ge1}\mapsto\theta^+(\sigma)=(\sigma(n+1))_{n\ge1}
\end{equation*}
be the classical one-sided shift transformation on the compact metrizable space $\varSigma_{\bK}^+$ of all the possible switching laws of System (\ref{eq1.1}), as before.
Then $x_n(x_0,\sigma)=S_{\sigma(n)}\dotsm S_{\sigma(1)}x_0$ for any data $(x_0,\sigma)\in\mathbb{R}^d\times\varSigma_{\bK}^+$.

For reader's convenience, we restate the notion of Lyapunov exponent following the standard way:

\begin{defn}\label{def3.5}%%%
For any switching law $\sigma\in\varSigma_{\bK}^+$ and any nonzero initial state $x_0\in\mathbb{R}^d$, the number
\begin{equation*}
\lambda(x_0,\sigma):=\lim_{n\to\infty}\frac{1}{n}\log\|x_n(x_0,\sigma)\|\in[-\infty, \infty)\quad (\textrm{if the limit exists})
\end{equation*}
is called a \textit{Lyapunov exponent} of System~(\ref{eq1.1}) at $\sigma$ or the \textit{Lyapunov exponent} of System~(\ref{eq1.1}) at the data $(x_0,\sigma)$. If the above limit does not exist for some nonzero initial state $x_0$, then we say that $\sigma$ is a Lyapunov \textit{irregular} point of System~(\ref{eq1.1}).
\end{defn}

Let $\{\theta^+\in B\}=\{\sigma\in\varSigma_{\bK}^+\colon\theta^+(\sigma)\in B\}$. A probability measure $\mu$ on the Borel measurable space $(\varSigma_{\bK}^+,\mathscr{B}(\varSigma_{\bK}^+))$ is said to be \textit{invariant} with respect to $\theta^+$ if and only if $\mu(B)=\mu(\{\theta^+\in B\})$ for all $B\in\mathscr{B}(\varSigma_{\bK}^+)$; further an invariant probability measure $\mu$ is called \textit{ergodic} with respect to $\theta^+$ if either $\mu(B)=0$ or $\mu(B)=1$ whenever $\mu(B\vartriangle\{\theta^+\in B\})=0$, where $\vartriangle$ stands for the symmetric difference of two subsets. Cf., e.g., \cite{Wal}

According to the celebrated Oselede\v{c} multiplicative ergodic theorem~\cite{Ose}, for any $\theta^+$-ergodic measure $\mu$ we can obtain this result: There are $r$ constants (which are called the Lyapunov exponents of System (\ref{eq1.1}) at $\mu$)
$$-\infty\le\lambda_1<\dotsm<\lambda_r<+\infty,\quad \textrm{where }1\le r\le d,$$
and invariant linear subspaces (filtration) of $\mathbb{R}^d$
$$
\{0\}=V^{(0)}(\sigma)\subset V^{(1)}(\sigma)\subset\dotsm\subset V^{(r)}(\sigma)=\mathbb{R}^d\quad (\mu\textrm{-a.e. }\sigma\in\varSigma_{\bK}^+),
$$
such that for $\mu\textrm{-a.e. }\sigma\in\varSigma_{\bK}^+$,
$$
\lambda_i=\lim_{n\to\infty}\frac{1}{n}\log\|x_n(x_0,\sigma)\|\quad \forall x_0\in V^{(i)}(\sigma)\setminus V^{(i-1)}(\sigma),
$$
for all $1\le i\le r$. Here $\lambda_1$ and $\lambda_r$ are called the \textit{minimal} and \textit{maximal} Lyapunov exponents of System (\ref{eq1.1}) at $\mu$, respectively.

\begin{prop}\label{prop3.6}%%%
Let $\mu$ be an ergodic probability measure of the one-sided shift $\theta^+$ on $\varSigma_{\bK}^+$. If System $(\ref{eq1.1})$ has a nonzero Lyapunov exponent at $\mu$, then for $\mu$-a.e. $\sigma\in\varSigma_{\bK}^+$ it is not fiber-chaotic for System $(\ref{eq1.1})$ in the sense of Definition~\ref{def1.1}.
\end{prop}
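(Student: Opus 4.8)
The plan is to read the conclusion directly off the Oseledec multiplicative ergodic theorem recalled just above, exploiting the key asymmetry between fiber chaos and the existence of a nonzero Lyapunov exponent. Being fiber-chaotic is an extremely demanding property: by Definition~\ref{def1.1} it requires the sharp oscillation $\liminf_n\|x_n(x_0,\sigma)\|=0$ together with $\limsup_n\|x_n(x_0,\sigma)\|=\infty$ to hold for \emph{every} nonzero initial state $x_0$. Consequently, to certify that a given $\sigma$ is \emph{not} fiber-chaotic it suffices to produce a \emph{single} nonzero $x_0$ whose trajectory fails one of these two equalities, and a nonzero Lyapunov exponent will furnish exactly such a distinguished direction.

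First I would apply Oseledec's theorem to the ergodic measure $\mu$ to obtain, for $\mu$-a.e. $\sigma\in\varSigma_{\bK}^+$, the exponents $-\infty\le\lambda_1<\dots<\lambda_r<+\infty$ together with the invariant filtration $\{0\}=V^{(0)}(\sigma)\subset\dots\subset V^{(r)}(\sigma)=\mathbb{R}^d$ along which $\frac1n\log\|x_n(x_0,\sigma)\|\to\lambda_i$ whenever $x_0\in V^{(i)}(\sigma)\setminus V^{(i-1)}(\sigma)$. Since the hypothesis provides some $\lambda_i\neq0$, and since $\lambda_r$ and $\lambda_1$ are the maximal and minimal exponents, we must be in at least one of the two cases $\lambda_r>0$ or $\lambda_1<0$.

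In the case $\lambda_r>0$, for $\mu$-a.e. $\sigma$ I would pick any $x_0\in\mathbb{R}^d\setminus V^{(r-1)}(\sigma)$, a nonempty set since $V^{(r-1)}(\sigma)$ is a proper subspace of $\mathbb{R}^d$. Then $\frac1n\log\|x_n(x_0,\sigma)\|\to\lambda_r>0$ forces $\|x_n(x_0,\sigma)\|\to+\infty$, so that $\liminf_n\|x_n(x_0,\sigma)\|=+\infty\neq0$ and $\sigma$ violates Definition~\ref{def1.1}. Symmetrically, in the case $\lambda_1<0$ (possibly $\lambda_1=-\infty$, which may occur when the $S_k$ are singular), I would pick any $x_0\in V^{(1)}(\sigma)\setminus\{0\}$; then $\frac1n\log\|x_n(x_0,\sigma)\|\to\lambda_1<0$ forces $\|x_n(x_0,\sigma)\|\to0$, so that $\limsup_n\|x_n(x_0,\sigma)\|=0\neq\infty$, and again $\sigma$ is not fiber-chaotic. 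As one of the two cases always applies on the full-measure set supplied by Oseledec's theorem, the conclusion holds for $\mu$-a.e. $\sigma$.

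I do not expect a serious obstacle: the argument is essentially an immediate reading of Oseledec's theorem once one observes that fiber chaos demands oscillation \emph{in every direction}, whereas a nonzero exponent produces a purely expanding or purely contracting direction. The only point needing mild care is the bookkeeping when $\lambda_1=-\infty$ in the singular setting of this section, but that case only accelerates the decay and does not alter the conclusion $\limsup_n\|x_n(x_0,\sigma)\|=0$.
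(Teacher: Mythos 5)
Your proposal is correct and follows essentially the same route as the paper: both arguments simply invoke the Oselede\v{c} theorem to extract, for $\mu$-a.e.\ $\sigma$, a nonzero vector whose trajectory either converges to $0$ (when the nonzero exponent is negative, killing $\limsup_n\|x_n\|=\infty$) or diverges to $\infty$ (when it is positive, killing $\liminf_n\|x_n\|=0$), which already contradicts Definition~\ref{def1.1}. The only cosmetic difference is that you phrase the choice of direction via the extremal exponents $\lambda_1,\lambda_r$ and the filtration, while the paper takes a vector realizing the given nonzero exponent directly; the substance is identical.
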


\begin{proof}
Let $\lambda<0$ be a Lyapunov exponent of System (\ref{eq1.1}) at the ergodic measure $\mu$. Then from the Oselede\v{c} multiplicative ergodic theorem~\cite{Ose}, it follows that for $\mu$-a.e. $\sigma\in\varSigma_{\bK}^+$ there exists a corresponding unit vector, say $u=u(\sigma)\in\mathbb{R}^d$, such that
$$
\lim_{n\to+\infty}\frac{1}{n}\log\|S_{\sigma(n)}\dotsm S_{\sigma(1)}u\|=\lambda.
$$
So $\limsup_{n\to+\infty}\|x_n(u,\sigma)\|=0$.
This shows that for $\mu$-almost every $\sigma\in\varSigma_{\bK}^+$, it is not fiber-chaotic for System (\ref{eq1.1}) in the sense of Definition~\ref{def1.1} because of the lack of the distal property $\mathrm{(\ref{eq1.3b})}$.

Similarly, if System (\ref{eq1.1}) has a Lyapunov exponent $\lambda>0$ at $\mu$ then
$\liminf_{n\to+\infty}\|x_n(u(\sigma),\sigma)\|=+\infty$ for $\mu$-a.e. $\sigma\in\varSigma_{\bK}^+$.
So, there is no the proximal property $\mathrm{(\ref{eq1.3a})}$ for $\mu$-a.e. $\sigma\in\varSigma_{\bK}^+$.

This thus completes the proof of Proposition~\ref{prop3.6}.
\end{proof}

This result shows that relative to each ergodic measure which has at least one nonzero Lyapunov exponent, the set of fiber-chaotic switching laws for System~(\ref{eq1.1}) is ignorable from the viewpoint of ergodic theory.

Let $a_n=n^{(-1)^n}$ for $n=1,2,\dotsc$; then we have
$$
\liminf_{n\to\infty}a_n=0,\quad\limsup_{n\to\infty}a_n=\infty,\quad \textrm{but }\lim_{n\to\infty}\frac{1}{n}\log a_n=0.
$$
Because of this reason, we cannot assert the statement of Proposition~\ref{prop3.6} in the case that System~(\ref{eq1.1}) has only zero Lyapunov exponent at some ergodic measure $\mu$, even if for $d=1$.

From the above arguments, we can easily obtain the following:

\begin{cor}
Let $\sigma\in\varSigma_{\bK}^+$ be fiber-chaotic for System~$(\ref{eq1.1})$. Then either $\sigma$ is a Lyapunov irregular point or System~$(\ref{eq1.1})$ has only zero exponents at $\sigma$.
\end{cor}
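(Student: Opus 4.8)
The plan is to show directly that fiber-chaos forces every \emph{existing} Lyapunov exponent at $\sigma$ to vanish; the stated dichotomy then drops out immediately. Concretely, I would fix a fiber-chaotic $\sigma\in\varSigma_{\bK}^+$ and suppose, aiming at the second alternative, that the limit $\lambda(x_0,\sigma)=\lim_{n\to\infty}\frac{1}{n}\log\|x_n(x_0,\sigma)\|$ of Definition~\ref{def3.5} exists for a given nonzero $x_0\in\mathbb{R}^d$. The goal is to prove $\lambda(x_0,\sigma)=0$.

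The argument splits into two cases, each exploiting the elementary relation between the exponential growth rate and the coarse asymptotics of the norm already used in the proof of Proposition~\ref{prop3.6}. If $\lambda(x_0,\sigma)>0$, pick any $c$ with $0<c<\lambda(x_0,\sigma)$; then for all large $n$ one has $\|x_n(x_0,\sigma)\|>e^{cn}\to\infty$, so $\liminf_{n\to\infty}\|x_n(x_0,\sigma)\|=\infty$, contradicting the proximal requirement $\liminf_{n\to\infty}\|x_n(x_0,\sigma)\|=0$ of Definition~\ref{def1.1}. Symmetrically, if $\lambda(x_0,\sigma)<0$ (allowing the value $-\infty$), pick $c$ with $\lambda(x_0,\sigma)<c<0$; then $\|x_n(x_0,\sigma)\|<e^{cn}\to0$ for large $n$, whence $\limsup_{n\to\infty}\|x_n(x_0,\sigma)\|=0$, contradicting the distal requirement $\limsup_{n\to\infty}\|x_n(x_0,\sigma)\|=\infty$. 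Therefore $\lambda(x_0,\sigma)=0$ whenever the defining limit exists.

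To finish, I would pass to the dichotomy by examining quantifiers. Either there is some nonzero $x_0$ for which the limit in Definition~\ref{def3.5} fails to exist---and then $\sigma$ is, by that very definition, a Lyapunov irregular point of System~(\ref{eq1.1})---or the limit exists for every nonzero $x_0$, in which case the two-case analysis above shows that each $\lambda(x_0,\sigma)$ equals $0$, i.e.\ System~(\ref{eq1.1}) has only zero exponents at $\sigma$. These two alternatives are mutually exhaustive, which completes the proof.

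I do not expect any genuine obstacle here: the statement is a direct consequence of Definitions~\ref{def1.1} and~\ref{def3.5} together with the trivial fact that a strictly positive (resp.\ strictly negative) exponential rate forces the norm to diverge to $+\infty$ (resp.\ converge to $0$). The only point requiring a little care is the bookkeeping of quantifiers---fiber-chaos is a statement about \emph{all} nonzero initial states, whereas Lyapunov irregularity of $\sigma$ only needs the defining limit to fail for \emph{some} nonzero $x_0$---but this asymmetry is exactly what renders the two alternatives exhaustive rather than leaving a gap.
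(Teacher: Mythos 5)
Your proposal is correct and follows essentially the same route as the paper, which derives this corollary directly ``from the above arguments'' in the proof of Proposition~\ref{prop3.6}: a strictly positive exponent forces $\liminf_{n\to\infty}\|x_n(x_0,\sigma)\|=\infty$ (killing proximality) and a strictly negative one forces $\limsup_{n\to\infty}\|x_n(x_0,\sigma)\|=0$ (killing distality). Your explicit handling of the quantifier dichotomy between irregularity and all-limits-existing is exactly the intended reading.
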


As a simple result of \cite[Corollary~2.7]{Dai-LAA} and the multiplicative ergodic theorem, there holds the following lemma.

\begin{lem}\label{lem3.8}%%%
For System~$(\ref{eq1.1})$, $\log\widehat{\pmb{\rho}}$ is a Lyapunov exponent at some ergodic probability measure $\mu$ of $\theta^+$ on $\varSigma_{\bK}^+$. If $S_1,\dotsc,S_K$ all are nonsingular, then $\log\widehat{\pmb{\rho}}_{\mathrm{co}}^{}$ is also a Lyapunov exponent at some ergodic probability measure $\nu$.
Moreover, $\log\widehat{\pmb{\rho}}$ is just the maximal Lyapunov exponent and $\log\widehat{\pmb{\rho}}_{\mathrm{co}}^{}$ is just the minimal Lyapunov exponent of System~$(\ref{eq1.1})$.
\end{lem}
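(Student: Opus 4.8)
The plan is to regard System~(\ref{eq1.1}) as the linear cocycle $\A^{(n)}(\sigma)=S_{\sigma(n)}\dotsm S_{\sigma(1)}$ driven by the one-sided shift $\theta^+$ on $\varSigma_{\bK}^+$, and to extract the two extreme Lyapunov exponents from the extreme singular values of $\A^{(n)}$, noting that $\|\A^{(n)}(\sigma)\|$ and $\|\A^{(n)}(\sigma)\|_{\mathrm{co}}$ are precisely the largest and smallest singular values. By the multiplicative ergodic theorem, for an ergodic measure $\mu$ the top (resp.\ bottom) Lyapunov exponent at $\mu$ equals $\lim_{n}\frac1n\log\|\A^{(n)}(\sigma)\|$ (resp.\ $\lim_{n}\frac1n\log\|\A^{(n)}(\sigma)\|_{\mathrm{co}}$) for $\mu$-a.e.\ $\sigma$; here nonsingularity of the $S_k$ guarantees $\log\|S_{\sigma(1)}^{\pm1}\|$ are bounded, so the bottom exponent is finite and is genuinely read off from $\|\cdot\|_{\mathrm{co}}$.

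For the first assertion I would invoke \cite[Corollary~2.7]{Dai-LAA}, which supplies an ergodic measure $\mu$ of $\theta^+$ realizing the joint spectral radius, i.e.\ at which the top exponent equals $\log\widehat{\pmb{\rho}}$; together with the multiplicative ergodic theorem this exhibits $\log\widehat{\pmb{\rho}}$ as a bona fide Lyapunov exponent. That it is the \emph{maximal} one is immediate from the definition of $\widehat{\pmb{\rho}}$: for every datum $(x_0,\sigma)$ one has $\|x_n(x_0,\sigma)\|\le\|\A^{(n)}(\sigma)\|\,\|x_0\|\le\big(\max_{\sigma'}\|\A^{(n)}(\sigma')\|\big)\|x_0\|$, so taking $\tfrac1n\log$ and $\limsup$ yields $\limsup_n\tfrac1n\log\|x_n(x_0,\sigma)\|\le\log\widehat{\pmb{\rho}}$, whence no exponent can exceed $\log\widehat{\pmb{\rho}}$.

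The co-radius assertion I would obtain by dualizing through inverses via Lemma~\ref{lem2.2}. The governing identity is $\|\A^{(n)}(\sigma)\|_{\mathrm{co}}=1/\|(\A^{(n)}(\sigma))^{-1}\|$, so the minimal exponent is $-\lim_n\tfrac1n\log\|(\A^{(n)}(\sigma))^{-1}\|$. Now $(\A^{(n)}(\sigma))^{-1}=S_{\sigma(1)}^{-1}\dotsm S_{\sigma(n)}^{-1}$ is written in \emph{reversed} order and so is not itself a forward cocycle; transposing repairs this, since $((\A^{(n)}(\sigma))^{-1})^{\mathrm{T}}=T_{\sigma(n)}\dotsm T_{\sigma(1)}$ with $T_k=(S_k^{-1})^{\mathrm{T}}$ is the forward cocycle of $\{T_1,\dotsc,T_K\}$ and has the same norm. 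Because $\|M\|=\|M^{\mathrm{T}}\|$ and transposition reverses products, the norms of the length-$n$ products of $\{T_k\}$ coincide as a family with those of $\{S_k^{-1}\}$, so $\widehat{\pmb{\rho}}(\{T_k\})=\widehat{\pmb{\rho}}(\{S_k^{-1}\})=\widehat{\pmb{\rho}}_{\mathrm{co}}^{-1}$ by Lemma~\ref{lem2.2}. Applying the first assertion to $\{T_1,\dotsc,T_K\}$ produces an ergodic $\nu$ on $\varSigma_{\bK}^+$ whose top exponent is $\log\widehat{\pmb{\rho}}(\{T_k\})=-\log\widehat{\pmb{\rho}}_{\mathrm{co}}^{}$; translating back through the norm identity shows the bottom exponent of System~(\ref{eq1.1}) at this same $\nu$ equals $\log\widehat{\pmb{\rho}}_{\mathrm{co}}^{}$. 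Minimality then follows from the uniform lower bound $\|x_n(x_0,\sigma)\|\ge\|x_0\|/\|(\A^{(n)}(\sigma))^{-1}\|$ combined with the infimum-characterization of the joint spectral radius of $\{S_k^{-1}\}$, which gives, for each $\varepsilon>0$, a bound $\|(\A^{(n)}(\sigma))^{-1}\|\le(\widehat{\pmb{\rho}}_{\mathrm{co}}^{-1}+\varepsilon)^n$ valid for all large $n$ uniformly in $\sigma$; hence $\liminf_n\tfrac1n\log\|x_n(x_0,\sigma)\|\ge\log\widehat{\pmb{\rho}}_{\mathrm{co}}^{}$ for every datum.

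The main obstacle, and really the only delicate point, is the order-reversal produced by inversion: one must not confuse the backward product $S_{\sigma(1)}^{-1}\dotsm S_{\sigma(n)}^{-1}$ with the forward cocycle of $\{S_k^{-1}\}$. The transpose device above is exactly what converts the reversed inverse product into a genuine forward cocycle (of $\{(S_k^{-1})^{\mathrm{T}}\}$) to which the first assertion applies verbatim. A secondary, routine point is to confirm that the smallest singular-value exponent of the Oseledec filtration is indeed $\lim_n\tfrac1n\log\|\A^{(n)}\|_{\mathrm{co}}$, for which the boundedness of $\log\|S_{\sigma(1)}^{-1}\|$ coming from nonsingularity is precisely what is needed.
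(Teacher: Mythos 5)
Your proposal is correct, and it reaches the conclusion by a genuinely different route at the one place where the argument is delicate, namely the order reversal in $(S_{\sigma(n)}\dotsm S_{\sigma(1)})^{-1}=S_{\sigma(1)}^{-1}\dotsm S_{\sigma(n)}^{-1}$. The paper also begins from \cite[Corollary~2.7]{Dai-LAA} applied to $\{S_1^{-1},\dotsc,S_K^{-1}\}$, but it then resolves the reversal by passing to the natural extension: the ergodic measure $\nu$ is lifted to the two-sided shift $(\varSigma_{\bK},\theta)$ via \cite[Lemma~6.4]{Dai-JDE07}, and the multiplicative ergodic theorem for \emph{invertible} cocycles is invoked, with $\log\widehat{\pmb{\rho}}_{\mathrm{co}}^{}$ read off as the minimal exponent by looking at the backward-time exponents as $n\to-\infty$. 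Your transpose device, $\bigl((S_{\sigma(n)}\dotsm S_{\sigma(1)})^{-1}\bigr)^{\mathrm{T}}=T_{\sigma(n)}\dotsm T_{\sigma(1)}$ with $T_k=(S_k^{-1})^{\mathrm{T}}$, instead converts the reversed inverse product into a genuine forward cocycle over the \emph{same} one-sided shift and the \emph{same} switching law $\sigma$, so that the first assertion applies verbatim to $\{T_k\}$ (using $\|M\|=\|M^{\mathrm{T}}\|$ and the word-reversal invariance of the joint spectral radius to get $\widehat{\pmb{\rho}}(\{T_k\})=\widehat{\pmb{\rho}}_{\mathrm{co}}^{-1}$). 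This buys self-containedness: no natural extension, no exponent-preservation lemma for the factor map, and only the one-sided Oseledec theorem is needed, at the cost of having to justify separately (as you do) that the bottom exponent of the filtration equals $\lim_n\frac1n\log\|S_{\sigma(n)}\dotsm S_{\sigma(1)}\|_{\mathrm{co}}$ --- a standard singular-value fact that is automatic in the invertible two-sided formulation the paper uses. Your explicit Rota--Strang bounds $\limsup_n\frac1n\log\|x_n(x_0,\sigma)\|\le\log\widehat{\pmb{\rho}}$ and $\liminf_n\frac1n\log\|x_n(x_0,\sigma)\|\ge\log\widehat{\pmb{\rho}}_{\mathrm{co}}^{}$ for the extremality claims are a clean elementary supplement to what the paper delegates to \cite{Dai-LAA} and to the invertible multiplicative ergodic theorem.
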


\begin{proof}
We only need to prove that if $S_1,\dotsc,S_K$ all are nonsingular, then $\log\widehat{\pmb{\rho}}_{\mathrm{co}}^{}$ is a Lyapunov exponent at some ergodic probability measure $\nu$ and it is just the minimal Lyapunov exponent of System~$(\ref{eq1.1})$.

Since $\widehat{\pmb{\rho}}_{\mathrm{co}}^{-1}>0$ is just the joint spectral radius of $\{S_1^{-1},\dotsc,S_K^{-1}\}$, from \cite[Corollary~2.7]{Dai-LAA} it follows that $-\log\widehat{\pmb{\rho}}_{\mathrm{co}}^{}$ is the maximal Lyapunov exponent of System~$(\ref{eq1.1})^*$ that is induced by $\{S_1^{-1},\dotsc,S_K^{-1}\}$ and there is an ergodic probability measure $\nu$ of $(\varSigma_{\bK}^+,\theta^+)$ such that System~$(\ref{eq1.1})^*$ has the Lyapunov exponent $-\log\widehat{\pmb{\rho}}_{\mathrm{co}}^{}$ at $\nu$.

Let $\varSigma_{\bK}=\bK^\mathbb{Z}$ be the bi-sided symbolic space whose topology is induced by the cylinders 
\begin{equation*}
[j_m,\dotsc,j_n]=\{\sigma\in\varSigma_{\bK}\colon\sigma(i)=j_i\textrm{ for }m\le i\le n\}
\end{equation*} 
for all $-\infty<m\le n<+\infty$, and let $\theta\colon (\sigma(n))_{n\in\mathbb{Z}}\mapsto(\sigma(n+1))_{n\in\mathbb{Z}}$ be the bi-sided shift transformation on $\varSigma_{\bK}$, which is a homeomorphism. Let $\pi^+\colon(\sigma(n))_{n\in\mathbb{Z}}\mapsto(\sigma(n))_{n\in\mathbb{N}}$ be the natural continuous projection from $\varSigma_{\bK}$ onto $\varSigma_{\bK}^+$. Then $\theta^+$ is a factor of $\theta$, i.e., $\pi^+\circ\theta=\theta^+\circ\pi^+$, such that for any ergodic probability measure $\mu$ of $(\varSigma_{\bK}^+,\theta^+)$ there exists a unique ergodic probability measure $\hat{\mu}$ of $(\varSigma_{\bK},\theta)$ with the property $\mu=\hat{\mu}\circ(\pi^+)^{-1}$ and there are the same Lyapunov exponents at $\mu$ and $\hat{\mu}$ for $\{S_1^{-1},\dotsc,S_K^{-1}\}$ (cf.~\cite[Lemma~6.4]{Dai-JDE07}).

Finally from the multiplicative ergodic theorem for invertible linear cocycle (see e.g. \cite[Theorem~10.3]{Wal}), we can see that $\log\widehat{\pmb{\rho}}_{\mathrm{co}}^{}$ is the Lyapunov exponent of System~(\ref{eq1.1}) at $\nu$ and it is just the minimal Lyapunov exponent of System~$(\ref{eq1.1})$, by considering the negatively direct Lyapunov exponents as $n\to-\infty$.

This completes the proof of Lemma~\ref{lem3.8}.
\end{proof}
%%%%%%%%%%%%%%%%%%%%%%%%%%%%%%%%%%%%%%%%%%%%%%%%%%%%%%%%%%%%%%%%%%%%%%%%%%%
%%%%%%%%%%%%%%%%%%%%%%%%%%%%%%%%%%%%%%%%%%%%%%%%%%%%%%%%%%%%%%%%%%%%%%%%%%%
\section{Fiber-chaotic dynamical behavior}\label{sec4}%%%
This section will be devoted to proving our main results Theorems~\ref{thm1.2} and \ref{thm1.3} stated in the Introduction.
\subsection{Proof of Theorem~\ref{thm1.2}}\label{sec4.1}%%%
Let $S_1,\dotsc,S_K$ be arbitrarily given real, $d$-by-$d$, nonsingular matrices.
For System (\ref{eq1.1}), let $\pmb{\varLambda}$ be the set that consists of the switching laws $\sigma\in\varSigma_{\bK}^+$ such that the following \textit{uniform fiber-chaoticity} holds:
\begin{equation*}
\liminf_{n\to+\infty}\|S_{\sigma(n)}\dotsm S_{\sigma(1)}\|=0\quad \textrm{and}\quad
\limsup_{n\to+\infty}\|S_{\sigma(n)}\dotsm S_{\sigma(1)}\|_{\textrm{co}}=\infty.
\end{equation*}
Then each $\sigma\in\pmb{\varLambda}$ is fiber-chaotic for System (\ref{eq1.1}) in the sense of Definition~\ref{def1.1}.

To prove one of our main results Theorem~\ref{thm1.2}, we first need the following lemma.

\begin{lem}\label{lem4.1}%%%
Under the same hypotheses of Theorem~\ref{thm1.2}, $\pmb{\varLambda}$ is a dense subset of the topological space $\varSigma_{\bK}^+$.
\end{lem}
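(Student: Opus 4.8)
The plan is to prove density of $\pmb{\varLambda}$ by a direct construction: given an arbitrary finite initial word, I would extend it to a switching law lying in $\pmb{\varLambda}$ by alternately appending long runs of the contracting word $(i_1,\dotsc,i_m)$ and the expanding word $(j_1,\dotsc,j_n)$. To set up notation, write $A:=S_{i_m}\dotsm S_{i_1}$ and $B:=S_{j_n}\dotsm S_{j_1}$, so the hypothesis of Theorem~\ref{thm1.2} reads $\|A\|<1<\|B\|_{\mathrm{co}}$; both $A$ and $B$ are nonsingular, being products of the nonsingular $S_k$. Two elementary estimates drive the whole argument. First, submultiplicativity of the operator norm gives $\|A^\ell\|\le\|A\|^\ell\to0$ as $\ell\to\infty$. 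Dually, the super-multiplicativity $\|XY\|_{\mathrm{co}}\ge\|X\|_{\mathrm{co}}\|Y\|_{\mathrm{co}}$ recorded in Section~\ref{sec2} yields $\|B^\ell\|_{\mathrm{co}}\ge\|B\|_{\mathrm{co}}^\ell\to\infty$.

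Since the cylinders form a basis for the topology of $\varSigma_{\bK}^+$, it suffices to fix an arbitrary finite word $w=(w_1,\dotsc,w_N)\in\bK^N$ and produce a single $\sigma\in\pmb{\varLambda}$ whose first $N$ symbols are $w$. Put $M_0:=S_{w_N}\dotsm S_{w_1}$, a fixed nonsingular matrix, and build $\sigma$ inductively as the concatenation of $w$ with blocks $(i_1,\dotsc,i_m)^{a_1}(j_1,\dotsc,j_n)^{b_1}(i_1,\dotsc,i_m)^{a_2}(j_1,\dotsc,j_n)^{b_2}\dotsm$, tracking the partial product only at the block boundaries. Having produced $M_{2k-2}$, I would first pick $a_k$ so large that $M_{2k-1}:=A^{a_k}M_{2k-2}$ satisfies $\|M_{2k-1}\|\le\|A\|^{a_k}\|M_{2k-2}\|<1/k$, which is possible because $\|A\|<1$. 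Then I would pick $b_k$ so large that $M_{2k}:=B^{b_k}M_{2k-1}$ satisfies $\|M_{2k}\|_{\mathrm{co}}\ge\|B\|_{\mathrm{co}}^{b_k}\|M_{2k-1}\|_{\mathrm{co}}>k$; this is possible because $M_{2k-1}$ is nonsingular, so $\|M_{2k-1}\|_{\mathrm{co}}=1/\|M_{2k-1}^{-1}\|>0$, and $\|B\|_{\mathrm{co}}>1$.

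To conclude, observe that each $M_{2k-1}$ and each $M_{2k}$ equals a genuine partial product $S_{\sigma(r)}\dotsm S_{\sigma(1)}$ for a suitable index $r$, so the matrices $(M_{2k-1})_{k\ge1}$ and $(M_{2k})_{k\ge1}$ are subsequences of the full sequence of partial products of $\sigma$. From $\|M_{2k-1}\|<1/k\to0$ I would read off $\liminf_{r\to+\infty}\|S_{\sigma(r)}\dotsm S_{\sigma(1)}\|=0$, and from $\|M_{2k}\|_{\mathrm{co}}>k\to\infty$ I would read off $\limsup_{r\to+\infty}\|S_{\sigma(r)}\dotsm S_{\sigma(1)}\|_{\mathrm{co}}=\infty$. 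Hence $\sigma\in\pmb{\varLambda}$ while agreeing with $w$ on its first $N$ symbols, which proves that $\pmb{\varLambda}$ is dense.

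The main obstacle to watch is the expanding step: appending $B$-blocks controls the co-norm only when the current partial product has strictly positive co-norm, and this is precisely the point where nonsingularity of all $S_k$ is indispensable (without it $\|M_{2k-1}\|_{\mathrm{co}}$ could vanish and the estimate would collapse). By contrast, I expect no difficulty from the off-checkpoint behavior: it is harmless that appending a $B$-block may inflate the ordinary norm, or that an $A$-block may shrink the co-norm, since $\liminf$ and $\limsup$ each only need to be certified along one carefully chosen subsequence, and the two chosen subsequences are disjoint and independent.
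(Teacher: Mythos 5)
Your construction is correct and is essentially the same as the paper's proof: both fix the prescribed initial word, then alternately append powers of the contracting word and the expanding word, using submultiplicativity of $\|\cdot\|$ and supermultiplicativity of $\|\cdot\|_{\mathrm{co}}$ (with nonsingularity guaranteeing positive co-norms) to drive the checkpoint products below $1/k$ in norm and above $k$ in co-norm. No substantive difference from the paper's argument.
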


\begin{proof}
Let
$(i_1,\dotsc,i_m)\in\bK^m$ and $(j_1,\dotsc,j_n)\in\bK^n$
be two finite-length words such that
\begin{equation*}
\|S_{i_m}\dotsm S_{i_1}\|<1<\|S_{j_n}\dotsm S_{j_1}\|_{\mathrm{co}}.
\end{equation*}
We simply write
\begin{gather*}
\bi=(i_1,\dotsc,i_m),\quad \bi^k=(\stackrel{k\textrm{-folds}}{\overbrace{\bi,\dotsc,\bi}})\in\bK^{km},\\
\bj=(j_1,\dotsc,j_n),\quad \bj^k=(\stackrel{k\textrm{-folds}}{\overbrace{\bj,\dotsc,\bj}})\in\bK^{kn},\\
\intertext{and}
S(\bi)=S_{i_m}\dotsc S_{i_1},\quad S(\bj)=S_{j_n}\dotsc S_{j_1}.
\end{gather*}
Let $\bsigma=(\bsigma(1),\bsigma(2),\dotsc)\in\varSigma_{\bK}^+$ and $\epsilon>0$ be arbitrarily given.
Then one can find an integer $N\ge1$ such that for any
$\sigma\in\varSigma_{\bK}^+$, if $\sigma(1)=\bsigma(1),\dotsc,\sigma(N)=\bsigma(N)$, then the distance $d(\bsigma,\sigma)<\epsilon$.
Set
\begin{equation*}
A=S_{\bsigma(N)}\dotsc S_{\bsigma(1)}.
\end{equation*}
Next, we will construct a fiber-chaotic switching law $\sigma\in\varSigma_{\bK}^+$ for System (\ref{eq1.1}) with $d(\bsigma,\sigma)<\epsilon$.

Since all the matrices $S_1,\dotsc,S_K$ are nonsingular, from the inequalities $\|BA\|\le\|B\|\cdot\|A\|$ and $\|BA\|_{\mathrm{co}}\ge\|B\|_{\mathrm{co}}\cdot\|A\|_{\mathrm{co}}$ we can choose positive integers $\ell_k<L_k$, for $k=1,2,\dotsc$, such that
\begin{gather*}
\|S(\bi)^{\ell_1}A\|<1,\\
\|S(\bj)^{L_1}S(\bi)^{\ell_1}A\|_{\textrm{co}}>1;\\
\|S(\bi)^{\ell_2}S(\bj)^{L_1}S(\bi)^{\ell_1}A\|<\frac{1}{2},\\
\|S(\bj)^{L_2}S(\bi)^{\ell_2}S(\bj)^{L_1}S(\bi)^{\ell_1}A\|_{\textrm{co}}>2;\\
\vdots\quad\vdots\quad\vdots\\
\|S(\bi)^{\ell_k}S(\bj)^{L_{k-1}}\dotsm S(\bj)^{L_1}S(\bi)^{\ell_1}A\|<\frac{1}{k},\\
\|S(\bj)^{L_k}S(\bi)^{\ell_k}S(\bi)^{L_{k-1}}\dotsm S(\bj)^{L_1}S(\bi)^{\ell_1}A\|_{\textrm{co}}>k;\\
\vdots\quad\vdots\quad\vdots\ .
\end{gather*}
Now it is easy to see that the switching law $\sigma$ defined by
\begin{equation*}
\sigma=\left(\bsigma(1),\dotsc,\bsigma(N), \bi^{\ell_1},\bj^{L_1}, \bi^{\ell_2},\bj^{L_2},\bi^{\ell_3},\bj^{L_3},\dotsc\right)
\end{equation*}
is fiber-chaotic for System (\ref{eq1.1}) in the sense of Definition~\ref{def1.1} such that $d(\bsigma,\sigma)<\epsilon$.

This completes the proof of Lemma~\ref{lem4.1}.
\end{proof}

Next, we will prove that $\pmb{\varLambda}$ is a $G_\delta$ subset of $\varSigma_{\bK}^+$; that is, $\pmb{\varLambda}$ is the intersection of a countable collection of open sets of the space $\varSigma_{\bK}^+$.

\begin{lem}\label{lem4.2}%%%
For System $(\ref{eq1.1})$, $\pmb{\varLambda}$ is a $G_\delta$ subset of the topological space $\varSigma_{\bK}^+$.
\end{lem}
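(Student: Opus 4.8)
The plan is to realize $\pmb{\varLambda}$ as the intersection of two sets, one coming from the $\liminf$ condition and one from the $\limsup$ condition, and to show each of these is itself $G_\delta$. Write $a_n(\sigma)=\|S_{\sigma(n)}\dotsm S_{\sigma(1)}\|$ and $c_n(\sigma)=\|S_{\sigma(n)}\dotsm S_{\sigma(1)}\|_{\mathrm{co}}$, and set
\begin{equation*}
\pmb{\varLambda}_1=\Big\{\sigma\colon\liminf_{n\to+\infty}a_n(\sigma)=0\Big\},\qquad
\pmb{\varLambda}_2=\Big\{\sigma\colon\limsup_{n\to+\infty}c_n(\sigma)=\infty\Big\},
\end{equation*}
so that $\pmb{\varLambda}=\pmb{\varLambda}_1\cap\pmb{\varLambda}_2$. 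Since a countable intersection of $G_\delta$ sets is again $G_\delta$, it suffices to treat $\pmb{\varLambda}_1$ and $\pmb{\varLambda}_2$ separately.

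The decisive observation, which I would state first and use throughout, is that for each fixed $n$ the value $a_n(\sigma)$ (and likewise $c_n(\sigma)$) depends only on the first $n$ coordinates $\sigma(1),\dotsc,\sigma(n)$ of $\sigma$. Hence, under the product topology on $\varSigma_{\bK}^+=\bK^{\mathbb{N}}$, the function $\sigma\mapsto a_n(\sigma)$ is locally constant, therefore continuous, and its level sets are finite unions of cylinders, hence clopen. In particular each set $\{\sigma\colon a_n(\sigma)<1/k\}$ and each $\{\sigma\colon c_n(\sigma)>M\}$ is open. This is the only topological input needed, and it is where the product topology does all the work.

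Next I would translate the $\liminf$ and $\limsup$ conditions into the correct quantifier structure. Because $a_n\ge0$, one has $\liminf_n a_n=0$ precisely when for every $k\ge1$ there are infinitely many $n$ with $a_n<1/k$; dually, since $c_n\ge0$, one has $\limsup_n c_n=\infty$ precisely when for every integer $M\ge1$ there are infinitely many $n$ with $c_n>M$. Encoding ``infinitely many $n$'' as $\bigcap_{N}\bigcup_{n\ge N}$ gives
\begin{equation*}
\pmb{\varLambda}_1=\bigcap_{k\ge1}\bigcap_{N\ge1}\bigcup_{n\ge N}\{\sigma\colon a_n(\sigma)<1/k\},\qquad
\pmb{\varLambda}_2=\bigcap_{M\ge1}\bigcap_{N\ge1}\bigcup_{n\ge N}\{\sigma\colon c_n(\sigma)>M\}.
\end{equation*}
By the observation above, each innermost union $\bigcup_{n\ge N}\{\cdot\}$ is a union of open sets, hence open; intersecting over the two countable families of indices then exhibits $\pmb{\varLambda}_1$ and $\pmb{\varLambda}_2$ as countable intersections of open sets, i.e. as $G_\delta$ sets. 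Consequently $\pmb{\varLambda}=\pmb{\varLambda}_1\cap\pmb{\varLambda}_2$ is $G_\delta$, completing the proof.

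I do not anticipate a genuine obstacle here: the argument is a standard $G_\delta$ decomposition. The only point requiring care is the bookkeeping of quantifiers, namely checking that $\liminf=0$ and $\limsup=\infty$ for nonnegative sequences really are the ``for all $\varepsilon$, infinitely often'' statements written above (so that the ranges of $k$ and $M$ over the positive integers suffice). Everything else reduces to the finite-coordinate dependence of $a_n$ and $c_n$, which makes the relevant level sets open in the product topology.
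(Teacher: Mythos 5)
Your proof is correct and follows essentially the same route as the paper: the same decomposition of $\pmb{\varLambda}$ into a ``$\liminf$'' part and a ``$\limsup$'' part, each written as a countable intersection $\bigcap_{k}\bigcap_{N}\bigcup_{n\ge N}$ of sets that are open because $\|S_{\sigma(n)}\dotsm S_{\sigma(1)}\|$ and $\|S_{\sigma(n)}\dotsm S_{\sigma(1)}\|_{\mathrm{co}}$ depend only on the first $n$ coordinates of $\sigma$ (i.e.\ the relevant sets are unions of cylinders). No discrepancies with the paper's argument.
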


\begin{proof}
For any positive integer $i$, let
\begin{equation*}
\varLambda_i^s=\left\{\sigma\in\varSigma_{\bK}^+\colon\forall n_0\in\mathbb{N}, \exists n>n_0\textrm{ with }\|S_{\sigma(n)}\dotsm S_{\sigma(1)}\|<\frac{1}{i}\right\}.
\end{equation*}
Then
\begin{equation*}
\varLambda_i^s=\bigcap_{n_0=1}^\infty\bigcup_{n>n_0}\left\{\sigma\in\varSigma_{\bK}^+\colon\|S_{\sigma(n)}\dotsm S_{\sigma(1)}\|<\frac{1}{i}\right\}.
\end{equation*}
Since $\left\{\sigma\in\varSigma_{\bK}^+\colon\|S_{\sigma(n)}\dotsm S_{\sigma(1)}\|<\frac{1}{i}\right\}$ is open in $\varSigma_{\bK}^+$ for every integer $i>0$ (noting that the cylinder set
\begin{equation*}
[j_1,\dotsc,j_n]=\left\{\sigma\in\varSigma_{\bK}^+\colon\sigma(1)=j_1,\dotsc,\sigma(n)=j_n\right\}
\end{equation*}
is an open subset of the topological space $\varSigma_{\bK}^+$), $\varLambda_i^s$ is a $G_\delta$ set in $\varSigma_{\bK}^+$. Thus,
\begin{equation*}
\pmb{\varLambda}^s:=\bigcap_{i=1}^\infty\varLambda_i^s
\end{equation*}
is also a $G_\delta$ set in $\varSigma_{\bK}^+$.
On the other hand, let
\begin{equation*}
\varLambda_i^u=\left\{\sigma\in\varSigma_{\bK}^+\colon\forall n_0\in\mathbb{N}, \exists n>n_0\textrm{ with }\|S_{\sigma(n)}\dotsm S_{\sigma(1)}\|_{\textrm{co}}>i\right\}.
\end{equation*}
Then the set
\begin{equation*}
\varLambda_i^u:=\bigcap_{n_0=1}^\infty\bigcup_{n>n_0}\left\{\sigma\in\varSigma_{\bK}^+\colon\|S_{\sigma(n)}\dotsm S_{\sigma(1)}\|_{\textrm{co}}>i\right\}.
\end{equation*}
is a $G_\delta$ set in $\varSigma_{\bK}^+$.
Moreover
\begin{equation*}
\pmb{\varLambda}^u:=\bigcap_{i=1}^\infty\varLambda_i^u
\end{equation*}
is a $G_\delta$ set.
Therefore, $\pmb{\varLambda}=\pmb{\varLambda}^s\cap\pmb{\varLambda}^u$ is a $G_\delta$ subset of $\varSigma_{\bK}^+$.

This completes the proof of Lemma~\ref{lem4.2}.
\end{proof}

If there is no additional condition, it may occur that $\pmb{\varLambda}=\varnothing$.

Based on Lemmas~\ref{lem4.1} and \ref{lem4.2} we are now ready to finish the proof of Theorem~\ref{thm1.2}. In fact, we can obtain a result which is stronger than Theorem~\ref{thm1.2}.

\begin{theorem}\label{thm4.3}%%%
Let $S_1,\dotsc,S_K$ all be nonsingular $d\times d$ real matrices. If there exists two finite-length words, say
$(i_1,\dotsc,i_m)\in\bK^m$ and $(j_1,\dotsc,j_n)\in\bK^n$,
such that $\|S_{i_m}\dotsm S_{i_1}\|<1<\|S_{j_n}\dotsm S_{j_1}\|_{\mathrm{co}}$.
then for any $\sigma\in\pmb{\varLambda}$,
\begin{equation*}
\liminf_{n\to+\infty}\|S_{\sigma(n)}\dotsm S_{\sigma(1)}\|=0\quad\textrm{and}\quad
\limsup_{n\to+\infty}\|S_{\sigma(n)}\dotsm S_{\sigma(1)}\|_{\mathrm{co}}=\infty.
\end{equation*}
\end{theorem}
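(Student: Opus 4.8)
The plan is to obtain the two displayed limits directly from the decomposition of $\pmb{\varLambda}$ carried out in the proof of Lemma~\ref{lem4.2}, which already supplies the quantifier form of membership that I need. First I would fix an arbitrary $\sigma\in\pmb{\varLambda}$ and recall that $\pmb{\varLambda}=\pmb{\varLambda}^s\cap\pmb{\varLambda}^u$, where $\pmb{\varLambda}^s=\bigcap_{i\ge1}\varLambda_i^s$ and $\pmb{\varLambda}^u=\bigcap_{i\ge1}\varLambda_i^u$ are built from the sets $\varLambda_i^s,\varLambda_i^u$ introduced there. Thus $\sigma$ belongs to $\varLambda_i^s$ and to $\varLambda_i^u$ for every $i\ge1$. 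Unwinding the defining quantifiers of $\varLambda_i^s$ shows that, for each fixed $i$, the inequality $\|S_{\sigma(n)}\dotsm S_{\sigma(1)}\|<1/i$ holds for infinitely many $n$; symmetrically, membership of $\sigma$ in each $\varLambda_i^u$ says that for each fixed $i$ the inequality $\|S_{\sigma(n)}\dotsm S_{\sigma(1)}\|_{\mathrm{co}}>i$ holds for infinitely many $n$.

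The one remaining step is the elementary passage from these ``infinitely often'' assertions to $\liminf$ and $\limsup$. For the nonnegative sequence $\|S_{\sigma(n)}\dotsm S_{\sigma(1)}\|$, the fact that it falls below $1/i$ for infinitely many $n$ yields $\liminf_{n\to+\infty}\|S_{\sigma(n)}\dotsm S_{\sigma(1)}\|\le 1/i$, and since $i\ge1$ is arbitrary this $\liminf$ equals $0$, which is the first asserted equality. For the sequence $\|S_{\sigma(n)}\dotsm S_{\sigma(1)}\|_{\mathrm{co}}$, the fact that it exceeds $i$ for infinitely many $n$ yields $\limsup_{n\to+\infty}\|S_{\sigma(n)}\dotsm S_{\sigma(1)}\|_{\mathrm{co}}\ge i$, and letting $i\to\infty$ gives $+\infty$, which is the second asserted equality. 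As $\sigma\in\pmb{\varLambda}$ was arbitrary, both limits hold for every element of $\pmb{\varLambda}$, which is exactly the conclusion of the theorem.

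It then remains only to note why the statement is not vacuous, and this is where the hypothesis on the two words $(i_1,\dotsc,i_m)$ and $(j_1,\dotsc,j_n)$ is used: by Lemma~\ref{lem4.1} that hypothesis makes $\pmb{\varLambda}$ dense, and by Lemma~\ref{lem4.2} it is $G_\delta$, so in particular $\pmb{\varLambda}\neq\varnothing$ and the quantifier ``for any $\sigma\in\pmb{\varLambda}$'' ranges over a genuinely large set. The main point requiring care, rather than any genuine obstacle, is to keep the base-point-free quantities $\|S_{\sigma(n)}\dotsm S_{\sigma(1)}\|$ and $\|S_{\sigma(n)}\dotsm S_{\sigma(1)}\|_{\mathrm{co}}$ of this theorem distinct from the pointwise quantities $\|x_n(x_0,\sigma)\|$ of Definition~\ref{def1.1}. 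The sandwich $\|S_{\sigma(n)}\dotsm S_{\sigma(1)}\|_{\mathrm{co}}\,\|x_0\|\le\|x_n(x_0,\sigma)\|\le\|S_{\sigma(n)}\dotsm S_{\sigma(1)}\|\,\|x_0\|$ is what lets the uniform conclusion proved here imply Definition~\ref{def1.1}'s fiber-chaoticity for each $x_0\neq0$, and thereby makes Theorem~\ref{thm4.3} stronger than Theorem~\ref{thm1.2}; but it is not needed for the two limits themselves.
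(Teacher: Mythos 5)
Your argument is correct, and it leans on exactly the same two lemmas as the paper; the difference is one of emphasis, and it is worth spelling out. Read literally, the conclusion of Theorem~\ref{thm4.3} is verbatim the defining property of $\pmb{\varLambda}$ (the ``uniform fiber-chaoticity'' display at the start of Section~\ref{sec4.1}), so your unwinding through the sets $\varLambda_i^s,\varLambda_i^u$ of Lemma~\ref{lem4.2} is a correct but unnecessary detour: membership in $\pmb{\varLambda}$ already \emph{is} the pair of liminf/limsup equalities, no decomposition or quantifier-chasing required. The paper accordingly spends no effort on that step; its proof of Theorem~\ref{thm4.3} consists entirely of the part you relegate to the closing ``non-vacuity'' remark, namely that $\pmb{\varLambda}$ is dense (Lemma~\ref{lem4.1}, where the hypothesis on the two words enters) and $G_\delta$ (Lemma~\ref{lem4.2}), hence residual, and that every $\sigma\in\pmb{\varLambda}$ is fiber-chaotic in the sense of Definition~\ref{def1.1} --- your sandwich inequality $\|S_{\sigma(n)}\dotsm S_{\sigma(1)}\|_{\mathrm{co}}\,\|x_0\|\le\|x_n(x_0,\sigma)\|\le\|S_{\sigma(n)}\dotsm S_{\sigma(1)}\|\,\|x_0\|$ is precisely the reason --- so that the set of all fiber-chaotic switching laws is residual. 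That genericity assertion is what makes Theorem~\ref{thm4.3} ``stronger than Theorem~\ref{thm1.2}'' and is the actual mathematical content here; your proposal does contain it, but framed as a side remark guaranteeing the quantifier is not vacuous rather than as the theorem's substance. In short: both proofs are correct and draw on the same ingredients, but you prove the tautological reading of the statement while the paper proves the genericity reading, and a referee would want the latter foregrounded.
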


\begin{proof}
We easily see that $\pmb{\varLambda}$ is a dense $G_\delta$ subset of $\varSigma_{\bK}^+$ from Lemmas~\ref{lem4.1} and \ref{lem4.2}. Since each $\sigma\in\pmb{\varLambda}$ is fiber-chaotic for System (\ref{def1.1}), the set of all fiber-chaotic laws of System (\ref{eq1.1}) is residual. This proves Theorem~\ref{thm4.3}.
\end{proof}

Let us consider a simple example in accordance with Theorem~\ref{thm1.2}.

\begin{example}
Given any two constants $\alpha,\beta$ such that $|\alpha|<1$ and $|\beta|>1$, let
\begin{equation*}
S_1=\alpha\left(\begin{matrix}1&1\\0&1\end{matrix}\right)\quad \textrm{and}\quad S_2=\beta\left(\begin{matrix}1&0\\1&1\end{matrix}\right).
\end{equation*}
Then from Theorem~\ref{thm1.2}, it follows that System (\ref{eq1.1}) generated by $S_1$ and $S_2$ is fiber-chaotic in the sense of Definition~\ref{def1.1}.
\end{example}
%%%%%%%%%%%%%%%%%%%%%%%%%%%%%%%%%%%%%%%%%%%%%
\subsection{Proof of Theorem~\ref{thm1.3} and pointwise fiber chaos}\label{sec4.2}%%%
For now, we can prove the another main result Theorem~\ref{thm1.3} by improving the argument of Lemma~\ref{lem4.1} as follows:

\begin{proof}[Proof of Theorem~\ref{thm1.3}]
First let the hypotheses of Theorem~\ref{thm1.3} hold. Without loss of generality, let $\|\bx_0\|=\|\by_0\|=1$, and we write $\sigma_0=(\sigma_0(n))_{n\ge1}$ and $\sigma_1=(\sigma_1(n))_{n\ge1}$.

Let $\sigma=(\sigma(n))_{n\ge1}\in\varSigma_{\bK}^+$, $u\in\mathbb{R}^2\setminus\{0\}$ and $N_0\ge1$ be arbitrarily given.
Put
\begin{equation*}
u_{N_0}=S_{\sigma(N_0)}\dotsm S_{\sigma(1)}u.
\end{equation*}
Since $S_1,\dotsc,S_K$ all are nonsingular, we see $u_{N_0}\not=0$.

Step 1). Let $\varepsilon_1>0$ be sufficiently small. Because $S_1$ is an irrational rotation, there exists an integer $N_1\ge N_0+1$ such that
\begin{equation*}
\big{\|}u_{N_1}-\|u_{N_0}\|\bx_0\big{\|}<\varepsilon_1,\quad \textrm{where }u_{N_1}=S_1^{N_1-N_0}u_{N_0}.
\end{equation*}
Further there exists an integer $N_1^\prime\ge N_1+1$ such that
\begin{equation*}
\|u_{N_1^\prime}\|<\varepsilon_1,\quad \textrm{where }u_{N_1^\prime}=S_{\sigma_0(N_1^\prime-N_1)}\dotsm S_{\sigma_0(1)}u_{N_1}.
\end{equation*}

Step 2). Let $\varepsilon_2>0$ with $\varepsilon_2<\varepsilon_1$. Since $u_{N_1^\prime}\not=0$, similarly we can find some $N_2\ge N_1^\prime+1$ such that
\begin{equation*}
\big{\|}u_{N_2}-\|u_{N_1^\prime}\|\by_0\big{\|}<\varepsilon_2,\quad \textrm{where }u_{N_2}=S_1^{N_2-N_1^\prime}u_{N_1^\prime}.
\end{equation*}
Further there exists an integer $N_2^\prime\ge N_2+1$ such that
\begin{equation*}
\|u_{N_2^\prime}\|>\frac{1}{\varepsilon_2},\quad \textrm{where }u_{N_2^\prime}=S_{\sigma_1(N_2^\prime-N_2)}\dotsm S_{\sigma_1(1)}u_{N_2}.
\end{equation*}

Step 3). Taking $\varepsilon_1>\varepsilon_2>\varepsilon_3>\dotsm\rightarrow0$ and repeating the above Step 1) and Step 2) for $\varepsilon_3,\varepsilon_4, \dotsc$, we can choose a switching law $\sigma^\prime\in\varSigma_{\bK}^+$ such that
\begin{gather*}
\sigma(1)=\sigma^\prime(1), \dotsc, \sigma(N_0)=\sigma^\prime(N_0);\\
\sigma(N_0+1)=\dotsm=\sigma(N_1)=1;\\
\sigma(N_1+1)=\sigma_0(1),\dotsc,\sigma(N_1^\prime)=\sigma_0(N_1^\prime-N_1); \\
\sigma(N_1^\prime+1)=\dotsm=\sigma(N_2)=1;\\
\sigma(N_2+1)=\sigma_1(1),\dotsc,\sigma(N_2^\prime)=\sigma_1(N_2^\prime-N_2);\\
\vdots\quad\vdots\quad\vdots
\end{gather*}
and
\begin{equation*}
\liminf_{n\to\infty}\|x_n(u,\sigma^\prime)\|=0\quad\textrm{and}\quad\limsup_{n\to\infty}\|x_n(u,\sigma^\prime)\|=\infty.
\end{equation*}
This completes the proof of Theorem~\ref{thm1.3}.
\end{proof}

To counter Theorem~\ref{thm1.3}, we now consider another simple example .

\begin{example}
Let $\alpha\in(0,1)$ be an irrational number and
\begin{equation*}
S_1=\left(\begin{matrix}\cos2\pi\alpha&\sin2\pi\alpha\\-\sin2\pi\alpha&\cos2\pi\alpha\end{matrix}\right)\quad \textrm{and}\quad S_2=\beta\left(\begin{matrix}2&0\\-1&\frac{1}{2}\end{matrix}\right).
\end{equation*}
Then from Theorem~\ref{thm1.3}, it follows that System (\ref{eq1.1}) generated by $S_1$ and $S_2$ has fiber-chaotic switching laws.
\end{example}

To System $(\ref{eq1.1})$ and for any nonzero $x_0\in\mathbb{R}^d$, we now define the \textit{pointwise fiber-chaotic switching laws} as follows:
\begin{equation*}
\pmb{\varLambda}(x_0)=\left\{\sigma\colon\liminf_{n\to\infty}\|x_n(x_0,\sigma)\|=0,\ \limsup_{n\to\infty}\|x_n(x_0,\sigma)\|=\infty\right\}.
\end{equation*}

Then similar to Lemma~\ref{lem4.2}, we can easily obtain the following simple lemma.

\begin{lem}\label{lem4.6}%%%
For System $(\ref{eq1.1})$, the set $\pmb{\varLambda}(x_0)$
is a $G_\delta$-subset of $\varSigma_{\bK}^+$ for any nonzero $x_0\in\mathbb{R}^d$.
\end{lem}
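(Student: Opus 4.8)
The plan is to mirror verbatim the proof of Lemma~\ref{lem4.2}, replacing the operator quantities $\|S_{\sigma(n)}\dotsm S_{\sigma(1)}\|$ and $\|S_{\sigma(n)}\dotsm S_{\sigma(1)}\|_{\mathrm{co}}$ by the single scalar $\|x_n(x_0,\sigma)\|=\|S_{\sigma(n)}\dotsm S_{\sigma(1)}x_0\|$. Since $\|x_n(x_0,\sigma)\|\ge0$ for every $n$, the condition $\liminf_{n\to\infty}\|x_n(x_0,\sigma)\|=0$ is equivalent to requiring that for each $i\ge1$ and each $n_0\in\mathbb{N}$ there is some $n>n_0$ with $\|x_n(x_0,\sigma)\|<\tfrac1i$; symmetrically, $\limsup_{n\to\infty}\|x_n(x_0,\sigma)\|=\infty$ is equivalent to requiring that for each $i\ge1$ and each $n_0\in\mathbb{N}$ there is some $n>n_0$ with $\|x_n(x_0,\sigma)\|>i$. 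Accordingly I would set
\begin{equation*}
\varLambda_i^s(x_0)=\bigcap_{n_0=1}^\infty\bigcup_{n>n_0}\Big\{\sigma\in\varSigma_{\bK}^+\colon\|x_n(x_0,\sigma)\|<\tfrac1i\Big\}
\end{equation*}
and
\begin{equation*}
\varLambda_i^u(x_0)=\bigcap_{n_0=1}^\infty\bigcup_{n>n_0}\Big\{\sigma\in\varSigma_{\bK}^+\colon\|x_n(x_0,\sigma)\|>i\Big\},
\end{equation*}
so that $\pmb{\varLambda}(x_0)=\big(\bigcap_{i\ge1}\varLambda_i^s(x_0)\big)\cap\big(\bigcap_{i\ge1}\varLambda_i^u(x_0)\big)$.

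The only point that needs checking is that each elementary set $\{\sigma\colon\|x_n(x_0,\sigma)\|<\tfrac1i\}$ and $\{\sigma\colon\|x_n(x_0,\sigma)\|>i\}$ is open. This is immediate: the map $\sigma\mapsto x_n(x_0,\sigma)=S_{\sigma(n)}\dotsm S_{\sigma(1)}x_0$ depends only on the first $n$ coordinates $\sigma(1),\dotsc,\sigma(n)$ and takes only finitely many values (one for each word in $\bK^n$), so each such set is a finite union of cylinders $[j_1,\dotsc,j_n]$ and is therefore clopen in $\varSigma_{\bK}^+$. Consequently each union $\bigcup_{n>n_0}\{\cdots\}$ is open, each $\varLambda_i^s(x_0)$ and $\varLambda_i^u(x_0)$ is a countable intersection of open sets and hence $G_\delta$, and finally $\pmb{\varLambda}(x_0)$, being a countable intersection of these $G_\delta$ sets, is itself $G_\delta$.

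I expect no real obstacle here: the argument is a word-for-word transcription of Lemma~\ref{lem4.2} with the operator norm replaced by the trajectory norm, and it is the local constancy (clopenness) of the level sets that makes everything go through without any continuity estimate. I would only remark in passing that nonsingularity of the $S_k$ plays no role in this purely topological statement; it is needed elsewhere (for instance, to ensure $\pmb{\varLambda}(x_0)\ne\varnothing$ via Theorem~\ref{thm1.3}), but the $G_\delta$ conclusion holds for arbitrary $S_1,\dotsc,S_K$.
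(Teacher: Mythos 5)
Your proof is correct and follows essentially the same route as the paper's: the same decomposition $\pmb{\varLambda}(x_0)=\bigl(\bigcap_i\varLambda_i^s(x_0)\bigr)\cap\bigl(\bigcap_i\varLambda_i^u(x_0)\bigr)$ with each level set open because $\sigma\mapsto x_n(x_0,\sigma)$ depends only on the first $n$ coordinates, exactly as in the paper's adaptation of Lemma~\ref{lem4.2}. Your added remarks (clopenness of the elementary sets, and that nonsingularity is irrelevant here) are accurate but do not change the argument.
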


\begin{proof}
Let $x_0\in\mathbb{R}^d\setminus\{0\}$. For any positive integer $i$, let
\begin{equation*}
\varLambda_i^s(x_0)=\{\sigma\colon\forall n_0\in\mathbb{N}, \exists n>n_0\textrm{ with }\|S_{\sigma(n)}\dotsm S_{\sigma(1)}x_0\|<i^{-1}\}.
\end{equation*}
Then $\varLambda_i^s(x_0)$ and $\pmb{\varLambda}^s(x_0)=\cap_{i=1}^\infty\varLambda_i^s(x_0)$ both are $G_\delta$-sets in $\varSigma_{\bK}^+$.
Let
\begin{equation*}
\varLambda_i^u(x_0)=\{\sigma\colon\forall n_0\in\mathbb{N}, \exists n>n_0\textrm{ with }\|S_{\sigma(n)}\dotsm S_{\sigma(1)}x_0\|>i\}.
\end{equation*}
Then $\varLambda_i^u(x_0)$ and $\pmb{\varLambda}^u(x_0)=\cap_{i=1}^\infty\varLambda_i^u(x_0)$ both are $G_\delta$-sets in $\varSigma_{\bK}^+$.
Thus $\pmb{\varLambda}(x_0)=\pmb{\varLambda}^s(x_0)\cap\pmb{\varLambda}^u(x_0)$ is a $G_\delta$-sets in $\varSigma_{\bK}^+$. This proves Lemma~\ref{lem4.6}.
\end{proof}

From Theorem~\ref{thm1.3} and Lemma~\ref{lem4.6}, we can obtain the following.

\begin{theorem}\label{thm4.7}%%%
Under the hypotheses of Theorem~\ref{thm1.3}, for any nonzero initial state $x_0\in\mathbb{R}^2$, $\pmb{\varLambda}(x_0)$ is residual in the space $\varSigma_{\bK}^+$.
\end{theorem}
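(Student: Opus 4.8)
The plan is to obtain $\pmb{\varLambda}(x_0)$ as a dense $G_\delta$ set, which is exactly what is meant by ``residual.'' Lemma~\ref{lem4.6} already supplies one half: for the fixed nonzero $x_0\in\mathbb{R}^2$, the set $\pmb{\varLambda}(x_0)$ is a $G_\delta$-subset of $\varSigma_{\bK}^+$. So the entire content of the theorem reduces to verifying the remaining half, namely that $\pmb{\varLambda}(x_0)$ is \emph{dense} in $\varSigma_{\bK}^+$; once both are in hand, the conclusion is immediate from the definition of a residual set.

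First I would observe that density is not something new to prove, because the construction carried out in the proof of Theorem~\ref{thm1.3} already yields it once one keeps $x_0$ fixed throughout. Recall that in that argument one starts from an \emph{arbitrary} $\sigma\in\varSigma_{\bK}^+$, an \emph{arbitrary} nonzero $u\in\mathbb{R}^2$, and an \emph{arbitrary} index $N_0\ge1$, and then produces a switching law $\sigma^\prime$ which coincides with $\sigma$ on the first $N_0$ symbols, i.e.
\begin{equation*}
\sigma^\prime(1)=\sigma(1),\ \dotsc,\ \sigma^\prime(N_0)=\sigma(N_0),
\end{equation*}
and which satisfies the pointwise fiber-chaos relations
\begin{equation*}
\liminf_{n\to\infty}\|x_n(u,\sigma^\prime)\|=0\quad\textrm{and}\quad\limsup_{n\to\infty}\|x_n(u,\sigma^\prime)\|=\infty.
\end{equation*}
Specializing $u=x_0$, this says precisely that $\sigma^\prime\in\pmb{\varLambda}(x_0)$.

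Next I would translate ``agreement on the first $N_0$ symbols'' into metric proximity. Since two laws agreeing on the initial block of length $N_0$ lie in the common cylinder $[\sigma(1),\dotsc,\sigma(N_0)]$, the defining metric $d(\cdot,\cdot)$ on $\varSigma_{\bK}^+$ gives $d(\sigma,\sigma^\prime)\le\sum_{n>N_0}K^{1-n}$, which tends to $0$ as $N_0\to\infty$. Hence, given any $\sigma\in\varSigma_{\bK}^+$ and any $\epsilon>0$, choosing $N_0$ large enough and applying the construction above furnishes a point $\sigma^\prime\in\pmb{\varLambda}(x_0)$ with $d(\sigma,\sigma^\prime)<\epsilon$. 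This establishes density of $\pmb{\varLambda}(x_0)$.

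Combining the two halves finishes the argument: $\pmb{\varLambda}(x_0)$ is a dense $G_\delta$ subset of $\varSigma_{\bK}^+$, hence residual. I do not expect any genuine obstacle here, since both ingredients are already available; the only point requiring care is the conceptual one of recognizing that the construction in the proof of Theorem~\ref{thm1.3}, read with $x_0$ held fixed, is exactly a density statement for the single-point set $\pmb{\varLambda}(x_0)$ rather than merely a statement about fiber-chaotic laws (which would require the oscillation to hold simultaneously for \emph{all} nonzero initial states).
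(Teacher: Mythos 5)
Your proposal is correct and follows essentially the same route as the paper: the authors likewise combine Lemma~\ref{lem4.6} (the $G_\delta$ property) with the density supplied by the construction in the proof of Theorem~\ref{thm1.3}, read with the initial state held fixed. Your additional remark distinguishing pointwise density of $\pmb{\varLambda}(x_0)$ from density of the set of fully fiber-chaotic laws is exactly the right point of care.
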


\begin{proof}
Let $x_0\in\mathbb{R}^2\setminus\{0\}$ be arbitrary. From Lemma~\ref{lem4.6}, we see that $\pmb{\varLambda}(x_0)$
is a $G_\delta$ subset of $\varSigma_{\bK}^+$. By Theorem~\ref{thm1.3}, $\pmb{\varLambda}(x_0)$
is dense in $\varSigma_{\bK}^+$. Thus this completes the proof of Theorem~\ref{thm4.7}.
\end{proof}

More generally, for any nonempty set $X\subseteq\mathbb{R}^d\setminus\{0\}$ we define the set
\begin{equation*}
\pmb{\varLambda}(X)=\left\{\sigma\in\varSigma_{\bK}^+\colon\liminf_{n\to\infty}\|x_n(x_0,\sigma)\|=0\textrm{ and } \limsup_{n\to\infty}\|x_n(x_0,\sigma)\|=\infty,\ \forall x_0\in X\right\}.
\end{equation*}
Then in the case of Theorem~\ref{thm1.2}, $\pmb{\varLambda}(\mathbb{R}^d\setminus\{0\})$ is residual in $\varSigma_{\bK}^+$. However, for the case of Theorem~\ref{thm1.3}, we can only obtain the following weak result from Theorem~\ref{thm4.7}:

\begin{cor}
Under the hypotheses of Theorem~\ref{thm1.3}, there exists a dense countable subset $X$ of $\mathbb{S}^1$ such that $\pmb{\varLambda}(X)$ is residual in $\varSigma_{\bK}^+$.
\end{cor}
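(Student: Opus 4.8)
The plan is to reduce the corollary to Theorem~\ref{thm4.7} together with the Baire category theorem, the only real content being the passage from one residual set per initial state to a single residual set that works simultaneously for a whole countable family. First I would fix $X=\{x^{(1)},x^{(2)},\dotsc\}$ to be any countable dense subset of the unit circle $\mathbb{S}^1\subset\mathbb{R}^2$; such a set exists because $\mathbb{S}^1$ is a separable metric space (for instance, one may take the points $(\cos2\pi r,\sin2\pi r)$ with $r\in\mathbb{Q}\cap[0,1)$). I note in passing that, by the scaling identity $x_n(\lambda x_0,\sigma)=\lambda\,x_n(x_0,\sigma)$, the set $\pmb{\varLambda}(x_0)$ depends only on the direction of $x_0$, so that restricting attention to representatives on $\mathbb{S}^1$ loses no generality.

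Next I would unravel the definition of $\pmb{\varLambda}(X)$ to record that it is exactly the countable intersection $\pmb{\varLambda}(X)=\bigcap_{k\ge1}\pmb{\varLambda}(x^{(k)})$ indexed by $X$. By Theorem~\ref{thm4.7}, applied to each nonzero $x^{(k)}\in\mathbb{S}^1\subset\mathbb{R}^2\setminus\{0\}$, every set $\pmb{\varLambda}(x^{(k)})$ is residual in $\varSigma_{\bK}^+$ and hence contains a dense $G_\delta$-set, say $G_k$. It follows that $\bigcap_{k\ge1}G_k\subseteq\pmb{\varLambda}(X)$.

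Finally I would invoke that $\varSigma_{\bK}^+$, being a compact metric space, is complete and therefore a Baire space, so that a countable intersection of dense $G_\delta$-sets is again a dense $G_\delta$-set. Consequently $\bigcap_{k\ge1}G_k$ is a dense $G_\delta$-set, and therefore $\pmb{\varLambda}(X)$ contains a dense $G_\delta$-set; that is, $\pmb{\varLambda}(X)$ is residual in $\varSigma_{\bK}^+$, which is what is required.

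There is essentially no analytic obstacle here beyond bookkeeping: the whole argument is the stability of residuality under countable intersections, supplied by the Baire category theorem. The reason one cannot simply push $X$ up to all of $\mathbb{S}^1$ (or to $\mathbb{R}^d\setminus\{0\}$, which would recover the stronger conclusion of Theorem~\ref{thm1.2} that $\pmb{\varLambda}(\mathbb{R}^d\setminus\{0\})$ is residual) is precisely that $\mathbb{S}^1$ is uncountable, whereas an uncountable intersection of residual sets need not be residual; the countability of $X$ is exactly what makes the Baire argument go through, and is why the corollary delivers only this weaker, countable-direction version of simultaneous pointwise fiber chaos.
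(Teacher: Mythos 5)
Your proof is correct and follows exactly the route the paper intends: the corollary is a direct consequence of Theorem~\ref{thm4.7} plus the fact that a countable intersection of residual subsets of the Baire space $\varSigma_{\bK}^+$ is residual. The paper gives no explicit proof, and your Baire-category bookkeeping (including the remark on why countability of $X$ is essential) supplies precisely the missing details.
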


Here $\mathbb{S}^1=\{x_0\in\mathbb{R}^2\colon\|x_0\|=1\}$ is the unit circle in $\mathbb{R}^2$.
%%%%%%%%%%%%%%%%%%%%%%%%%%%%%%%%%%%%%%%%%%%%%%%%%%%%%
\subsection{Completely product (co-) unbounded systems}%%%
We now turn to another basic property of fiber-chaotic systems.

\begin{defn}\label{def4.9}%%%
System $(\ref{eq1.1})$ is called \textit{completely product unbounded} if restricted to every nonempty, common and invariant subspace of $\mathbb{R}^d$, it is product unbounded.
\end{defn}

So, if System $(\ref{eq1.1})$ is completely product unbounded then it is product unbounded. But the converse is not necessarily true. For example, for
\begin{equation*}
S_1=\left(\begin{matrix}1&1\\0&1\end{matrix}\right)\quad \textrm{and}\quad S_2=\left(\begin{matrix}1&1\\0&1\end{matrix}\right),
\end{equation*}
System $(\ref{eq1.1})$ is product unbounded but not completely product unbounded.

We will employ the following simple fact in the next section.

\begin{lem}\label{lem4.10}%%%
If System $(\ref{eq1.1})$ has a fiber-chaotic switching law, then it is completely product unbounded.
\end{lem}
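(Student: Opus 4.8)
The plan is to argue by contraposition, turning the dynamical hypothesis into a single operator-norm bound on a subspace. Suppose System~$(\ref{eq1.1})$ is \emph{not} completely product unbounded. By Definition~\ref{def4.9} this means there is a nonzero common invariant subspace $V\subseteq\mathbb{R}^d$ (so $S_kV\subseteq V$ for every $k\in\bK$ and $V\not=\{0\}$) such that the restricted inclusion system on $V$ is product bounded. Unwinding the definition of product boundedness, and using the invariance of $V$ to identify $(S_{\sigma(n)}\dotsm S_{\sigma(1)})|_V$ with the product of the restrictions $S_{\sigma(i)}|_V$, this yields a finite constant $\beta\ge1$ with
\begin{equation*}
\|S_{\sigma(n)}\dotsm S_{\sigma(1)}x\|\le\beta\|x\|\quad\textrm{for all }x\in V,\ \sigma\in\varSigma_{\bK}^+\textrm{ and }n\ge1.
\end{equation*}

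Next I would feed the hypothesized fiber-chaotic switching law into this bound. Let $\bsigma\in\varSigma_{\bK}^+$ be fiber-chaotic for System~$(\ref{eq1.1})$ and choose any nonzero $x_0\in V$; such a vector exists precisely because $V\not=\{0\}$. Applying the displayed inequality along $\bsigma$ gives
\begin{equation*}
\|x_n(x_0,\bsigma)\|=\|S_{\bsigma(n)}\dotsm S_{\bsigma(1)}x_0\|\le\beta\|x_0\|<\infty\quad\forall n\ge1,
\end{equation*}
whence $\limsup_{n\to\infty}\|x_n(x_0,\bsigma)\|\le\beta\|x_0\|<\infty$. This directly contradicts the requirement $\limsup_{n\to\infty}\|x_n(x_0,\bsigma)\|=\infty$ of Definition~\ref{def1.1}, which must hold for every $x_0\in\mathbb{R}^d\setminus\{0\}$ and in particular for our $x_0\in V$. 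Hence no such $V$ can exist, and System~$(\ref{eq1.1})$ is completely product unbounded.

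There is essentially no deep obstacle here; all the content lies in correctly reading the two definitions. The one point deserving care is the insistence $V\not=\{0\}$: product boundedness on the trivial subspace is vacuous, so Definition~\ref{def4.9} must be understood to range over nonzero common invariant subspaces, and it is exactly the nonvanishing of $V$ that lets me select a nonzero test vector $x_0$ to which the fiber-chaos hypothesis applies. I would also observe that only the $\limsup$ half of fiber-chaoticity is used; the $\liminf$ half plays no role in this argument and would instead feed a dual version based on the co-norm $\|\cdot\|_{\mathrm{co}}$.
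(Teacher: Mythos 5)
Your argument is correct and is exactly what the paper intends: its own proof consists of the single sentence that the lemma ``follows immediately from Definitions~\ref{def1.1} and \ref{def4.9},'' and your contrapositive unwinding (a nonzero common invariant subspace $V$ with a uniform bound $\beta$ would force $\limsup_n\|x_n(x_0,\bsigma)\|\le\beta\|x_0\|<\infty$ for nonzero $x_0\in V$, contradicting fiber-chaoticity) is precisely that deduction made explicit. Your remark that only the $\limsup$ half is used, with the $\liminf$ half feeding the dual Lemma~\ref{lem4.12} via the co-norm, is also consistent with the paper.
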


\begin{proof}
This statement follows immediately from the Definitions~\ref{def1.1} and \ref{def4.9}.
\end{proof}

Dually, we can obtain the followings:\begin{defn}\label{def4.11}%%%
System $(\ref{eq1.1})$ is called \textit{completely product co-unbounded} if restricted to every nonempty, common and invariant subspace of $\mathbb{R}^d$, it is product co-unbounded.
\end{defn}

\begin{lem}\label{lem4.12}%%%
If System $(\ref{eq1.1})$ has a fiber-chaotic switching law, then it is completely product co-unbounded.
\end{lem}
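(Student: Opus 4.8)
The plan is to mirror the proof of Lemma~\ref{lem4.10}, simply trading the \emph{distal} half of fiber-chaos for the \emph{proximal} half. Where Lemma~\ref{lem4.10} used $\limsup_{n\to\infty}\|x_n(x_0,\sigma)\|=\infty$ to rule out boundedness on an invariant subspace, here I would use $\liminf_{n\to\infty}\|x_n(x_0,\sigma)\|=0$ from Definition~\ref{def1.1} to rule out a uniform positive lower bound on co-norms. First I would fix a fiber-chaotic switching law $\sigma$ and unwind Definition~\ref{def4.11}: to prove complete product co-unboundedness I must show that for \emph{every} nonzero common invariant subspace $V\subseteq\mathbb{R}^d$ the restricted system is product co-unbounded, i.e. fails the uniform bound of Definition~\ref{def2.4}. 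Equivalently, it suffices to exhibit products whose restrictions to $V$ have co-norm arbitrarily close to $0$; in fact the single given law $\sigma$ will already do the job.

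Next I would pick an arbitrary unit vector $x_0\in V$. Since $V$ is invariant under each $S_k$, the whole trajectory remains in $V$, namely $x_n(x_0,\sigma)=S_{\sigma(n)}\dotsm S_{\sigma(1)}x_0\in V$ for all $n$. The one computation needed is the elementary domination
\[
\bigl\|(S_{\sigma(n)}\dotsm S_{\sigma(1)})|_V\bigr\|_{\mathrm{co}}
=\min_{v\in V,\ \|v\|=1}\|S_{\sigma(n)}\dotsm S_{\sigma(1)}\,v\|
\le \|x_n(x_0,\sigma)\|,
\]
which holds because the minimum defining the co-norm of the restriction $A|_V$ is taken over all unit vectors of $V$ and hence cannot exceed the value $\|Ax_0\|$ at our particular unit $x_0$. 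Because $\sigma$ is fiber-chaotic, $\liminf_{n\to\infty}\|x_n(x_0,\sigma)\|=0$, so there is a subsequence $n_k\to\infty$ with $\|x_{n_k}(x_0,\sigma)\|\to0$; combined with the displayed inequality this forces $\inf_{n}\bigl\|(S_{\sigma(n)}\dotsm S_{\sigma(1)})|_V\bigr\|_{\mathrm{co}}=0$. Thus the system restricted to $V$ is product co-unbounded, and since $V$ was arbitrary, the system is completely product co-unbounded.

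Honestly, I do not expect a genuine obstacle: this is the formal dual of Lemma~\ref{lem4.10} and, like it, follows immediately from the definitions. The only two points requiring a moment's care are (i) reading ``product co-unbounded on $V$'' correctly as the negation of the uniform lower bound $\|S_{\sigma(n)}\dotsm S_{\sigma(1)}\|_{\mathrm{co}}\ge\alpha>0$ in Definition~\ref{def2.4}, and (ii) the observation that the co-norm of the restricted product, being a minimum over the unit sphere of $V$, is bounded above by the norm of the image of the single chosen unit vector $x_0\in V$. Both are routine, so I would state the argument compactly rather than belabor it.
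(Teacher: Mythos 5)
Your argument is correct and is exactly the routine verification the paper has in mind: the paper gives no explicit proof of Lemma~\ref{lem4.12}, treating it (like Lemma~\ref{lem4.10}) as immediate from Definitions~\ref{def1.1}, \ref{def2.4} and \ref{def4.11}, and your use of the proximal half $\liminf_{n\to\infty}\|x_n(x_0,\sigma)\|=0$ together with the bound $\|(S_{\sigma(n)}\dotsm S_{\sigma(1)})|_V\|_{\mathrm{co}}\le\|x_n(x_0,\sigma)\|$ for a unit $x_0\in V$ is precisely the dual of that argument. No gaps.
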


Finally, it should be noted here that the nonsingularity of the constituent subsystems of System~(\ref{eq1.1}) is important for the generic fiber-chaos property. For example, if
\begin{equation*}
S_1=\left(\begin{matrix}1&0\\0&0\end{matrix}\right),\quad S_2=\left(\begin{matrix}0&0\\0&1\end{matrix}\right),\quad S_3=\left(\begin{matrix}\frac{1}{2}&0\\0&\frac{1}{2}\end{matrix}\right),
\quad
\textrm{and}\quad
S_4=\left(\begin{matrix}2&0\\0&2\end{matrix}\right),
\end{equation*}
then $\|S_3\|=\frac{1}{2}<1<2=\|S_4\|_{\mathrm{co}}$ but there does not exist any fiber-chaotic switching laws for System~(\ref{eq1.1}) in the open cylinder set $[1,2]=\{\sigma\in\varSigma_{\bK}^+\colon\sigma(1)=1,\sigma(2)=2\}$.
%%%%%%%%%%%%%%%%%%%%%%%%%%%%%%%%%%%%%%%%%%%%%%%%%%%%%%%%%%%
%%%%%%%%%%%%%%%%%%%%%%%%%%%%%%%%%%%%%%%%%%%%%%%%%%%%%%%%%%%%
\section[Coexistence and Periodical stability]{Coexistence of positive, zero and negative Lyapunov exponents and periodical stability}\label{sec5}%%%
In this section, we will prove some necessary conditions for the fiber chaos and show that every periodically stable system has no fiber-chaotic dynamics.

\subsection{Nonexistence of fiber-chaotic switching laws}
The following result is our basic tool for proving the non-fiber-chaotic dynamics in our sense of Definition~\ref{def1.1}.

\begin{theorem}\label{thm5.1}%%%
The following two statements are satisfied for System $(\ref{eq1.1})$:
\begin{enumerate}
\item[$(1)$] If $\widehat{\pmb{\rho}}=1$, then there does not exist any fiber-chaotic switching laws.
\item[$(2)$] If $\widehat{\pmb{\rho}}_{\mathrm{co}}^{}=1$,
then there does not exist any fiber-chaotic switching laws.
\end{enumerate}
\end{theorem}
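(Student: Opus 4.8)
The plan is to show, for an arbitrarily given switching law $\sigma\in\varSigma_{\bK}^+$, that one can always exhibit a \emph{single} nonzero initial state at which one of the two defining conditions of fiber-chaos fails; by Definition~\ref{def1.1} this already suffices to rule out fiber-chaoticity. For statement~$(1)$ I would target the distal condition $\limsup_n\|x_n(x_0,\sigma)\|=\infty$, and for statement~$(2)$ the proximal condition $\liminf_n\|x_n(x_0,\sigma)\|=0$. Both parts proceed by induction on the dimension $d$, the two arguments being exact duals of each other through the passage to $\{S_1^{-1},\dotsc,S_K^{-1}\}$ afforded by Lemma~\ref{lem2.2}.

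For $(1)$, the base case $d=1$ is immediate: $\widehat{\pmb{\rho}}=\max_k|S_k|=1$ forces every product to have modulus at most $1$, so $\|x_n(x_0,\sigma)\|\le\|x_0\|$ and the distal property fails. For the inductive step, assume $\widehat{\pmb{\rho}}=1$. If System~(\ref{eq1.1}) is product bounded then $\|x_n(x_0,\sigma)\|\le\beta\|x_0\|$ and distality fails outright. Otherwise the system is product unbounded with $\widehat{\pmb{\rho}}=1$, so Elsner's reduction lemma yields a common, nontrivial, proper, invariant subspace $V$. The restriction $\{S_k|_V\}$ has joint spectral radius at most $\widehat{\pmb{\rho}}=1$, since $\|(S_{\sigma(n)}\dotsm S_{\sigma(1)})|_V\|\le\|S_{\sigma(n)}\dotsm S_{\sigma(1)}\|$. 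If this restricted radius is $<1$, then $\|x_n(x_0,\sigma)\|\to0$ for every $x_0\in V\setminus\{0\}$, again destroying distality; if it equals $1$, the induction hypothesis applies to the $(\dim V)$-dimensional restricted system, which therefore has no fiber-chaotic law---contradicting the fact that a fiber-chaotic $\sigma$ for the full system restricts, by invariance of $V$, to a fiber-chaotic law on $V$.

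Statement~$(2)$ is the mirror image, replacing the norm by the co-norm throughout: the base case uses $\widehat{\pmb{\rho}}_{\mathrm{co}}^{}=\min_k|S_k|=1$ to force $\|x_n(x_0,\sigma)\|\ge\|x_0\|$; product co-boundedness kills proximality directly; and in the co-unbounded case Lemma~\ref{lem2.5} supplies the invariant subspace $V$, on which the inequality $\|(S_{\sigma(n)}\dotsm S_{\sigma(1)})|_V\|_{\mathrm{co}}\ge\|S_{\sigma(n)}\dotsm S_{\sigma(1)}\|_{\mathrm{co}}$ yields a restricted co-radius $\ge1$---a value $>1$ forcing every trajectory in $V$ to diverge (so proximality fails), while the value $1$ triggers the induction. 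I expect the main work to lie in organising this induction cleanly: checking that restriction to an invariant subspace is monotone in the relevant spectral quantity, and, crucially, that the two defining conditions of fiber-chaos descend from $\mathbb{R}^d$ to $V$ so that the inductive hypothesis can legitimately be invoked. The dichotomy between the strict-inequality case---handled by an elementary exponential estimate drawn directly from the limit definition of the joint spectral (co-)radius---and the equality case---handled by the inductive hypothesis after the Elsner reduction---is the structural heart of the argument.
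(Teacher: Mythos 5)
Your proof is correct and follows essentially the same route as the paper's: dispose of the product (co-)bounded case directly, invoke Elsner's reduction lemma (resp.\ its dual, Lemma~\ref{lem2.5}) to obtain a common invariant subspace on which the fiber-chaotic law would restrict to a fiber-chaotic law, and descend to dimension one for a contradiction. Your explicit induction merely reorganises the paper's ``repeat the argument finitely many times'' descent, and your separate treatment of the case where the restricted spectral (co-)radius is strictly less (resp.\ greater) than $1$ makes explicit a step the paper leaves implicit.
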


\begin{proof}
Let $\widehat{\pmb{\rho}}=1$. We first note that according to the Definition~\ref{def1.1} before, if System (\ref{eq1.1}) is product bounded (cf.~Sect.~\ref{sec2.2}), then it does not have any fiber-chaotic switching laws.

By contradiction, we let $\bsigma\in\varSigma_{\bK}^+$ be fiber-chaotic for System (\ref{eq1.1}) in the sense of Definition~\ref{def1.1}; and then System (\ref{eq1.1}) is completely product unbounded (cf.~Definition~\ref{def4.9}) from Lemma~\ref{lem4.10}.

From Elsner's reduction theorem (cf.~Sect.~\ref{sec2.2}), there is no loss of generality in assuming
\begin{equation*}
S_k=\begin{pmatrix}S_k^{(1)}&*_k\\0&D_k^{(1)}\end{pmatrix},\quad k=1,\dotsc,K,
\end{equation*}
such that
\begin{equation*}
S_k^{(1)}\in\mathbb{R}^{d_1\times d_1}\quad\textrm{and}\quad D_k^{(1)}\in\mathbb{R}^{(d-d_1)\times (d-d_1)},\quad k=1,\dotsc,K,
\end{equation*}
for some integer $1\le d_1<d$, and such that the inclusion system based on $\left\{S_1^{(1)}, \dotsc, S_K^{(1)}\right\}$ has the joint spectral radius $1$ and moreover, $\pmb{\sigma}$ is a fiber-chaotic switching law for it as well. Repeating this argument finite times, it follows that one can find $K$ nonsingular $1\times1$ matrices
$S_1^{(r)},S_2^{(r)},\dotsc, S_K^{(r)}$ such that their induced inclusion system has the joint spectral radius $1$ and the fiber-chaotic switching law $\bsigma$. This is a contradiction to Elsner's reduction theorem. This proves the statement (1) of Theorem~\ref{thm5.1}.

For now, let $\widehat{\pmb{\rho}}_{\mathrm{co}}^{}=1$. Similarly if System (\ref{eq1.1}) is product co-bounded (cf.~Definition~\ref{def2.4}), then it does not have any fiber-chaotic switching laws from Lemma~\ref{lem4.12}. The rest of the proof is similar to that of the statement (1) using Lemmas~\ref{lem2.3} and \ref{lem2.5}. So we omit the details.

This thus completes the proof of Theorem~\ref{thm5.1}.
\end{proof}

This result shows that our Definition~\ref{def1.1} is essentially different with Definition~\ref{def3.1} of Balde and Jouan. The following two results Theorems~\ref{thm5.2} and \ref{thm5.3} are simple consequences of Theorem~\ref{thm5.1}.

\subsection{Necessary conditions for fiber chaos}\label{sec5.2}%%%
The following theorem shows that fiber chaos implies the coexistence of positive, zero and negative Lyapunov exponents for System~(\ref{eq1.1}).

\begin{theorem}\label{thm5.2}%%%
Let $S_1,\dotsc,S_K$ be nonsingular. If System $(\ref{eq1.1})$ has a fiber-chaotic switching law, then there hold the following statements.
\begin{enumerate}
\item[$(1)$] System $(\ref{eq1.1})$ has at least one positive Lyapunov exponent, i.e, $\exists(x_0,\sigma)\in(\mathbb{R}^d\setminus\{0\})\times\varSigma_{\bK}^+$ such that $\lambda(x_0,\sigma)>0$.

\item[$(2)$] System $(\ref{eq1.1})$ has at least one negative Lyapunov exponent, i.e, $\exists(x_0,\sigma)\in(\mathbb{R}^d\setminus\{0\})\times\varSigma_{\bK}^+$ such that $\lambda(x_0,\sigma)<0$.

\item[$(3)$] For any $x_0\in\mathbb{R}^d\setminus\{0\}$, there exists some $\sigma\in\varSigma_{\bK}^+$ such that System $(\ref{eq1.1})$ has the zero Lyapunov exponent at the data $(x_0,\sigma)$.
\end{enumerate}
\end{theorem}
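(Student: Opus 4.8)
The plan is to treat the three claims separately, deriving (1) and (2) from the spectral machinery of Sections~\ref{sec2}--\ref{sec3} and then, for (3), constructing a bounded trajectory by hand. For (1) I would argue by the contrapositive of Theorem~\ref{thm5.1}(1): the existence of a fiber-chaotic law forces $\widehat{\pmb{\rho}}\neq1$. I would then exclude $\widehat{\pmb{\rho}}<1$, since in that case $\|S_{\sigma(n)}\dotsm S_{\sigma(1)}\|\le r^{n}$ eventually and uniformly in $\sigma$ for some $r\in(\widehat{\pmb{\rho}},1)$, whence $\|x_n(x_0,\sigma)\|\to0$ for every $(x_0,\sigma)$ and no law can satisfy $\limsup\|x_n\|=\infty$. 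Hence $\widehat{\pmb{\rho}}>1$, and Lemma~\ref{lem3.8} identifies $\log\widehat{\pmb{\rho}}>0$ as the maximal Lyapunov exponent at some ergodic measure, so the multiplicative ergodic theorem supplies $(x_0,\sigma)$ with $\lambda(x_0,\sigma)=\log\widehat{\pmb{\rho}}>0$. Claim (2) is the exact dual: Theorem~\ref{thm5.1}(2) gives $\widehat{\pmb{\rho}}_{\mathrm{co}}\neq1$; the possibility $\widehat{\pmb{\rho}}_{\mathrm{co}}>1$ is ruled out because then $\|x_n(x_0,\sigma)\|\ge\|S_{\sigma(n)}\dotsm S_{\sigma(1)}\|_{\mathrm{co}}\ge r^{n}$ uniformly, forcing $\liminf\|x_n\|=\infty$ and destroying the proximal property; so $\widehat{\pmb{\rho}}_{\mathrm{co}}<1$, and Lemma~\ref{lem3.8} exhibits $\log\widehat{\pmb{\rho}}_{\mathrm{co}}<0$ as the minimal Lyapunov exponent.

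For (3) the idea is to produce, for the prescribed $x_0$, a switching law whose trajectory is trapped in a fixed annulus: if $1\le\|x_n(x_0,\sigma)\|\le M^{2}$ for all large $n$, then $\lambda(x_0,\sigma)=\lim_n\frac1n\log\|x_n(x_0,\sigma)\|=0$ automatically. The essential observation is to read the fiber-chaos hypothesis not only at $x_0$ but at \emph{every} nonzero state. Indeed, if $\bsigma$ is fiber-chaotic, then applying Definition~\ref{def1.1} with an arbitrary nonzero initial vector $z$ yields $\inf_{w}\|S(w)z\|=0$ and $\sup_{w}\|S(w)z\|=\infty$, where $w$ ranges over the finite words read off from initial segments of the fiber-chaotic trajectory issuing from $z$. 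Thus from any intermediate state one can, by a finite word, both push the norm above any level and pull it below any level. Writing $M=\max_{k}\max\{\|S_k\|,\|S_k^{-1}\|\}$, each single step changes the norm by a factor in $[M^{-1},M]$, so I would build $\sigma$ greedily: whenever the current norm falls below $1$, I follow a word driving the norm above $1$ and stop at the first symbol where it exceeds $1$ (so it lands in $(1,M]$); whenever the current norm exceeds $M^{2}$, I follow a word driving the norm toward $0$ and stop the first time it drops below $M^{2}$ (so it lands in $[M,M^{2})$). These overshoot estimates confine the whole trajectory to $[M^{-1},M^{3}]$, so it is bounded and the exponent is exactly $0$.

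The step I expect to be most delicate is precisely the contractibility used in (3). It is tempting to derive it from complete product co-unboundedness (Lemma~\ref{lem4.12}), but that property alone is insufficient: the pair $S_1=\mathrm{diag}(\tfrac12,2)$, $S_2=\mathrm{diag}(2,\tfrac12)$ is completely product co-unbounded, yet every product equals $\mathrm{diag}(2^{t},2^{-t})$, so $\inf_{w}\|S(w)(1,1)^{\mathrm{T}}\|=\sqrt2>0$ and that vector cannot be contracted at all. The resolution is that the theorem hypothesizes an actual \emph{fiber-chaotic} law, whose ``for all $x_0$'' clause guarantees contractibility (and expandability) of every nonzero vector simultaneously; this is exactly what keeps the greedy construction from getting stuck, and it is the feature separating the present setting from a merely product co-unbounded one. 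The residual verifications---that each grow/shrink phase terminates in finitely many steps and that the resulting infinite law keeps the orbit in the stated annulus---are routine consequences of the factor-$M$ per-step bound.
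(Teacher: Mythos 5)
Your arguments for parts (1) and (2) are correct and essentially identical to the paper's: both routes go through Theorem~\ref{thm5.1} to exclude $\widehat{\pmb{\rho}}=1$ (resp.\ $\widehat{\pmb{\rho}}_{\mathrm{co}}=1$), dispose of $\widehat{\pmb{\rho}}<1$ (resp.\ $\widehat{\pmb{\rho}}_{\mathrm{co}}>1$) by uniform stability/divergence, and then invoke Lemma~\ref{lem3.8} and the multiplicative ergodic theorem to realize $\log\widehat{\pmb{\rho}}>0$ (resp.\ $\log\widehat{\pmb{\rho}}_{\mathrm{co}}<0$) as an actual exponent. Your observation that complete product co-unboundedness alone would not give contractibility of every vector, and that the ``for all $x_0$'' clause of Definition~\ref{def1.1} is what supplies it, is also correct and is indeed the right starting point for (3).

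However, part (3) has a genuine gap: the claim that the greedy construction confines the trajectory to $[M^{-1},M^{3}]$ is false. The factor-$M$ per-step bound controls only the \emph{overshoot at the stopping time} of each phase, not the excursion \emph{during} the phase. In a grow phase you follow $\bsigma$ from the current state $z$ until the first time the norm exceeds $1$; before that first-passage time the norm is merely $\le 1$ and may dip arbitrarily close to $0$ (dually, in a shrink phase it may spike arbitrarily high before first dropping below $M^{2}$). Worse, the first-passage times $L_k$ of successive phases are finite for each fixed starting state but are not uniformly bounded over the states the construction actually visits; if $L_k$ grows linearly or faster in the elapsed time $T_{k-1}$, then $\frac{1}{n}\log\|x_n(x_0,\sigma)\|$ can have a strictly negative $\liminf$ (it is only bounded below by $-\log M$ in general), and the limit defining $\lambda(x_0,\sigma)$ need not exist, let alone equal $0$. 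This is exactly the point at which the paper's proof does more work: it covers the compact annuli $\mathbb{T}_{\alpha 1}$ and $\mathbb{T}_{1\beta}$ by finitely many open sets $U_i,V_i$ with associated transition times $m_i,n_i$ chosen so that the norm stays on one side of the threshold throughout each transition, which yields a \emph{uniform} bound $\max_i m_i,\max_i n_i$ on all phase lengths and hence a fixed annulus containing the entire orbit. To repair your argument you need some such uniformity (compactness plus continuity of $x_n(\cdot,\bsigma)$ is the natural source); the pointwise contractibility/expandability you extracted from Definition~\ref{def1.1} is necessary but not sufficient.
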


\begin{proof}
We will proceed the proof of (1) by contradiction. Assume for any nonzero $x_0\in\mathbb{R}^d$ and any $\sigma\in\varSigma_{\bK}^+$ we have only the non-positive Lyapunov exponents
$$
\lim_{n\to\infty}\frac{1}{n}\log\|x_n(x_0,\sigma)\|\le0\quad (\textrm{if the limit exists here}).
$$
Then from Lemma~\ref{lem3.8}, it follows that the joint spectral radius $\widehat{\pmb{\rho}}$ of System (\ref{eq1.1}) is less than or equal to $1$. Thus a contradiction comes from Theorem~\ref{thm5.1}.\,(1). This thus proves the statement (1).

The statement (2) may follows similarly. Assume for any nonzero $x_0\in\mathbb{R}^d$ and any $\sigma\in\varSigma_{\bK}^+$ we have only the nonnegative Lyapunov exponents
$$
\lim_{n\to\infty}\frac{1}{n}\log\|x_n(x_0,\sigma)\|\ge0\quad (\textrm{if the limit exists here}).
$$
Then from Lemma~\ref{lem3.8}, it follows that the joint spectral co-radius $\widehat{\pmb{\rho}}_{\mathrm{co}}^{}$ of System (\ref{eq1.1}) is bigger than or equal to $1$. If $\widehat{\pmb{\rho}}_{\mathrm{co}}^{}>1$, then System~(\ref{eq1.1}) is uniformly asymptotically divergent and this is a contradiction.
If $\widehat{\pmb{\rho}}_{\mathrm{co}}^{}=1$, then a contradiction comes from Theorem~\ref{thm5.1}.\,(2). This proves the statement (2).

To prove the statement (3), let $\bsigma=(\bsigma(n))_{n\ge1}$ be a fiber-chaotic switching law of System $(\ref{eq1.1})$. Let $\mathbb{S}^{d-1}$ be the unit sphere of the state space $\mathbb{R}^d$. Let $0<\alpha<1<\beta<\infty$ be such that $\alpha<\|S_k\|_{\mathrm{co}}\le\|S_k\|<\beta$ for each $1\le k\le K$. Set
\begin{equation*}
\mathbb{T}_{\alpha1}=\{u\in\mathbb{R}^d\colon\alpha\le\|u\|<1\}\quad\textrm{and}\quad\mathbb{T}_{1\beta}=\{u\in\mathbb{R}^d\colon1<\|u\|\le\beta\}.
\end{equation*}

Since $\mathbb{S}^{d-1}\cup\mathbb{T}_{1\beta}$ is compact and $x_n(x_0,\bsigma)$ is continuous with respect to $x_0\in\mathbb{S}^{d-1}\cup\mathbb{T}_{1\beta}$ for every $n\ge1$, from the fiber-chaoticity of $\bsigma$ it follows that one can find an open cover of $\mathbb{T}_{1\beta}$, say $V_1,\dotsc,V_r$, and positive integers $n_1,\dotsc,n_r$ such that
\begin{equation*}
\alpha<\|x_{n_i}(y_0,\bsigma)\|<1\quad\textrm{and}\quad1\le\|x_n(y_0,\bsigma)\|\textrm{ for }1\le n<n_i,\quad \forall y_0\in V_i,\ i=1,\dotsc,r.
\end{equation*}
Similarly, one can find an open cover of $\mathbb{T}_{\alpha1}$, say $U_1,\dotsc,U_r$, and positive integers $m_1,\dotsc,m_r$ such that
\begin{equation*}
1<\|x_{m_i}(y_0,\bsigma)\|<\beta\quad\textrm{and}\quad \|x_n(y_0,\bsigma)\|\le1\textrm{ for }1\le n<m_i,\quad \forall y_0\in U_i,\ i=1,\dotsc,r.
\end{equation*}
Now, let $x_0\in\mathbb{S}^{d-1}$ be arbitrarily given. We are going to define a switching law $\sigma$ such that
$$\lambda(x_0,\sigma)=\lim_{n\to\infty}\frac{1}{n}\log\|x_n(x_0,\sigma)\|=0$$
as follows: For some $\ell_1\ge1$, let $\sigma(1), \dotsc, \sigma(\ell_1)$ are defined such that $x_1=S_{\sigma(1)}x_0, \dotsc, x_{\ell_1-1}=S_{\sigma(\ell_1)}x_{\ell_1-2}$ all belong to $\mathbb{S}^{d-1}$ and $x_{\ell_1}=S_{\sigma(\ell_1)}x_{\ell_1-1}$ belongs to $\mathbb{T}_{\alpha1}\cup\mathbb{T}_{1\beta}$. (Note here that we can always find such an $\ell_1$ by the fiber-chaoticity of $\bsigma$ for System~(\ref{eq1.1}).)

If $x_{\ell_1}$ belongs to $\mathbb{T}_{1\beta}$ and without loss of generality let $x_{\ell_1}\in V_{i_1}$; then $\ell_2=n_{i_1}$ and define
\begin{equation*}
\sigma(\ell_1+1)=\bsigma(1),\ \dotsc,\ \sigma(\ell_1+n_{i_1})=\bsigma(n_{i_1}).
\end{equation*}
If $x_{\ell_1}$ belongs to $\mathbb{T}_{\alpha1}$ and without loss of generality let $x_{\ell_1}\in U_{i_1}$; then $\ell_2=m_{i_1}$ and define
\begin{equation*}
\sigma(\ell_1+1)=\bsigma(1),\ \dotsc,\ \sigma(\ell_1+m_{i_1})=\bsigma(m_{i_1}).
\end{equation*}
Now for $x_{\ell_1+\ell_2}=S_{\sigma(\ell_1+\ell_2)}\dotsm S_{\sigma(\ell_1+1)}x_{\ell_1}\in\mathbb{T}_{\alpha1}\cup\mathbb{S}^{d-1}\cup\mathbb{T}_{1\beta}$, repeating the above construction, we can define the desired switching law $\sigma$ with $\lambda(x_0,\sigma)=0$, because $\alpha^{\max\{m_1,\dotsc,m_r\}/2}\le\|x_n(x_0,\sigma)\|\le\beta^{\max\{n_1,\dotsc,n_r\}/2}$ for all $n\ge1$.

This thus completes the proof of Theorem~\ref{thm5.2}.
\end{proof}

The above Theorem~\ref{thm5.2} presents us with some necessary conditions for the fiber chaos of System (\ref{eq1.1}). However, we should note that these conditions are not sufficient for fiber chaos. For example, let
$$
S_1=\left[\begin{matrix}\frac{1}{2}&0\\0&2\end{matrix}\right]\quad \textrm{and}\quad  S_2=\left[\begin{matrix}1&0\\0&1\end{matrix}\right].
$$
Then System~(\ref{eq1.1}) based on $S_1$ and $S_2$ has simultaneously a positive, zero and negative Lyapunov exponents, but it does not have any fiber-chaotic switching laws.

\subsection{Periodically stable systems}
Recall that System $(\ref{eq1.1})$ described as in Section~\ref{sec1} is called, from e.g. \cite{Gur-LAA95, SWMWK, DHX-aut}, \textit{periodically stable} if for any finite-length words $(k_1,\dotsc,k_\pi)\in\bK^\pi, \pi\ge1$, there holds that the spectral radius $\rho(S_{k_\pi}\dotsm S_{k_1})$ of $S_{k_\pi}\dotsm S_{k_1}$ is less than $1$. Then a periodically stable System $(\ref{eq1.1})$ does not need to be absolutely stable
from \cite{BM, BTV, Koz07, HMST}; but it is almost surely exponentially stable in terms of some ergodic measures, see \cite{DHX-aut} and \cite[Theorem~C$^\prime$]{Dai-JDE}.

The following Theorem~\ref{thm5.3} shows that a periodically stable linear inclusion system
is ``simple'' from our viewpoint of fiber-chaos dynamics.

\begin{theorem}\label{thm5.3}%%%
If System $(\ref{eq1.1})$ is periodically stable, then its every switching law is not fiber-chaotic.
\end{theorem}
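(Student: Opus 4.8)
The plan is to reduce Theorem~\ref{thm5.3} to Theorem~\ref{thm5.1}\,(1) by showing that periodical stability forces the joint spectral radius to satisfy $\widehat{\pmb{\rho}}\le1$, and then to dispose of the two cases $\widehat{\pmb{\rho}}<1$ and $\widehat{\pmb{\rho}}=1$ separately.

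First I would extract the spectral consequence of periodical stability. By hypothesis, every finite product $S_{\sigma(n)}\dotsm S_{\sigma(1)}$ has spectral radius strictly less than $1$, so $\sqrt[n]{\rho(S_{\sigma(n)}\dotsm S_{\sigma(1)})}<1$ for every word and every length $n$. Taking the maximum over $\sigma$ (a finite maximum for each fixed $n$, since there are at most $K^n$ products of length $n$) and then the $\limsup$ in $n$ yields, straight from the definition of the generalized spectral radius, $\pmb{\rho}\le1$. Berger-Wang's formula then gives $\widehat{\pmb{\rho}}=\pmb{\rho}\le1$.

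Next I would split according to whether this inequality is strict. If $\widehat{\pmb{\rho}}<1$, then by the $\inf$-form of $\widehat{\pmb{\rho}}$ there is an $n$ with $\max_{\sigma}\|S_{\sigma(n)}\dotsm S_{\sigma(1)}\|<1$, so the products decay uniformly and exponentially; in particular $\lim_{n\to\infty}\|x_n(x_0,\sigma)\|=0$ for every $x_0$ and every $\sigma$, whence the requirement $\limsup_{n\to\infty}\|x_n(x_0,\sigma)\|=\infty$ in Definition~\ref{def1.1} fails and no switching law can be fiber-chaotic. If instead $\widehat{\pmb{\rho}}=1$, then Theorem~\ref{thm5.1}\,(1) applies directly and again no fiber-chaotic switching law exists. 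Combining the two cases completes the argument.

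The only delicate point is the first step: one must ensure that periodical stability—a pointwise condition on the spectral radius of each individual product—really controls the asymptotic quantity $\widehat{\pmb{\rho}}$. This is exactly where Berger-Wang's formula is indispensable, since it is the identity $\widehat{\pmb{\rho}}=\pmb{\rho}$ that transfers the per-product spectral bound into a bound on the joint spectral radius; the pointwise bounds $\rho(\,\cdot\,)<1$ would not by themselves preclude $\widehat{\pmb{\rho}}>1$ (the norms of products may still grow). Everything after this reduction is a routine dichotomy feeding into the already-proven Theorem~\ref{thm5.1}.
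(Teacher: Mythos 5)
Your proposal is correct and follows essentially the same route as the paper's proof: deduce $\widehat{\pmb{\rho}}\le1$ from periodical stability via Berger--Wang, then split into the cases $\widehat{\pmb{\rho}}<1$ (uniform exponential stability, so no unbounded trajectories) and $\widehat{\pmb{\rho}}=1$ (apply Theorem~\ref{thm5.1}\,(1)). The only cosmetic difference is that the paper cites Barabanov for the strict-inequality case, whereas you derive the uniform decay directly from the infimum form of the joint spectral radius, which is a perfectly valid substitute.
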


\begin{proof}
Since System $(\ref{eq1.1})$ is periodically stable, it holds that the joint spectral radius $\widehat{\pmb{\rho}}$ is less than or equal to $1$ from the Berger-Wang spectral formula (cf.~Sect.~\ref{sec2}). If $\widehat{\pmb{\rho}}<1$, then System~(\ref{eq1.1}) is uniformly exponentially stable from \cite{Bar} and so the statement holds. If $\widehat{\pmb{\rho}}=1$, then the statement follows from Theorem~\ref{thm5.1}.\,(1).

This completes the proof of Theorem~\ref{thm5.3}.
\end{proof}

In fact, the following Lemma~\ref{lem5.4} shows a lower dimensional periodically stable system is product bounded.
It is a well-known fact that for System (\ref{eq1.1}), if it holds that $\rho(S_{\sigma(n)}\dotsm S_{\sigma(1)})\le1$ for all $\sigma\in\varSigma_{\bK}^+$, then
\begin{equation*}
\max_{\sigma\in\varSigma_{\bK}^+}\|S_{\sigma(n)}\dotsm S_{\sigma(1)}\|=\mathrm{O}(n^{d-1});
\end{equation*}
see, e.g., \cite{BW, Bel, Lur}. In the periodically stable case (or equivalently, $\rho(S_{\sigma(n)}\dotsm S_{\sigma(1)})<1\;\forall \sigma\in\varSigma_{\bK}^+$ and $n\ge1$), we can get a more subtle estimate as follows.

\begin{lem}\label{lem5.4}%%%
Let System $(\ref{eq1.1})$ be periodically stable with dimension $d\ge2$. Then
\begin{equation*}
\max_{\sigma\in\varSigma_{\bK}^+}\|S_{\sigma(n)}\dotsm S_{\sigma(1)}\|=\mathrm{O}(n^{\lfloor d/2-1\rfloor}).
\end{equation*}
In particular, if $2\le d\le3$ then System $(\ref{eq1.1})$ is product
bounded in $\mathbb{R}^{d\times d}$; if $4\le d\le 5$ then $\|S_{\sigma(n)}\dotsm S_{\sigma(1)}\|$ is at most linearly increasing.
\end{lem}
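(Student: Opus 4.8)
The plan is to combine an iterated block-triangular reduction with the extremal-norm theory for irreducible families, and to extract the polynomial degree from a purely combinatorial count of how many ``resonant'' diagonal blocks a product can pass through. First I would dispose of the trivial case: by the Berger--Wang formula, periodic stability forces $\widehat{\pmb{\rho}}\le 1$, and if $\widehat{\pmb{\rho}}<1$ the system is uniformly exponentially stable (cf.~\cite{Bar}) and the bound holds with degree $0$, so I assume $\widehat{\pmb{\rho}}=1$ from now on. Next, choosing a maximal flag of common invariant subspaces $\{0\}=V_0\subset V_1\subset\dotsm\subset V_p=\mathbb{R}^d$, I would simultaneously put $S_1,\dotsc,S_K$ in block-upper-triangular form with \emph{irreducible} diagonal blocks $S_k^{(1)},\dotsc,S_k^{(p)}$ acting on the successive quotients $V_i/V_{i-1}$. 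Since every eigenvalue of a diagonal-block product $S_{\sigma(n)}^{(i)}\dotsm S_{\sigma(1)}^{(i)}$ is an eigenvalue of the full product $S_{\sigma(n)}\dotsm S_{\sigma(1)}$, each diagonal block is again periodically stable, and its joint spectral radius $\widehat{\pmb{\rho}}^{(i)}$ satisfies $\widehat{\pmb{\rho}}^{(i)}\le 1$. I will call the block \emph{critical} if $\widehat{\pmb{\rho}}^{(i)}=1$ and \emph{subcritical} if $\widehat{\pmb{\rho}}^{(i)}<1$.

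Second, I would establish two facts about the diagonal blocks. (a)~Every critical block is product bounded: irreducibility together with $\widehat{\pmb{\rho}}^{(i)}=1$ yields an extremal norm under which each $S_k^{(i)}$ has norm $\le 1$, whence all its products are bounded by $1$; every subcritical block is exponentially contracting in a suitable norm. (b)~Every critical block has dimension $\ge 2$. Indeed, a one-dimensional periodically stable family consists of scalars $s_k$ with $|s_{k_\pi}\dotsm s_{k_1}|<1$ for all words, in particular $|s_k|<1$, so its joint spectral radius is $\max_k|s_k|<1$ and such a block is necessarily subcritical. Consequently, if $p_1$ denotes the number of critical blocks then $2p_1\le\sum_{i}\dim(V_i/V_{i-1})=d$, that is $p_1\le\lfloor d/2\rfloor$.

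Third, I would run the growth estimate in an adapted norm that is the block-diagonal combination of the extremal norms above, so that every diagonal block has norm $\le 1$, critical diagonal-block products are bounded by $1$, subcritical ones decay geometrically, and all off-diagonal entries are bounded by a fixed constant. Writing the $(i,j)$-block ($i<j$) of $S_{\sigma(n)}\dotsm S_{\sigma(1)}$ as the telescoping sum over the times at which the product jumps from one diagonal level to the next, each passage through a subcritical block contributes a geometrically summable (hence bounded) factor, while each passage through a critical block contributes at most a factor linear in the number of available jump times. An induction on maximal chains of consecutive critical blocks then gives $\|S_{\sigma(n)}\dotsm S_{\sigma(1)}\|=\mathrm{O}(n^{p_1-1})$ uniformly in $\sigma$. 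Combining with $p_1\le\lfloor d/2\rfloor$ yields the claimed $\mathrm{O}(n^{\lfloor d/2\rfloor-1})=\mathrm{O}(n^{\lfloor d/2-1\rfloor})$; the special cases $2\le d\le3$ (so $p_1\le1$, degree $0$, product bounded) and $4\le d\le5$ (so $p_1\le2$, degree $\le1$) follow at once.

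The main obstacle will be the bookkeeping in the last step: one must verify that subcritical blocks genuinely absorb the polynomial growth, so that only critical blocks raise the degree, and that the linear-in-$n$ contribution of each critical passage multiplies correctly along a chain to produce exactly $n^{p_1-1}$ rather than something larger. Carrying this out cleanly requires fixing the single adapted norm before expanding the off-diagonal sums and organizing the induction by the length of a longest contiguous run of critical blocks rather than by $p$ itself.
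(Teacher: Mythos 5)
Your proposal is correct and rests on the same two pillars as the paper's own proof: an Elsner-type reduction to block-upper-triangular form, and the observation that one-dimensional (more generally, subcritical) diagonal blocks contract geometrically and therefore absorb the off-diagonal sums, while each further non-contracting block of dimension $\ge 2$ costs at most one power of $n$. The difference is organizational. The paper argues by induction on $d$, splitting off only \emph{two} blocks at a time via Elsner's lemma and estimating the single telescoping sum $\spadesuit_{\sigma(n)\dotsm\sigma(1)}=\sum_{j=1}^{n}(\cdot)\,B_{\sigma(j)}\,(\cdot)$; the exponent arithmetic is carried by the inequality $\lfloor d_1/2-1\rfloor+\lfloor d_2/2-1\rfloor+1\le\lfloor d/2-1\rfloor$, and its Case~I (one factor one-dimensional) is exactly your ``a subcritical block absorbs a power of $n$.'' You instead fix a maximal flag once and for all and count critical blocks, which yields the marginally sharper bound $\mathrm{O}(n^{p_1-1})$ with $p_1$ the number of critical blocks, but it forces you to control sums over monotone multi-jump paths through $p$ levels rather than a single-index sum; that is precisely the bookkeeping you flag as the main obstacle, and the paper's binary induction is the standard device for sidestepping it. Two small points to tidy: the product boundedness of an irreducible diagonal block with joint spectral radius $1$ is just the contrapositive of the Elsner reduction lemma already quoted in Section~2.2, so no appeal to the existence of a Barabanov/extremal norm is needed; and $\mathrm{O}(n^{p_1-1})$ must be read as exponential decay (not a negative power of $n$) when $p_1=0$, which is still consistent with the claimed $\mathrm{O}(n^{\lfloor d/2-1\rfloor})$ for $d\ge2$.
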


Here $\lfloor x\rfloor$ represents the largest integer which is not greater than
$x$ for any $x\ge0$ like $\lfloor 0.3\rfloor=0, \lfloor 1.2\rfloor=1$.

\begin{proof}
We will prove the statement by induction on the dimension $d$ of System (\ref{eq1.1}). We first notice that if System (\ref{eq1.1}) is periodically stable with dimension $d=1$, then
there exists a constant $0<\gamma<1$
so that
$$\|S_{\sigma(n)}\dotsm S_{\sigma(1)}\|\le\gamma^n$$
for all $\sigma\in\varSigma_{\bK}^+$ and all $n\ge1$.

Let $m\ge2$ be arbitrarily given. Assume that the statement is true for $d<m$. It suffices to claim that the
statement is also true for $d=m$.

Let $d=m$. The periodical stability of System (\ref{eq1.1}) implies that the joint spectral radius $\pmb{\rho}\le
1$. If System (\ref{eq1.1}) is product bounded then we are done. Otherwise, according to Elsner's reduction theorem we can assume that each $S_k$ has the form
\begin{equation*}
S_k=\begin{pmatrix}S_k^{1,1}& B_k\\ 0& S_k^{2,2}\end{pmatrix},\quad k=1,\dotsc,K,
\end{equation*}
where
$\left\{S_1^{1,1},\dotsc,S_K^{1,1}\right\}\subset\mathbb{R}^{d_1\times
d_1}$, $\{B_1,\dotsc,B_K\}\subset\mathbb{R}^{d_1\times(m-d_1)}$, and
$\left\{S_1^{2,2},\dotsc,S_K^{2,2}\right\}\subset\mathbb{R}^{(m-d_1)\times
(m-d_1)}$ for some $1\leq d_1<m$.
Thus, for any $\sigma\in\varSigma_{\bK}^+$ and $n\ge1$
\begin{equation*}
S_{\sigma(n)}\dotsc S_{\sigma(1)}=\begin{pmatrix}S_{\sigma(n)}^{1,1}\dotsm S_{\sigma(1)}^{1,1}&\spadesuit_{\sigma(n)\dotsm\sigma(1)}\\
0& S_{i_1}^{(2,2)}\cdots
S_{\sigma(n)}^{2,2}\dotsm S_{\sigma(1)}^{2,2}\end{pmatrix}
\end{equation*}
where
\begin{equation*}
\spadesuit_{\sigma(n)\dotsm\sigma(1)}=\sum_{j=1}^{n}S_{\sigma(n)}^{1,1}\dotsm S_{\sigma(j+1)}^{1,1}B_{\sigma(j-1)}S_{\sigma(n)}^{2,2}\dotsc S_{\sigma(1)}^{2,2}.
\end{equation*}
We can choose a constant $C_1>0$ such that
\begin{equation*}
\|B_k\|\le C_1\qquad\forall k=1,\dotsc,K.
\end{equation*}
Now we only need to consider the following two cases.

Case I: When $d_1=1$ or $m-d_1=1$, we can obtain either
\begin{equation*}
\|S_k^{1,1}\|\le\gamma<1\quad\textrm{for }1\le k\le K
\end{equation*}
or
\begin{equation*}
\|S_k^{2,2}\|\le\gamma<1\quad\textrm{for }1\le k\le K,
\end{equation*}
for some constant $0<\gamma<1$. Hence we have
\begin{equation*}
\|\spadesuit_{\sigma(n)\dotsm\sigma(1)}\|\leq
\begin{cases}
C_1C& \textrm{if }m=2;\\
C_1Cn^{\lfloor m/2-1\rfloor}(1+\gamma+\dotsm+\gamma^{n-1})& \textrm{if }m>2,
\end{cases}
\end{equation*}
by the induction assumption, for some constant $C>0$ that is independent of the choices of the switching law $\sigma$. Thus the statement holds in this case.

Case II: When $2\leq d_1<m-1$, according to the induction assumption, it
follows that
\begin{equation*}
\|\spadesuit_{\sigma(n)\dotsm\sigma(1)}\|\leq
C_1Cn^{\lfloor d_1/2-1\rfloor}Cn^{\lfloor(m-d_1)/2-1\rfloor}n
\leq C_1C^2n^{[m/2-1]}.
\end{equation*}
Here we have used the following inequality:
\begin{equation*}
\left\lfloor\frac{d_1}{2}-1\right\rfloor+\left\lfloor\frac{m-d_1}{2}-1\right\rfloor+1\leq \left\lfloor\frac{m}{2}-1\right\rfloor,
\end{equation*}
for any $2\leq d_1<m-1$, which implies the desired result.

This completes the proof of Lemma~\ref{lem5.4}.
\end{proof}

This lemma together with Lemma~\ref{lem3.3} implies the following stability result.

\begin{theorem}\label{thm5.5}%%%
Let System $(\ref{eq1.1})$ be periodically stable with dimension $2\le d\le 3$. Then,
\begin{equation*}
\|S_{\sigma(n)}\dotsm S_{\sigma(1)}\|\to0 \quad \textrm{as }n\to+\infty,
\end{equation*}
for every Balde-Jouan nonchaotic switching laws $\sigma\in\varSigma_{\bK}^+$.
\end{theorem}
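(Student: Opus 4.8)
The plan is to combine the product boundedness supplied by Lemma~\ref{lem5.4} with the existence of arbitrarily long constant blocks guaranteed by Lemma~\ref{lem3.3}. First I would invoke Lemma~\ref{lem5.4}: since $2\le d\le 3$, System~(\ref{eq1.1}) is product bounded, so there is a constant $1\le\beta<\infty$ with $\|S_{\sigma(n)}\dotsm S_{\sigma(1)}\|\le\beta$ for every $\sigma\in\varSigma_{\bK}^+$ and every $n\ge1$. The same bound then applies to any consecutive middle segment: for $0\le N<n$ the product $S_{\sigma(n)}\dotsm S_{\sigma(N+1)}$ is the length-$(n-N)$ product associated to the shifted law $(\theta^+)^N\sigma$, so $\|S_{\sigma(n)}\dotsm S_{\sigma(N+1)}\|\le\beta$ as well. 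This ``memoryless'' feature is the property I will exploit to keep the norm small once it drops.

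Next I would record what periodic stability says about single symbols. Applying the definition to the length-one word $(k)$ gives $\rho(S_k)<1$ for every $k\in\bK$, whence by Gel'fand's formula $\|S_k^\ell\|\to0$ as $\ell\to\infty$. Now fix a Balde--Jouan nonchaotic law $\sigma$. By Lemma~\ref{lem3.3} there is a symbol $k\in\bK$ admitting constant blocks of arbitrary length in $\sigma$; that is, for every $\ell\ge1$ there is a position $N$ with $\sigma(N+1)=\dotsm=\sigma(N+\ell)=k$.

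The core step propagates smallness forward. Given $\varepsilon>0$, set $\delta=\varepsilon/\beta$ and, using $\|S_k^\ell\|\to0$, choose $\ell_0$ so large that $\beta\,\|S_k^{\ell_0}\|<\delta$; then select via Lemma~\ref{lem3.3} a block $\sigma(N+1)=\dotsm=\sigma(N+\ell_0)=k$. At position $N+\ell_0$,
\[
\|S_{\sigma(N+\ell_0)}\dotsm S_{\sigma(1)}\|=\|S_k^{\ell_0}\,(S_{\sigma(N)}\dotsm S_{\sigma(1)})\|\le\|S_k^{\ell_0}\|\cdot\beta<\delta .
\]
For every $n\ge N+\ell_0$, factoring off the segment past position $N+\ell_0$ and using the middle-segment bound $\beta$ yields $\|S_{\sigma(n)}\dotsm S_{\sigma(1)}\|\le\beta\delta=\varepsilon$. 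As $\varepsilon$ was arbitrary, $\|S_{\sigma(n)}\dotsm S_{\sigma(1)}\|\to0$, which is the matrix-norm (hence uniform) statement claimed.

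I expect the main obstacle to be conceptual rather than computational: one must see that product boundedness does double duty. The long constant blocks of the contraction $S_k$ drive the full product below any prescribed threshold, while product boundedness prevents the norm from ever re-expanding beyond that threshold afterwards. Without the uniform middle-segment bound---which is exactly where the dimension restriction $2\le d\le3$ of Lemma~\ref{lem5.4} is used---Lemma~\ref{lem3.3} alone would deliver only $\liminf_{n\to\infty}\|S_{\sigma(n)}\dotsm S_{\sigma(1)}\|=0$ rather than a genuine limit.
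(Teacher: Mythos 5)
Your proof is correct and follows essentially the same route as the paper's: both rest on the product boundedness supplied by Lemma~\ref{lem5.4} together with the arbitrarily long constant blocks of a single symbol from Lemma~\ref{lem3.3} and the fact that periodic stability forces $\rho(S_k)<1$, hence $\|S_k^\ell\|\to 0$. The only cosmetic difference is that the paper first passes to an adapted norm in which every $S_k$ is a non-expansion and then lets infinitely many disjoint blocks of a fixed length $N$ with $\pmb{\|}S_\kappa^N\pmb{\|}<1$ drive the product to zero, whereas you keep the Euclidean norm, pay the uniform constant $\beta$ twice (once for the prefix, once for the tail), and use one sufficiently long block per $\varepsilon$.
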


\begin{proof}
From Lemma~\ref{lem5.4}, it follows that System (\ref{eq1.1}) is product bounded. So, we can define a norm $\pmb{\|}\cdot\pmb{\|}$ on $\mathbb{R}^{d\times d}$ such that $\pmb{\|}S_k\pmb{\|}\le1$ for all $k=1,\dotsc,K$.

Let $\sigma\in\varSigma_{\bK}^+$ be an arbitrary nonchaotic switching laws of Balde and Jouan as in Definition~\ref{def3.1}. Then we can choose some $\kappa\in\{1,\dotsc,K\}$ as in Lemma~\ref{lem3.3}. Since $\rho(S_\kappa)<1$, we can find some $N>1$ such that $\pmb{\|}S_\kappa^N\pmb{\|}<1$. Then the statement comes from Lemma~\ref{lem3.3} and the sub-multiplicity of matrix norm.
\end{proof}

In fact from \cite{Dai-JDE}, we can see that $\|S_{\sigma(n)}\dotsm S_{\sigma(1)}\|\to0$ exponentially fast as $n\to+\infty$ in Theorem~\ref{thm5.5}.

Extremal norm has been a popular tool for problems of stability and joint spectral radius. Sufficient conditions for the
existence of extremal norms are obtained in \cite{Wir}. As an aside of Lemma~\ref{lem5.4} is the following statement.

\begin{prop}\label{prop5.6}
Let System $(\ref{eq1.1})$ be periodically stable with dimension $2\le d\le 3$. Then, there holds at least one of the following two statements.
\begin{enumerate}
\item[$(1)$] $\mathrm{(Finiteness\ of\ spectrum)}$ There is a word $(k_1,\dotsc,k_\pi)\in\bK^\pi$, for some $\pi\ge1$, such that
\begin{equation*}
\widehat{\pmb{\rho}}=\sqrt[\pi]{\rho(S_{k_\pi}\dotsm S_{k_1})}.
\end{equation*}

\item[$(2)$] $\mathrm{(Finiteness\ of\ norm)}$ There is some $($extremal$)$ norm $\|\cdot\|_*$, defined on $\mathbb{R}^{d\times d}$, such that
\begin{equation*}
\widehat{\pmb{\rho}}=\max_{\sigma\in\varSigma_{\bK}^+}\sqrt[n]{\|S_{\sigma(n)}\dotsm S_{\sigma(1)}\|_*}\quad\forall n\ge1.
\end{equation*}
\end{enumerate}
Here $\widehat{\pmb{\rho}}$ is the joint spectral radius of System $(\ref{eq1.1})$ defined as in Section~\ref{sec2.1}.
\end{prop}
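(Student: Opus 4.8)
The plan is to argue by induction on the dimension $d\in\{1,2,3\}$, combining Lemma~\ref{lem5.4} with the Elsner-type reduction theorem of Section~\ref{sec2.2}. Write $\rho=\widehat{\pmb{\rho}}$; periodic stability together with Berger--Wang's formula gives $\rho\le1$, and for $2\le d\le3$ Lemma~\ref{lem5.4} makes the system product bounded. After normalising to $T_k=\rho^{-1}S_k$, so that $\{T_1,\dotsc,T_K\}$ has joint spectral radius $1$, the key observation is that statement~(2) holds if and only if $\{T_k\}$ is product bounded: in that case the orbit norm $\|x\|_*=\sup_w\|T_wx\|$, the supremum running over all finite words $w$ (the empty word included), is a finite extremal vector norm, and the operator norm it induces on $\mathbb{R}^{d\times d}$ satisfies $\|S_{\sigma(n)}\dotsm S_{\sigma(1)}\|_*\le\rho^n$ for all $\sigma,n$, with equality in the maximum over $\sigma$ for each $n$ (the reverse inequality being the norm-independence of the joint spectral radius); this is precisely~(2).

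The base case $d=1$ is immediate, since a scalar periodically stable system has $\rho=\max_k|s_k|=\rho(S_{k_0})$ for the maximising symbol $k_0$, so~(1) holds with $\pi=1$. For the inductive step I first test whether $\{T_k\}$ is product bounded. If it is, the orbit norm above yields~(2). If it is not, then $\{T_k\}$ is product unbounded with joint spectral radius $1$, so Elsner's reduction lemma forces it --- hence also $\{S_k\}$ --- to be reducible; after a change of basis I may assume $S_k=\left(\begin{smallmatrix}P_k&*_k\\0&Q_k\end{smallmatrix}\right)$ with $1\le\dim P_k,\dim Q_k<d$. Since $\rho(P_w),\rho(Q_w)\le\rho(S_w)<1$, both diagonal blocks are periodically stable of strictly smaller dimension, with $\rho=\max(\rho_P,\rho_Q)$ where $\rho_P,\rho_Q$ are their joint spectral radii; by the inductive hypothesis each block satisfies~(1) or~(2).

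It remains to combine the two blocks, and I would organise this by which block is \emph{critical}, i.e.\ has joint spectral radius equal to $\rho$. Because $d\le3$, at least one block is one-dimensional. If a critical block is scalar, with maximal entry $\rho$ attained at a symbol $k_0$, then $\rho(S_{k_0})=\max(\rho(P_{k_0}),\rho(Q_{k_0}))=\rho$ (the other block contributing spectral radius $\le\rho$), so~(1) holds with $\pi=1$; this also settles the case of two critical blocks. Otherwise the single critical block $R$ is two-dimensional and its companion is strictly subcritical, say with joint spectral radius $\rho'<\rho$. If $R$ satisfies~(1) via a word $w$ with $\rho(R_w)=\rho^{|w|}$, the same word gives $\rho(S_w)=\rho^{|w|}$, because the subcritical block contributes $\rho(\cdot)\le(\rho')^{|w|}<\rho^{|w|}$; thus~(1) holds for the full system. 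If $R$ satisfies only~(2), then $\{\rho^{-1}R_k\}$ is product bounded while the normalised companion decays geometrically, and the off-diagonal entry of a normalised product $T_w$ is a sum, over the position at which the coupling $*$ is inserted, of terms each bounded by the product bound for $R$ times a geometric factor coming from the strictly subcritical companion; this sum is dominated by a convergent geometric series, so $\{T_k\}$ is product bounded and~(2) follows.

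The technical heart of the argument is this last estimate: proving that a strictly subcritical companion block cannot spoil the product boundedness of the normalised system, so that the extremal norm of the critical block survives the upper-triangular coupling. The remaining ingredients --- the scalar-block comparison, the finiteness-lifting, and the reduction to lower dimension --- are routine once Berger--Wang's formula, Lemma~\ref{lem5.4}, and Elsner's reduction lemma are in hand.
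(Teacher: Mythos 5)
Your proof is correct, but it takes a genuinely longer route than the paper's. The paper's entire argument is: if (1) fails, then $\rho(S_{k_\pi}\dotsm S_{k_1})<\widehat{\pmb{\rho}}^{\,\pi}$ for \emph{every} word, so the normalized family $\{\widehat{\pmb{\rho}}^{\,-1}S_k\}$ is itself periodically stable with joint spectral radius $1$; Lemma~\ref{lem5.4} applied to that normalized family gives product boundedness in one stroke, and the orbit norm then yields (2). You never make this single observation --- that the failure of (1) is precisely what lets the normalization preserve periodic stability --- and so you cannot invoke Lemma~\ref{lem5.4} for $\{T_k\}$ directly; instead you re-derive its product boundedness by hand, running Elsner's reduction and an induction on $d\in\{1,2,3\}$ and extracting the spectral-finiteness alternative at each level (scalar critical block gives (1); a two-dimensional critical block with strictly subcritical companion gives a convergent geometric bound on the coupling term, hence product boundedness, hence (2)). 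All of your steps check out: the identity $\rho(S_w)=\max(\rho(P_w),\rho(Q_w))$ for triangular products, the lifting of a spectrum-attaining word from the critical block, the off-diagonal estimate, and the organizing dichotomy ``(2) holds iff the normalized family is product bounded'' are each sound. What your version buys is a self-contained argument that essentially inlines (a normalized variant of) the proof of Lemma~\ref{lem5.4}; note that your opening citation of that lemma, which only gives product boundedness of the un-normalized $\{S_k\}$, is never actually used in your argument. What the paper's version buys is brevity, since the induction and the Elsner reduction are already packaged inside Lemma~\ref{lem5.4}.
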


\begin{proof}
If the statement (1) of Proposition~\ref{prop5.6} holds, then we are done. Otherwise, without loss of generality we may assume System
(\ref{eq1.1}) is periodically stable and $\widehat{\pmb{\rho}}=1$ from Berger-Wang's formula. Thus it is product bounded according to Lemma~\ref{lem5.4}.
Then there exists a vector norm $\|\cdot\|_*$ defined on $\mathbb{R}^{d}$, where $2\le d\le 3$, such that $\|S_k\|_*\le 1$ for any $1\le k\le K$. Therefore, one has
$$\sqrt[n]{\|S_{\sigma(n)}\dotsm S_{\sigma(1)}\|_*}\le \widehat{\pmb{\rho}}\quad\forall \sigma\in\varSigma_{\bK}^+\textrm{ and }n\ge1.$$
This implies that
\begin{equation*}
\widehat{\pmb{\rho}}=\inf_{n\ge1}\left\{\max_{\sigma\in\varSigma_{\bK}^+}\sqrt[n]{\|S_{\sigma(n)}\dotsm S_{\sigma(1)}\|_*}\right\}\le\sup_{n\ge1}\left\{\max_{\sigma\in\varSigma_{\bK}^+}\sqrt[n]{\|S_{\sigma(n)}\dotsm S_{\sigma(1)}\|_*}\right\}\le\widehat{\pmb{\rho}},
\end{equation*}
and the proof of Proposition~\ref{prop5.6} is thus completed.
\end{proof}

We note here that this result cannot be proved by directly reducing the dimension of $\mathbb{R}^d$, since an extremal norm of some sub-blocks of System (\ref{eq1.1}) does not need to be an extremal norm for the full dimensional case.

We ends this section with some remarks on Proposition~\ref{prop5.6}.

\begin{remark}
For System (\ref{eq1.1}) in the case of $2\le d\le 3$, if spectral finiteness property does not hold, then there
exists an extremal norm $\|\cdot\|_*$. According to \cite{TB, BT1, BT2, BT3, Koz07} and so on, many systems do not satisfy the spectral finiteness property. Conversely, the non-existence of an extremal norm implies that the finiteness property must hold.
\end{remark}

\begin{remark}
Based on \cite{BM, BTV, Koz07, HMST} we can easily see that Proposition~\ref{prop5.6} does not need to hold for the case $d\ge4$. In fact, there are uncountably many values of the real parameters $\alpha,
\beta$ such that for each pair $(\alpha,\beta)$,
$F=\{F_1,F_2\}$ is periodically stable, where
$$
F_1=\alpha\begin{pmatrix}1&1\\ 0&1\end{pmatrix}\quad \textrm{and} \quad
F_2=\beta\begin{pmatrix}1&0\\ 1&1\end{pmatrix};
$$
but there is at least one
switching law $\pmb{\sigma}\in\varSigma_{\bK}^+$ where $\bK=\{1,2\}$ such that
\begin{equation*}
\|F_{\pmb{\sigma}(n)}\cdots F_{\pmb{\sigma}(1)}\|\not\to 0\quad \textrm{as }n\to+\infty.
\end{equation*}
Define
$$
S_1=\begin{pmatrix}F_1&F_1\\ 0&F_1\end{pmatrix}\quad \textrm{and}\quad S_2=\begin{pmatrix}F_2&F_2\\ 0&F_2\end{pmatrix}.
$$
Then for any $\sigma\in\varSigma_{\bK}^+$ and any $n\ge1$, we have
$$
S_{\sigma(n)}\dotsm S_{\sigma(1)}=\begin{pmatrix}F_{\sigma(n)}\dotsm F_{\sigma(1)}&n F_{\sigma(n)}\dotsm F_{\sigma(1)}\\ 0&F_{\sigma(n)}\dotsm F_{\sigma(1)}\end{pmatrix}.
$$
For $\pmb{\sigma}$, we particularly get
$$\limsup_{n\to+\infty}\|S_{\pmb{\sigma}(n)}\dotsm S_{\pmb{\sigma}(1)}\|=+\infty$$
for any norm $\|\cdot\|$ on $\mathbb{R}^{4\times 4}$.
\end{remark}
%%%%%%%%%%%%%%%%%%%%%%%%%%%%%%%%%%%%%%%%%%%%%%%%%%%%%%%%%%%%%%%%%%%%%%%%%%
%%%%%%%%%%%%%%%%%%%%%%%%%%%%%%%%%%%%%%%%%%%%%%%%%%%%%%%%%%%%%%%%%%%%%%%%%%
\section{Concluding remarks}\label{sec6}%%%
In this paper, we have introduced the dynamical concept---fiber-chaotic switching laws---for a discrete-time linear inclusion dynamical system that is induced by finitely many nonsingular square matrices.

We have proven that if the inclusion system has a stable word and meanwhile an expanding word, then its fiber-chaotic switching laws form a residual subset of its all possible switching laws (Theorem~\ref{thm1.2}). Therefore in this case, the ``generic'' dynamical behavior of this inclusion system is unpredictable and the state trajectories are infinitely sharply oscillated. In the two-dimensional case, if the inclusion system has an irrational rotation subsystem, an asymptotically divergent trajectory and an asymptotically stable trajectory, then it exhibits generic fiber-chaotic characteristic (Theorem~\ref{thm4.7}).

From Theorem~\ref{thm5.2} we see that if System (\ref{eq1.1}) has a fiber-chaotic switching law, then there coexist positive, zero and negative Lyapunov exponents. In addition, we have shown that if the inclusion system has the joint spectral (co-) radius $1$, then it does not have any fiber-chaotic switching laws (Theorem~\ref{thm5.1}). So a periodically stable system is simple from the viewpoint of fiber chaos (Theorem~\ref{thm5.3}).

For now, we will close with some questions that we are interested in.

\subsection{The inverse system}%%%
Let $S_1,\dotsc,S_K$ all be nonsingular. Corresponding to System (\ref{eq1.1}), we consider its inverse system:
\begin{equation*}
y_n\in\{S_k^{-1}y_0\}_{k\in\bK},\quad y_0\in\mathbb{R}^d\textrm{ and }n\ge1.\eqno{(\ref{eq1.1})^*}
\end{equation*}
Since for any nonsingular square matrix $A$ it holds that $\|A^{-1}\|=\|A\|_{\mathrm{co}}^{-1}$ and $\|A^{-1}\|_{\mathrm{co}}=\|A\|^{-1}$, if System~(\ref{eq1.1}) satisfies the condition of Theorem~\ref{thm1.2} then System~$(\ref{eq1.1})^*$ is also generically fiber-chaotic. From Lemma~\ref{lem3.8} (more precisely the proof of Lemma~\ref{lem3.8}) and Theorem~\ref{thm5.2}, we can see that under the same hypotheses of Theorem~\ref{thm1.3}, System~$(\ref{eq1.1})^*$ is densely fiber-chaotic. Closely related to the above question, we now ask another question:

\begin{que}\label{q1}
If System~(\ref{eq1.1}) is generically/densely fiber-chaotic, is System~$(\ref{eq1.1})^*$ generically/densely fiber-chaotic?
\end{que}

If $d=2$ and System~(\ref{eq1.1}) is fiber-chaotic having an irrational rotation subsystem, then from Theorem~\ref{thm5.2} it follows that System~$(\ref{eq1.1})^*$ is fiber-chaotic. However, it is not a necessary condition for fiber chaos to have an irrational rotation subsystem, as was shown by Theorem~\ref{thm1.2}. So Question~\ref{q1} is nontrivial.

\subsection{Pointwise fiber-chaos}%%%
We say that System (\ref{eq1.1}) is \textit{pointwise fiber-chaotic}, provided that for any nonzero initial state $x_0\in\mathbb{R}^d$, there corresponds some switching law $\sigma=\sigma_{x_0}\in\varSigma_{\bK}^+$ such that
\begin{equation*}
\liminf_{n\to\infty}\|x_n(x_0,\sigma)\|=0\quad\textrm{and}\quad\limsup_{n\to\infty}\|x_n(x_0,\sigma)\|=\infty.
\end{equation*}
Then by a slight improvement of the proof of Theorem~\ref{thm5.2}.\,(3), we see that a pointwise fiber-chaotic System~(\ref{eq1.1}) has simultaneously positive and  negative Lyapunov exponents. 

We now conclude this paper with this question:
\begin{que}
If System (\ref{eq1.1}) is pointwise fiber-chaotic, does there exist at least one switching law $\sigma\in\varSigma_{\bK}^+$ which is fiber-chaotic in the sense of Definition~\ref{def1.1}?
\end{que}
%%%%%%%%%%%%%%%%%%%%%%%%%%%%%%%%%%%%%%%%%%%%%%%%%%%%%%%%%%%%%%%%%%%%%%%%%%

\end{document}